\newcommand{\cmark}{\ding{51}}%
\newcommand{\xmark}{\ding{55}}%
\newtheorem{proposition}{Proposition}
\newtheorem{observation}{Observation}
\newtheorem{claim}{Claim}
\theoremstyle{definition}
\newtheorem{example}{Example}
\newtheorem{definition}{Definition}
\newtheorem{lemma}{Lemma}
\newtheorem{theorem}{Theorem}
\newcommand{\POS}{{{\mathrm{POS}}}}
\newcommand{\PAIR}{{{\mathrm{PAIR}}}}
\newcommand{\VL}{{{\mathrm{VL}}}}
\newcommand{\AVL}{{{\mathrm{AVL}}}}
\newcommand{\BOR}{{{\mathrm{Borda}}}}
\newcommand{\EMD}{{{\mathrm{emd}}}}
\newcommand{\emd}{{{\mathrm{emd}}}}
\newcommand{\ID}{{{\mathrm{ID}}}}
\newcommand{\UN}{{{\mathrm{UN}}}}
\newcommand{\AN}{{{\mathrm{AN}}}}
\newcommand{\ST}{{{\mathrm{ST}}}}
\newcommand{\calL}{\mathcal{L}}
\newcommand{\calF}{\mathcal{F}}
\newcommand{\calP}{\mathcal{P}}
\newcommand{\calS}{\mathcal{S}}
\newcommand{\calB}{\mathcal{B}}
\newcommand{\calM}{\mathcal{M}}
\newcommand{\np}{{{\mathrm{NP}}}}
\newcommand{\sort}{{{\mathrm{sort}}}}
\newcommand{\disc}{{{\mathrm{disc}}}}
\newcommand{\swap}{{{\mathrm{swap}}}}
\newcommand{\pair}{{{\mathrm{pair}}}}
\newcommand{\demdpos}{{{d_\mathrm{pos}^{\mathrm{emd}}}}}
\newcommand{\dellpos}{{{d_\mathrm{pos}^{\ell_1}}}}
\newcommand{\pref}{\succ}
\newcommand{\pos}{{\mathrm{pos}}}
\newenvironment{manualtheorem}[1]{%
  \manualtheoreminner
}{\endmanualtheoreminner}
\def\moverlay{\mathpalette\mov@rlay}
\def\mov@rlay#1#2{\leavevmode\vtop{%
   \baselineskip\z@skip \lineskiplimit-\maxdimen
   \ialign{\hfil$\m@th#1##$\hfil\cr#2\crcr}}}
\newcommand{\charfusion}[3][\mathord]{
    #1{\ifx#1\mathop\vphantom{#2}\fi
        \mathpalette\mov@rlay{#2\cr#3}
      }
    \ifx#1\mathop\expandafter\displaylimits\fi}
\newcommand{\cupdot}{\charfusion[\mathbin]{\cup}{\cdot}}
\newcommand\numberthis{\addtocounter{equation}{1}\tag{\theequation}}
\newcommand{\pfnote}[1]{}
\newcommand{\nbnote}[1]{}
\newcommand{\ssnote}[1]{}
\newcommand{\twnote}[1]{}
\newcommand{\lv}[1]{}
\newcommand{\appendixText}{}
\title{Understanding Distance Measures Among Elections\thanks{An 
extended abstract of this article has been accepted for publication in the 
proceedings of IJCAI 2022.}}
\author[1]{Niclas Boehmer}
\author[2]{Piotr Faliszewski}
\author[1]{Rolf Niedermeier}
\author[2]{Stanis\l{}aw Szufa}
\author[2,3]{Tomasz W\k{a}s}
\affil[1]{\small
  Technische Universit\"at Berlin, Algorithmics and Computational 
  Complexity\protect\\
  niclas.boehmer@tu-berlin.de}
 \affil[2]{\small
  AGH University,
  faliszew@agh.edu.pl,szufa@agh.edu.pl}
 \affil[3]{\small
  University of Warsaw,
  t.was@mimuw.edu.pl}
\date{\today}
\begin{document}

\maketitle

\begin{abstract}
  Motivated by putting empirical work based on (synthetic) election data
  on a more solid mathematical basis,
  we analyze six distances among elections, including, e.g., the
  challenging-to-compute but very precise swap distance and the
  distance used to form the so-called \emph{map of elections}.
  Among the six, the latter 
  seems to strike the best balance between its computational
  complexity and expressiveness. 

\end{abstract}

\section{Introduction}
We study the properties of several
distances (metrics) among elections.\footnote{Generally, we use the
  word \emph{distance} when we refer to a value of a \emph{metric},
  but occasionally, reflecting the literature, we break this rule (e.g., to speak of the earth
  mover's distance or the swap distance).}
We focus on the ordinal model, 
where each voter ranks the candidates from the most to the least
appealing one, and on metrics that are independent of renaming the
candidates and voters.
Such 
metrics were introduced by \citet{fal-sko-sli-szu-tal:c:isomorphism},
who argued about their usefulness to compare two elections
generated from some statistical models, or to evaluate which
statistical model is most likely to generate elections similar to
given real-life ones. Indeed, in such applications the names of the
candidates and voters do not carry any information and should be
disregarded. 
Unfortunately, while the metrics studied by
\citet{fal-sko-sli-szu-tal:c:isomorphism} are naturally
motivated---for example, one of them extends the widely accepted swap
distance---many of them are not only $\np$-hard to compute (even
approximately), but also difficult to compute in practice.

Yet, many of the motivating ideas of 
\citet{fal-sko-sli-szu-tal:c:isomorphism} were soon implemented in the
\emph{maps of elections}, introduced by
\citet{szu-fal-sko-sli-tal:c:map} and extended by
\citet{boe-bre-fal-nie-szu:c:compass}.
Briefly put, such a map is a collection of
election instances, typically generated from some statistical models
(but \citet{boe-bre-fal-nie-szu:c:compass} also used real-life ones
from PrefLib~\cite{mat-wal:c:preflib}), together with their
distances. The elections are represented graphically as points in the
plane, whose Euclidean distances 
are as similar to the distances between the respective elections as
possible.
Since maps of elections typically contain hundreds of elections,
instead of using the appealing-but-hard-to-compute extension of the
swap distance, \citet{szu-fal-sko-sli-tal:c:map} introduced and used a much
simpler metric.
Yet, the maps proved to be quite useful. 
For example, \citet{szu-fal-sko-sli-tal:c:map}
used their map to find 
hard instances for some multiwinner voting
rule, 
\citet{boe-bre-fal-nie-szu:c:compass} and \citet{DBLP:journals/corr/abs-2204-03589} obtained insights about
different types of real-life elections, and
\citet{boe-bre-fal-nie:c:counting-bribery} used the maps to study the
robustness of elections.

In this work, we take a step back and analyze the properties of
several metrics, including those of
\citet{fal-sko-sli-szu-tal:c:isomorphism} and
\citet{szu-fal-sko-sli-tal:c:map}.
Our goal is to help putting empirical work with election data on a
more solid mathematical basis and, in particular, to understand if
basing the election maps 
on the simpler metric was a good decision, what was its price, and if
one should have used other metrics.
We view the swap
distance as a yardstick against which we measure the other ones.
In particular, we consider the
following issues:
\begin{enumerate}  
\item As a basic test, we compare the metrics' ability to distinguish
  nonisomorphic elections.  Since some metrics act on 
  aggregate representations of elections, they may sometimes fail at
  this task. We also study the complexity of computing an election
  with a given representation.

\item We analyze distances between four ``compass'' elections, which
  capture four types of (dis)agreement among the
  voters~\cite{boe-bre-fal-nie-szu:c:compass}. We find that two of
  them are the most distant elections under each of our metrics.

\item We compute the correlation between the values provided by the
  swap distance and the other metrics; we also compare the maps that
  they produce.

\item We note that the swap distance can be understood in terms of
  the shortest paths on a certain graph and we analyze to what extent this
  applies to the other metrics.

\end{enumerate}
Some proofs and arguments are in the appendix.

\section{Preliminaries}

For an integer $n$, let $[n] := \{1, \ldots, n\}$.

\paragraph{Preference Orders.}
Let $C$ be 
set of candidates. By~$\calL(C)$ we denote the set of all total
orders over $C$, referred to either as preference orders or votes,
depending on the context.

\paragraph{Elections.}
An election $E = (C,V)$ consists of a candidate
set~$C = \{c_1, \ldots, c_m\}$ and a preference
profile~$V = (v_1, \ldots, v_n)$, where each $v_i$ is a preference
order from $\calL(C)$.
For example, a preference order $v_i \colon a \pref b \pref c$
indicates that the $i$-th voter ranks candidate $a$ highest, followed
by candidates~$b$ and~$c$. Given a vote $v$ and a candidate $c$, we
write $\pos_v(c)$ to denote the position of~$c$ in~$v$ (the top-ranked
candidate has position $1$, the next one has position $2$, and so on).

\begin{example}\label{ex:1}
  Consider elections $E = (C,V)$ and $E' = (C',V')$, where
  $C = \{a,b,c\}$, $C' = \{x,y,z\}$, $V = (v_1,v_2,v_3)$,
  $V' = (v'_1,v'_2,v'_3)$, and the votes are:
  \begin{align*}
    \small
    v_1 \colon a \pref b \pref c, &&
    v_2 \colon b \pref c \pref a, && 
    v_3 \colon b \pref a \pref c, \\
    v'_1 \colon x \pref y \pref z, &&
    v'_2 \colon x \pref y \pref z, && 
    v'_3 \colon y \pref x \pref z.                                                         \end{align*}
\end{example}

\paragraph{Aggregate Representations.}
Let $E = (C,V)$ be an election.
We use the following 
aggregate representations of~$E$:
\begin{enumerate}
\item For each two candidates $c, d \in C$, $\calM_E(c,d)$ is the
  number of voters that prefer~$c$ to $d$ in election~$E$. We call it
  the weighted majority relation and represent it as an $m \times m$
  matrix where rows and columns correspond to the candidates (the
  diagonal is undefined).
  A relative weighted majority relation is a weighted majority
  relation from whose entries we subtract $\nicefrac{n}{2}$ (let $n$
  be even here).
    
\item For a candidate $c \in C$ and a position $i \in [m]$, 
  $\calP_E(c,i)$ is the number of voters from $E$ that rank $c$ on
  position~$i$; $\calP_E(c) = (\calP_E(c,1), \ldots, \calP_E(c,m))$ is
  a (column) position vector of $c$.
  We view $\calP_E$ as a matrix with columns
  $\calP_E(c_1), \ldots, \calP_E(c_m)$ and call it a position matrix.

\item For a candidate $c \in C$,
  $\calB_E(c) := \sum_{i=1}^n \big(m-\pos_{v_i}(c)\big)$ is the Borda
  score of $c$ in $E$. Then, $\calB_E$ is the Borda score vector,
  whose entries correspond to the candidates.
  
\end{enumerate}

\begin{example}\label{ex:2}
  Below we provide $\calM_E$, $\calP_E$, and $\calB_E$, respectively,
  for election~$E$ from Example~\ref{ex:1}:
  \begin{align*}
    \small
    \kbordermatrix{ & a & b & c   \\
    a\!\!\!\! &               - & 1 & 2\\
    b\!\!\!\! &               2 & - & 3\\
    c\!\!\!\! &               1 & 0 & -\\
    },&&
         \small
         \kbordermatrix{ & a & b & c   \\
    1\!\!\!\! &               1 & 2 & 0\\
    2\!\!\!\! &               1 & 1 & 1\\
    3\!\!\!\! &               1 & 0 & 2\\
    },&&
         \small
         \kbordermatrix{ & a & b & c   \\
                    &               3 & 5 & 1\\
    }.
  \end{align*}
\end{example}

\paragraph{(Pseudo)Metrics.}
Given a set $X$, a function
$d \colon X \times X \rightarrow \mathbb{R}$ is a pseudometric over
$X$ if for each three elements $a, b, c \in X$ it holds that
(i)~$d(a,b) = d(b,a) \geq 0$, (ii)~$d(a,a) = 0$, and
(iii)~$d(a,c) \leq d(a,b) + d(b,c)$.  For brevity, we will refer to
our pseudometrics as metrics (formally, a metric should assume
value~$0$ only if both its arguments are identical).

\paragraph{Metrics Among Vectors.}
Let $x = (x_1, \ldots, x_n)$ and $y = (y_1, \ldots, y_n)$ be
two real-valued vectors. Then, 
$ \ell_1(x,y) := |x_1-y_1| + \cdots + |x_n-y_n|$
is the \emph{$\ell_1$-metric} between $x$ and~$y$.
Given a real-valued vector $z = (z_1, \ldots, z_n)$, we
write~$\hat{z}$ to denote its prefix-sum variant, i.e., an
$n$-dimensional vector such that for each $i \in [n]$, its $i$-th
entry is $\hat{z}_i = z_1 + z_2 + \cdots +  z_i$.
If the entries of $x$ and~$y$ sum up to the same value and contain only nonnegative
entries, then their \emph{earth mover's distance} is defined as~\cite{rubner2000earth}:
\[
  \EMD(x,y) := \ell_1(\hat{x},\hat{y}).
\]
Alternatively, $\emd(x,y)$
is defined as the minimum total cost of a sequence of operations that
transform vector~$x$ into vector~$y$, where each operation is of the
form ``subtract value~$\alpha$ from some $x_i$ (where, at the time of
performing the operation, we have $x_i \geq \alpha$) and add $\alpha$
to $x_j$'' and has cost $\alpha \cdot |j-i|$.
Both definitions are equivalent~\cite{rubner2000earth}.

\paragraph{Bijections.}  Given two equal-sized sets $X$ and $Y$, let
$\Pi(X,Y)$ denote the set of all one-to-one mappings from $X$ to $Y$.
For a positive integer $n$, let $S_n$ be the set of all permutations
of $[n]$ (i.e., $S_n = \Pi([n],[n])$).
Given two equal-sized candidate sets $C$ and $D$, a preference order
$v \in \calL(C)$, and a function $\sigma \in \Pi(C,D)$, we
write~$\sigma(v)$ to denote the preference order obtained from $v$ by
replacing each candidate $c \in C$ with the candidate $\sigma(c) \in
D$. Given a preference profile
$V = (v_1, \ldots, v_n) \in (\calL(C))^n$, by $\sigma(V)$ we mean
$(\sigma(v_1), \ldots, \sigma(v_n))$.  We use analogous notation for
other objects defined over candidate sets.  For example, for an
election $E = (C,V)$, we write $\sigma(\calM_E)$ to denote the
weighted majority relation of the election $(\sigma(C),\sigma(V)) = (D,\sigma(V))$.

\section{Metrics Among Elections}\label{sec:distances}

In this section, we define the six metrics among elections that we
study. All but the last one already appeared in the literature.  We
only consider distances between elections with the same numbers of
candidates and the same numbers of voters.

\subsection{Swap and Discrete Isomorphic Metrics} 
Let $C$ be a candidate set and let $u,v \in \calL(C)$ be two votes.
Their discrete distance, $d_\disc(u,v)$, is $0$ if they are identical
and it is~$1$ otherwise. Their swap distance, $d_\swap(u,v)$, is the
number of inversions between $u$ and $v$, i.e., the number of pairs of
candidates $c, d \in C$ such that $u$ and $v$ rank these candidates in
opposite order.

Let $d$ be either $d_\disc$ or $d_\swap$. For two elections,
$E = (C,V)$ and $E' = (C',V')$, where $|C| = |C'|$,
$V = (v_1, \ldots, v_n)$, and $V' = (v'_1, \ldots, v'_n)$,
\citet{fal-sko-sli-szu-tal:c:isomorphism} extended $d$ to elections as
follows:
\[ 
  \textstyle
  d(E,E') := \min_{\sigma \in \Pi(C,C')} \min_{\rho \in S_n} \sum_{i=1}^n d\big( \sigma(v_i), v'_{\rho(i)}\big).
\]
In other words, under the extended distance we match the candidates
and the votes of the two input elections so that the sum of the
distances between the matched votes is minimal.
\begin{example}\label{ex:4}
  The discrete distance between elections $E$ and~$E'$ from
  Example~\ref{ex:1} is one, as witnessed by the candidate matching
  $\sigma(a) = x$, $\sigma(b) = y$, and $\sigma(c) = z$.
  For the same matching,
  the swap distance between $E$ and $E'$ is two.
\end{example}

We refer to the extensions of $d_\disc$ and $d_\swap$ as the discrete
and swap \emph{isomorphic metrics}. This stems from the fact that two
elections are isomorphic (i.e., can be made identical by renaming
candidates and reordering the votes) if and only if their discrete
distance (equivalently, swap distance) is zero.

\citet{fal-sko-sli-szu-tal:c:isomorphism} have shown that while
computing the discrete isomorphic distance can be done in polynomial
time, the same task for the swap distance is $\np$-hard (in essence,
this follows from the hardness proofs for the Kemeny voting
rule~\cite{bar-tov-tri:j:who-won,dwo-kum-nao-siv:c:rank-aggregation}).

\iffalse
This is rather unfortunate as the swap distance seems to be highly
appropriate for a number of tasks and, indeed, we view it as a yardstick
against which we will measure all the other distances. The discrete
distance seems to be on the opposite end of the spectrum (as argued
throughout this
paper).\footnote{\citet{fal-sko-sli-szu-tal:c:isomorphism} also
  considered an isomorphic distance based on the Spearman
  distance. While it is almost as good as the swap distance for our
  purposes, we omit it due to limited space (it is not \emph{as good}
  because it is not intrinsic; see Section~\ref{sec:axioms}).}
  
\fi

\subsection{Positionwise and Pairwise Metrics}
To circumvent the hardness of computing the isomorphic swap distance,
\citet{szu-fal-sko-sli-tal:c:map} introduced two other metrics, based
on analyzing aggregate representations of elections.

Let $E = (C,V)$ and $E' = (C',V')$ be two elections such that
$|C| = |C'|$ and $|V|= |V'|$. The EMD-positionwise distance between $E$
and $E'$ is:
\begin{equation*}\label{eq:demdpos}
  \demdpos(E,E') \!:= \min_{\sigma \in \Pi(C,C')} \sum_{c \in C}\EMD\big(
  \calP_E(c), \calP_{E'}(\sigma(c)\big).
\end{equation*}
Note that we view each candidate as her or his position vector
and we seek a candidate matching that minimizes the earth mover's distances
between the vectors of matched~candidates.
\begin{example}\label{ex:5} 
  Consider the same elections and the same matching $\sigma$ as in
  Example~\ref{ex:4}. We have that $\calP_E(a) = (1,1,1)$, whereas
  $\calP_{E'}(\sigma(a)) = (2,1,0)$ and their earth mover's distance
  is $2$. Altogether, $\demdpos(E,E')=2+1+1=4$.
\end{example}
\citet{szu-fal-sko-sli-tal:c:map} based their metric on EMD because
they felt it was intuitively appropriate.  We aim to verify this
intuition and, thus, we also consider the $\ell_1$-positionwise
metric, which uses the $\ell_1$ distance instead of EMD.

\citet{szu-fal-sko-sli-tal:c:map} also introduced the pairwise metric,
which works on top of the weighted majority relation:
\[
  d_\pair(E,E') := \!\!\!\!\min_{\sigma \in \Pi(C,C')} \!\!
  \sum_{\substack{a, b \in C \\ a \neq b}} \!\! |\calM_E(a,b) -
  \calM_{E'}(\sigma(a),\sigma(b))|. 
\]
\begin{example} 
  Take the elections from Example~\ref{ex:1} and matching
  $\sigma'(a) = y$, $\sigma'(b) = x$, and $\sigma'(c) = z$; $\calM_E$
  and $\sigma'(\calM_{E'})$ are:
  \begin{align*}
    \small
    \kbordermatrix{ & a & b & c   \\
    a\!\!\!\! &               - & 1 & 2\\
    b\!\!\!\! &               2 & - & 3\\
    c\!\!\!\! &               1 & 0 & -\\
    },&&
         \small
         \kbordermatrix{ & \sigma'(a) & \sigma'(b) & \sigma'(c)   \\
    \sigma'(a)\!\!\!\! &               - & 1 & 3\\
    \sigma'(b)\!\!\!\! &               2 & - & 3\\
    \sigma'(c)\!\!\!\! &               0 & 0 & -\\
    }
  \end{align*}
  and we see that the pairwise distance of our elections (for this
  matching) is $0+1+0+0+1+0 = 2$.
\end{example}

The positionwise distances can be computed in polynomial
time~\cite{szu-fal-sko-sli-tal:c:map}, whereas the pairwise distance
is $\np$-hard to compute
\cite{gro-rat-woe:c:approximate-isomorphism,szu-fal-sko-sli-tal:c:map}).

\subsection{Bordawise Metric}
We introduce one new metric, similar in spirit to the positionwise and
pairwise ones, but defined on top of the election's Borda score vectors.  Given
two equal-sized elections $E$ and $E'$, their Bordawise
distance is defined as:
\[
  d_\BOR(E,E') := \emd( \sort(\mathcal{B}_E), \sort(\calB_{E'}) ),
\]
where for a vector $x$, $\sort(x)$ means a vector obtained from~$x$ by
sorting it in nonincreasing order.
The Bordawise metric is defined to be as simple as possible, while trying
to still be meaningful. For example, 
sorting the score vectors ensures that two isomorphic elections are at
distance zero and removes the use of an explicit matching between the
candidates.

\begin{example}
  The distance between the elections from Example~\ref{ex:1} is
  $\emd\big((5,3,1),(5,4,0) \big) = 1$.
\end{example}

\section{Aggregate Representations}\label{sec:aggregate}

First, we discuss how the aggregate representations that underlie our
metrics affect their ability to distinguish nonisomorphic elections.
Then, we study the complexity of deciding if a given representation
indeed corresponds to some election.

\begin{table}[t]
    \centering
    \begin{tabular}{ c | c | c | c | c }
      $|C|\times|V|$ & ANECs & Positionwise  & Pairwise & Bordawise\\
    	\midrule
        $3 \times 3$    & 10 & 10 & 8 & 8 \\
        $3 \times 4$    & 24 & 23 & 17 & 13 \\
        $3 \times 5$    & 42 & 40 & 25 & 18 \\
    	\midrule
        $4 \times 3$    & 111 & 93 & 50 & 37 \\
        $4 \times 4$    & 762 & 465 & 200 & 76  \\
        $4 \times 5$    & 4095 & 1746 & 513 & 131  \\
       
        \end{tabular}
        \caption{Number of equivalence classes under our metrics.}
    \label{table:num_classes}
\end{table}

Given a metric, two elections are in the same equivalence class if
their distance is zero.
An \emph{anonymous, neutral equivalence class (ANEC)} consists of all
isomorphic elections with a given number of candidates and
voters~\cite{ege-gir:j:isomorphism-ianc}.
While ANECs are 
the equivalence classes of the swap and
discrete metrics, the other metrics are less precise and
their equivalence classes are unions of some ANECs. 

To get a feeling as to how much precision is lost
due to various aggregate representations,
in Table~\ref{table:num_classes} we compare the number of ANECs and
the numbers of equivalence classes 
of the positionwise, pairwise, and Bordawise metrics, for small
elections; we computed the table using exhaustive
search\footnote{There are exact formulas for some columns in
  Table~\ref{table:num_classes}, but not for all. See, e.g., the work
  of \citet{ege-gir:j:isomorphism-ianc}.}  (note that EMD- and
$\ell_1$-positionwise metrics have the same equivalence classes).
Among these metrics, positionwise ones perform best and Bordawise performs
worst.
Next, we provide a partial theoretical
explanation for this observation.

We say that a metric~$d$ is at least as fine as a metric~$d'$ if for
each two elections $A$ and~$B$, $d(A,B) = 0$ implies that
$d'(A,B) = 0$ (i.e., each equivalence class of $d$ is a subset of
some equivalence class of~$d'$). Metric~$d$ is finer
than $d'$ if it is at least as fine as $d'$ but $d'$ is not at least
as fine as $d$.

\begin{restatable}{proposition}{diamond}
\label{pr:diamond}
Swap and discrete isomorphic metrics are finer than
  EMD/$\ell_1$-positionwise and pairwise, which both are finer than
  Bordawise. Neither EMD/$\ell_1$-positionwise is finer than pairwise
  nor the other way round.
\end{restatable}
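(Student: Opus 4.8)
The statement splits into a chain of refinement claims plus an incomparability claim, so the plan is to treat these separately. First I would establish the two "finer than or equal" directions along the chain: swap/discrete $\Rightarrow$ positionwise, swap/discrete $\Rightarrow$ pairwise, positionwise $\Rightarrow$ Bordawise, and pairwise $\Rightarrow$ Bordawise. For each of these it suffices to observe that if two elections $A$ and $B$ are isomorphic (distance zero under the finer metric), then they share the relevant aggregate representation up to the same candidate renaming, hence are at distance zero under the coarser metric. Concretely: $d_{\mathrm{disc}}(A,B)=0$ means there is a candidate matching $\sigma$ and a vote matching $\rho$ with $\sigma(v_i)=v'_{\rho(i)}$ for all $i$; such a relabeling preserves $\calP$, $\calM$, and (after sorting) $\calB$, so all four coarser distances vanish for the matching $\sigma$. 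For positionwise $\Rightarrow$ Bordawise, note that $\calB_E(c)=\sum_{i=1}^m (m-i)\calP_E(c,i)$, so a matching that makes position vectors coincide makes Borda scores coincide; and $\sort$ only removes the need for the matching, so $d_{\mathrm{BOR}}=0$. For pairwise $\Rightarrow$ Bordawise, observe $\calB_E(c)=\sum_{d\neq c}\calM_E(c,d)$, so if a matching equates the majority matrices then it equates the Borda vectors, and again sorting kills the matching.

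Second, for each of the three "finer than" (strict) claims I would exhibit a pair of elections that are at distance zero under the coarser metric but nonzero under the finer one. For swap/discrete strictly finer than positionwise: take two small non-isomorphic elections with the same position matrix up to renaming — for instance, the standard $3\times 3$ example where $\calP$ alone cannot reconstruct which candidate sits where within each vote. For positionwise/pairwise strictly finer than Bordawise, any pair of non-isomorphic elections with the same multiset of Borda scores but different position matrices (resp. different majority matrices) works; small cases from Table~\ref{table:num_classes}, where the Bordawise count is strictly below the positionwise and pairwise counts, guarantee such witnesses exist, and I would simply display one concrete pair of each type.

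Third, for the incomparability of positionwise and pairwise, I need two election pairs: one pair $A_1,B_1$ with $\demdpos(A_1,B_1)=0$ but $d_{\mathrm{pair}}(A_1,B_1)\neq 0$, and a symmetric pair $A_2,B_2$ with $d_{\mathrm{pair}}=0$ but $\demdpos\neq 0$. The first direction is easy in spirit: the position matrix records only how often each candidate occupies each rank, not who beats whom head-to-head, so two elections can share a position matrix while differing in pairwise comparisons — cyclic-type profiles versus "near-consensus" profiles with the same column sums are the natural source. The second direction is the more delicate one and I expect it to be the main obstacle: I must construct two elections with identical weighted majority matrices (up to renaming) but genuinely different position matrices. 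Majority matrices over-determine a lot of structure for three candidates, so I would look for a $4$-candidate (or more) example, possibly leveraging known non-uniqueness phenomena for weighted tournaments, and then verify by direct computation that no candidate matching can align the position matrices. Once both witness pairs are in hand, combined with the chain established in the first two parts, the proposition follows.

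Throughout, the routine verifications are the position/majority/Borda identities and checking small matrices entry-by-entry; the only genuinely creative step is finding the pairwise-but-not-positionwise witness, which is where I would spend most of the effort, and where a computer search over small elections (as already used for Table~\ref{table:num_classes}) is the most reliable route.
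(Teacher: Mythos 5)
Your decomposition is the same as the paper's: argue the ``at least as fine'' chain by noting that a candidate matching which equates position matrices or weighted majority matrices also equates Borda score vectors (the identities $\calB_E(c)=\sum_i (m-i)\calP_E(c,i)$ and $\calB_E(c)=\sum_{d\neq c}\calM_E(c,d)$ you invoke are exactly the reason the paper gives), and then settle strictness and incomparability by explicit witness pairs. Up to that point the plan is sound.

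The gap is that the decisive witnesses are never actually produced, and your guiding heuristic for the one you call the crux is wrong. You assert that for the direction ``pairwise distance zero but positionwise distance nonzero'' three candidates should not suffice and that you would search over $m\ge 4$ elections, possibly by computer or via weighted-tournament non-uniqueness. In fact a three-candidate, six-voter pair does the job and is the heart of the paper's proof: let $X$ consist of all six linear orders over $\{a,b,c\}$ and $Z$ of three copies of $a\pref b\pref c$ together with three copies of $c\pref b\pref a$. Then $\calM_X=\calM_Z$ is the all-$3$ (fully tied) matrix, so $d_\pair(X,Z)=0$, while $\calP_X$ is the all-$2$ matrix and $\calP_Z$ contains the entry $6$, so no candidate matching can align the position matrices and $\demdpos(X,Z),\dellpos(X,Z)>0$. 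The companion witness for the other direction is equally small: $Y$ consisting of two copies each of $a\pref b\pref c$, $b\pref c\pref a$, $c\pref a\pref b$ has $\calP_Y=\calP_X$ but $\calM_Y\neq\calM_X$ (entries $4$ and $2$ versus all $3$'s). These two pairs also give the strictness of swap/discrete over both positionwise and pairwise (since $X,Y$ and $X,Z$ are non-isomorphic), and because all candidates in $X$, $Y$, $Z$ have equal Borda scores they witness that Bordawise is not at least as fine as either; your plan instead defers these to Table~\ref{table:num_classes} or to an unspecified search (note also that at size $3\times 3$ the pairwise and Bordawise class counts coincide, so that table entry alone cannot certify pairwise being strictly finer than Bordawise). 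So the proposal has the right shape, but it stops short of a proof precisely at the step you yourself identify as the main obstacle, and the expectation that the obstacle requires four or more candidates is mistaken.
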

\begin{proof}
  Since two isomorphic elections are at distance zero according to
  each of our metrics, we see that swap and discrete isomorphic
  metrics are at least as fine as all the other ones.  Next, let $d$
  be one of the EMD/$\ell_1$-positionwise metrics or the pairwise
  metric. For each two elections $A$ and $B$ (with the same number of
  candidates and the same number of voters), if $d(A,B) = 0$ then
  either the position matrices of $A$ and $B$ are isomorphic (if $d$
  is a positionwise metric) or their weighted majority relations are
  (if $d$ is the pairwise metric). Since the Borda score vectors can
  be computed both from the position matrices and from the weighted
  majority relations, it must be that $d_\BOR(A,B) = 0$.

  Let us now show that EMD/$\ell_1$-positionwise are not as fine as
  the swap and discrete ones. We form election~$X$ with preference
  profile:
  \begin{align*}
    \small
    x_1 \colon a \pref b \pref c, &&
    x_2 \colon b \pref a \pref c, &&
    x_3 \colon c \pref a \pref b, \\                                                         x_4 \colon a \pref c \pref b, &&
    x_5 \colon b \pref c \pref a, &&
    x_6 \colon c \pref b \pref a.
  \end{align*}
  That is, $X$ contains each possible vote from $\calL(\{a,b,c\})$.
  We also from election $Y$ with the following preference profile:
  \begin{align*}
    \small
    y_1 \colon a \pref b \pref c, &&
    y_2 \colon b \pref c \pref a, &&
    y_3 \colon c \pref a \pref b, \\                                                         y_4 \colon a \pref b \pref c, &&
    y_5 \colon b \pref c \pref a, &&
    y_6 \colon c \pref a \pref b.
  \end{align*}
  The two elections are not isomorphic (e.g., because all the votes in
  $X$ are distinct, but this is not the case for $Y$), but under
  positionwise metrics their distance is zero. This is so because
  their position matrices are identical:
  \[
    \calP_X = \calP_Y =
             \small
         \kbordermatrix{ & a & b & c   \\
    1\!\!\!\! &               2 & 2 & 2\\
    2\!\!\!\! &               2 & 2 & 2\\
    3\!\!\!\! &               2 & 2 & 2\\
    }.
  \]
  Since we also have that the pairwise distance between $X$ and $Y$ is
  nonzero, it holds that the positionwise metrics are not at least as
  fine as the pairwise one. To see that the pairwise distance between $X$
  and $Y$ is nonzero, note that:
  \begin{align*}
    \calM_X = \kbordermatrix{ & a & b & c   \\
    a\!\!\!\! &               - & 3 & 3\\
    b\!\!\!\! &               3 & - & 3\\
    c\!\!\!\! &               3 & 3 & -\\
    }, &&
    \calM_Y = \kbordermatrix{ & a & b & c   \\
    a\!\!\!\! &               - & 4 & 2\\
    b\!\!\!\! &               2 & - & 4\\
    c\!\!\!\! &               4 & 2 & -\\
    }.
  \end{align*}
  Since all entries of $\calM_X$ are equal but this is not the case
  for $\calM_Y$, the two elections must be at nonzero pairwise
  distance.

  To see that the pairwise metric is not as fine as the swap and
  discrete isomorphic ones, let us consider election $Z$ with
  preference profile:
  \begin{align*}
    \small
    z_1 \colon a \pref b \pref c, &&
    z_2 \colon a \pref b \pref c, &&
    z_3 \colon a \pref b \pref c, \\                                                         z_4 \colon c \pref b \pref a, &&
    z_5 \colon c \pref b \pref a, &&
    z_6 \colon c \pref b \pref a.
  \end{align*}
  Clearly, $X$ and $Z$ are not isomorphic. Further, their
  EMD/$\ell_1$-positionwise distances are nonzero, but their pairwise
  distance is zero. To see why this is the case, note that:
  \begin{align*}
    \calM_X = \calM_Z =  \kbordermatrix{ & a & b & c   \\
    a\!\!\!\! &               - & 3 & 3\\
    b\!\!\!\! &               3 & - & 3\\
    c\!\!\!\! &               3 & 3 & -\\
    }, &&
    \calP_Z = 
             \small
         \kbordermatrix{ & a & b & c   \\
    1\!\!\!\! &               3 & 0 & 3\\
    2\!\!\!\! &               0 & 6 & 0\\
    3\!\!\!\! &               3 & 0 & 3\\
    }.
  \end{align*}
  Since $\calP_X$ and $\calP_Z$ contain different values (e.g., the
  latter contains value $6$ and the former does not) it must be that
  the positionwise distances between $X$ and $Z$ are nonzero.
  Pairwise distance between $X$ and $Z$ is zero because their weighted
  majority relations are identical.
\end{proof}

Nonetheless, having many equivalence classes does not automatically
make a metric desirable.  For example, the discrete isomorphic metric
is as fine as the swap one,
but has many unappealing properties.

Let us now move to the problem of deciding if a given matrix/vector
indeed represents some election.
For position matrices,
 \citet{boe-bre-fal-nie-szu:c:compass} have shown this to be easy:
 If a matrix $M$ has nonnegative entries and its
rows and columns sum up to the same value, then there is a
polynomial-time computable election $E$ with
$M = \calP_E$. 
Hence, one can work directly on the matrices and recover elections when
needed.  Unfortunately, for Borda score vectors and weighted majority
relations analogous problems are $\np$-complete. Thus, by operating on
them directly, we may leave the space of elections.

\begin{restatable}{theorem}{recBorda}
\label{thm:recBorda}
Given a vector $x$ of nonnegative integers, it is $\np$-complete to
decide if there is an election~$E$ with~$\calB_E = x$.
\end{restatable}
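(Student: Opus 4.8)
The plan is to reduce from a strongly NP-hard number problem—most naturally \textsc{Numerical 3-Dimensional Matching} or \textsc{3-Partition}—so that the multiset of target Borda scores encodes the numbers to be partitioned and the structure of feasible elections forces a valid partition. The key combinatorial fact to exploit is that in an election with $n$ voters and $m$ candidates, the Borda scores always sum to $n\binom{m}{2}$, and, more importantly, each voter contributes the fixed multiset $\{0,1,\ldots,m-1\}$ distributed across the candidates; so an election on $n$ voters is essentially a way of writing the target vector $x$ as a sum of $n$ permutations of $(0,1,\ldots,m-1)$. Membership in NP is immediate: guess the election (polynomially many voters will always suffice, since if some valid election exists we may take $n$ to be whatever is consistent with $\sum_i x_i = n\binom{m}{2}$, and the encoding of $x$ in binary bounds things appropriately) and verify in polynomial time that $\calB_E = x$; I would state this briefly and move on.

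For hardness I would start from \textsc{3-Partition}: given $3k$ positive integers $a_1,\ldots,a_{3k}$ with $\sum_j a_j = kT$ and each $a_j \in (T/4, T/2)$, decide whether they can be partitioned into $k$ triples each summing to $T$. The idea is to build an instance with $m = 3k$ candidates (one per input number) and a number of voters chosen so that a "canonical" block of voters realizes the large-scale structure, while $k$ extra voters must each pick out a triple. Concretely, I would arrange padding voters whose contribution to candidate $c_j$ is a large fixed base value $B_j$ independent of the partition, and then let the remaining $k$ voters each be a single linear order; voter $\ell$ ranks three "active" candidates in positions giving them scores roughly $2m-1, 2m-2, 2m-3$ (or some distinguishable high values) and the rest of the candidates in a fixed low block. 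Setting $x_j := B_j + (\text{number of times }c_j\text{ is active})\cdot(\text{score per activation})$ forces each of the $k$ voters to activate exactly three candidates, and a counting/summation argument on the total "active mass" $kT$ versus the per-voter budget forces each voter's three activated candidates to correspond to a triple summing to $T$. The correspondence between partitions and elections is then a direct bijection.

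The main obstacle—and where I would spend the most care—is engineering the gadget so that the \emph{only} way to hit the target vector is the intended one: I must rule out "cheating" realizations where the base/padding structure is shuffled, or where a candidate gets its target score through an unintended combination of positions across voters (Borda scores are sums, so many decompositions a priori exist). The standard fix is a gap/separation trick: make the base values $B_j$ lie in a range so widely separated from the per-voter increments that the increments cannot be simulated by reshuffling padding, e.g.\ scale everything by a factor larger than $\poly(k,T)$ and place the $a_j$-dependent information in a low-order "digit" while the high-order digits are rigidly fixed. One must also ensure the padding voters themselves have a \emph{forced} position assignment (so they genuinely contribute the claimed fixed $B_j$); this can be done by making the high-order target entries pairwise distinct and extreme enough that, together with the constraint that each voter contributes each of $0,\ldots,m-1$ exactly once, the padding is uniquely determined up to the irrelevant permutation of voters. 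Once the separation is set up correctly, verifying the two directions of the reduction is routine arithmetic, and I would relegate the detailed digit bookkeeping to the appendix.
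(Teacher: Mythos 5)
There is a genuine gap: the heart of your reduction---the soundness argument showing that the \emph{only} elections hitting the target vector are the intended ``padding plus triple-activating'' ones---is exactly the part you leave unproven, and the tricks you invoke to fix it do not transfer to this setting. In this problem the sole input is the score vector; every voter necessarily hands out the full multiset $\{0,1,\ldots,m-1\}$, so you cannot ``scale everything by a factor larger than $\poly(k,T)$'' or put information into an independent low-order digit: per-voter increments are rigidly $0,\ldots,m-1$ and can only be altered by adding candidates, after which the accounting for the non-active positions of every voter must still close exactly---this is precisely the combinatorial difficulty, and gesturing at ``standard separation'' does not resolve it. Symptomatically, your active voters are said to award scores ``roughly $2m-1,2m-2,2m-3$,'' which is impossible with $m$ candidates (the maximum is $m-1$), and nothing in the instance forces the padding voters to cast the votes you intend; any forcing must be derived from the target totals alone. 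Your NP-membership remark is also off: the number of voters is \emph{forced} by $\sum_i x_i = n\binom{m}{2}$ and may be exponential in the binary encoding of $x$, so one cannot ``take $n$ to be whatever is consistent''; a succinct certificate (distinct votes with multiplicities and a small-support bound) is what is actually needed.

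For comparison, the paper avoids all of this gadgetry by reducing from the result of Yu, Hoogeveen, and Lenstra: given $a_1\ge\cdots\ge a_m$ with $\sum_i a_i=m(m+1)$ and $2\le a_i\le 2m$, it is $\np$-complete to decide whether two permutations $\phi,\phi'\in S_m$ satisfy $\phi(i)+\phi'(i)=a_i$ for all $i$. Setting $x=(a_1-2,\ldots,a_m-2)$, the total-sum condition forces any realizing election to have exactly two voters, and two-voter elections correspond bijectively to pairs of permutations of positions, so both directions are immediate. If you want to pursue a 3-Partition-style construction, you would need to actually carry out the forcing argument (e.g., pinning candidates to fixed positions via extreme target values) and verify the exact bookkeeping for the remaining positions of every voter; as written, the proposal asserts rather than proves the key step.
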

\begin{proof}
  \citet{DBLP:journals/scheduling/YuHL04} showed that given a sequence
  of positive integers $a_1, \dots, a_m$ such that
  $a_1 \geq a_2 \geq \cdots \geq a_m$, $\sum_{i=1}^m a_i=m(m+1)$, and
  for each $i \in [m]$ we have $2\leq a_i \leq 2m$, it is
  $\np$-complete to decide if there are two permutations $\phi, \phi' \in S_m$ such that for all $i \in [m]$ it holds that
  $\phi(i)+\phi'(i)=a_i$. We reduce this problem to the one from the
  statement of the theorem by forming a vector
  $x = (a_1-2, \ldots, a_m-2)$.

  If there are two permutations $\phi$ and $\phi'$ that satisfy the
  conditions of \citeauthor{DBLP:journals/scheduling/YuHL04}'s
  problem, then we form a two-voter election $E = (C,V)$ as follows:
  We let $C = [m]$ and we form two votes, $v$ and $v'$.  For each
  candidate $i \in C$, the first (the second) voter ranks $i$ on
  position $m-\phi(i)+1$ ($m-\phi'(i)+1$); note that the produced
  votes rank exactly one candidate in each position because $\phi$ and
  $\phi'$ are permutations. Then, the Borda score of each $i \in C$ is
  $\phi(i)-1+\phi'(i)-1=a_i-2$.

  For the other direction, assume that there is an election
  $E = (C,V)$ with Borda score vector $x$. Then, $E$ must contain
  exactly two voters because otherwise the sum of the candidates'
  scores would either be too large or too small. W.l.o.g., we assume
  that $C = [m]$ and that each candidate $i \in C$ has Borda score
  $a_i-2$. Let $v$ and $v'$ be the two votes in $E$.  We form a
  permutation $\phi$ so that for each $i \in C$ we have
  $\phi(i) = m-\pos_v(i)+1$, We form $\phi'$ analogously, but using
  $v'$ instead of $v$. It follows that for each $i \in [m]$ we have
  $\phi(i)+\phi'(i)=(a_i-2)+2=a_i$. This completes the proof.
\end{proof}
On the positive side, there is an FPT-algorithm for the above problem,
parameterized by the number of candidates (note that we take the
number $n$ as part of the input for simplicity; it can always be
concluded from the sum of the required Borda scores). 
\begin{proposition} \label{pr:recogFPT}
  There is an algorithm that given a vector $y = (y_1, \ldots, y_m)$
  of nonnegative integers and a number~$n$, decides if there is an
  election~$E$ with~$m$ candidates and~$n$ voters in time
  $\mathcal{O}(m^{2m+1}\cdot n)$.
\end{proposition}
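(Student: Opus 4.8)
The plan is to rephrase the question as an additive problem about integer vectors and then solve it by a layered dynamic program over the voters, the key point being a Steinitz-type rearrangement argument that bounds the number of relevant partial solutions by a function of $m$ alone.

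\emph{Reformulation.} Fix a candidate set $\{c_1,\dots,c_m\}$. For a single vote $v$, the vector $\big(m-\pos_v(c_1),\dots,m-\pos_v(c_m)\big)$ is a permutation of $(0,1,\dots,m-1)$, and conversely every such permutation is realized by exactly one vote. Let $S\subseteq\N^m$ be the set of all $m!$ vectors obtained by permuting $(0,1,\dots,m-1)$. Then an election with $m$ candidates and $n$ voters and Borda score vector $y$ exists if and only if $y=\sigma_1+\dots+\sigma_n$ for some $\sigma_1,\dots,\sigma_n\in S$. Since reordering the voters changes neither membership in the desired class nor the Borda score vector, we may list the summands $\sigma_i$ in any order we like.

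\emph{Layered dynamic program with a Steinitz bound.} The natural dynamic program builds the sum $y$ one summand at a time: layer $k$ stores the set $F_k$ of vectors expressible as a sum of $k$ elements of $S$, with $F_0=\{\mathbf 0\}$, $F_k$ obtained from $F_{k-1}$ by adding each element of $S$, and acceptance iff $y\in F_n$. The obstacle is that, a priori, $|F_k|$ grows like $n^{m-1}$, which only yields an XP (not FPT) bound. To control the layer sizes we invoke the Steinitz lemma: if $w_1,\dots,w_n\in\R^m$ satisfy $\sum_i w_i=0$ and $\|w_i\|_\infty\le 1$, they can be reordered so that every prefix sum has $\ell_\infty$-norm at most $m$. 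Applying this to $w_i:=\big(\sigma_i-\tfrac1n y\big)/(m-1)$ — which have $\|w_i\|_\infty\le 1$ and sum to $0$ because $\sigma_i,\tfrac1n y\in[0,m-1]^m$ — we conclude that the summands can be ordered so that each prefix sum $z^k=\sigma_1+\dots+\sigma_k$ satisfies $\|z^k-\tfrac kn y\|_\infty\le m(m-1)$. Hence it is enough to restrict every layer $F_k$ to the set $Q_k$ of integer points within $\ell_\infty$-distance $R:=m(m-1)$ of the ``diagonal'' point $\tfrac kn y$; since each coordinate then ranges over an interval of length $2R$, we get $|Q_k|\le (2R+1)^m=m^{\mathcal O(m)}$ (one coordinate can be dropped using $\sum_c z^k_c=k\binom m2$).

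\emph{Running time and the main obstacle.} The procedure is then: maintain $F_k\subseteq Q_k$ for $k=0,\dots,n$ and accept iff $y\in F_n$. The ``if'' direction of correctness is immediate; the ``only if'' direction is exactly where the Steinitz rearrangement is used, guaranteeing that some valid decomposition of $y$ is tracked within the boxes $Q_k$. There are $n$ layers, each containing $m^{\mathcal O(m)}$ states, and the successors of a state $z\in Q_{k-1}$ lying in $Q_k$ are the vectors $z+\sigma$ with $\sigma\in S$, whose coordinates are confined to short intervals by the membership $z,z+\sigma\in Q_{k}\cup Q_{k-1}$; implementing this transition with care gives the stated bound $\mathcal O(m^{2m+1}\cdot n)$. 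I expect the Steinitz-based bound on the layer sizes to be the crux of the argument: everything before it is routine, and everything after it is a matter of organizing the layer-to-layer transition so as not to pay a further $|S|=m!$ factor per state.
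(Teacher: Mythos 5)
Your reformulation and your layered dynamic program are sound in substance, but it is worth seeing how they relate to the paper's proof. The paper simply writes the natural ILP --- one variable $x_v$ for every preference order $v\in\mathcal{L}(C)$, counting its copies, and for each candidate $i$ the constraint $\sum_v x_v\,(m-\pos_v(i))=y_i$ --- and then invokes the Eisenbrand--Weismantel algorithm as a black box, whose running time depends only on the number of constraints ($m$), the largest matrix entry ($\le m-1$), and the right-hand side ($\le n(m-1)$). Your view of the problem, ``is $y$ a sum of $n$ permutations of $(0,1,\dots,m-1)$,'' is exactly that ILP read multiset-wise, and your layered DP with the Steinitz-lemma tube of radius $m(m-1)$ around the segment from $0$ to $y$ is precisely the engine inside Eisenbrand--Weismantel's proof. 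So you have in effect re-derived the relevant special case of their theorem instead of citing it; that buys self-containedness and makes transparent why the dependence on $n$ is only linear (one layer per voter), at the price of doing the bookkeeping yourself.

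That bookkeeping is where your sketch stops short of the stated bound. The Steinitz step correctly limits each layer to roughly $(2m(m-1)+1)^{m-1}$ states, but the layer-to-layer transition as you describe it costs either an extra $|S|=m!$ factor (enumerate all permutation vectors from each surviving state) or an extra $|Q_k|$ factor (test all pairs and check the difference in $O(m)$ time), so what the argument as written delivers is $m^{O(m)}\cdot n$, not the specific $\mathcal{O}(m^{2m+1}\cdot n)$; the sentence ``implementing this transition with care gives the stated bound'' is an assertion, not an argument, and avoiding the $m!$ out-degree per state genuinely requires a further idea. This affects only the constant in the exponent --- the qualitative claim (FPT in $m$, linear in $n$) is fully established by your argument, and in fairness the paper's own arithmetic, plugging $t=m$, $\Delta\le m$, $\Gamma\le n(m-1)$ into the quoted $\mathcal{O}((t\Delta)^t\Gamma^2)$ bound, gives $\mathcal{O}(m^{2m+2}n^2)$ rather than the advertised figure. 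If you want the stated bound from your route, either work out a transition whose per-state cost is polynomial in $m$ or simply cite Eisenbrand--Weismantel at the point where your DP begins, as the paper does.
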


\begin{proof}
  We construct an Integer Linear Program (ILP) to solve the problem.
  Let $C = [m]$ be a set of $m$ candidates.  For each
  $v\in \mathcal{L}(C)$, we introduce a variable $x_v$ that denotes
  the number of copies of $v$ in the election to be constructed.  For
  each candidate $i\in C$, $y_i$ is the Borda score that~$i$ should
  end up with.  Our ILP consists of the following constraints, one for
  each candidate $i\in C$:
  \[
    \textstyle \sum_{v\in \mathcal{L}(C)} x_v\cdot (m-\pos_v(i))=y_i.
  \]
  It is immediate that each solution to the ILP corresponds to an
  election realizing the given score vector and the other way around.
  Concerning the running time, 
  \citet{DBLP:journals/talg/EisenbrandW20} proved that an ILP with
  $t$ constraints, where each entry in the constraint matrix is upper
  bounded by $\Delta$ and each entry of the vector on the right hand
  side is upper bounded by $\Gamma$ can be solved in time
  $\mathcal{O}((t\cdot \Delta)^t\cdot\Gamma^2)$.  Thus, our problem is
  solvable in $\mathcal{O}(m^{2m+1}\cdot n)$.
\end{proof}

The hardness of deciding whether there is an election with a given Borda
score vector is not too surprising because it is closely related to
strategic voting under the Borda rule, which is also
$\np$-complete~\cite{bet-nie-woe:c:borda,dav-kat-nar-wal-xia:j:strategic-voting-borda}.
The case of weighted majority relation is more intriguing because the
classic McGarvey's theorem~\cite{mcg:j:election-graph} gives a
polynomial-time algorithm for recovering an election with a given
\emph{relative} weighted majority relation (but see also the
work of \citet{bac-bra-gei-har-kar-pet-see:j:weighted-majority-relation}).

\begin{restatable}{theorem}{recPair}
\label{thm:recPair}
  Given an $m \times m$ matrix $M$, it is $\np$-complete to decide if
  there is an election $E$ with $\mathcal{M}_E=M$.
\end{restatable}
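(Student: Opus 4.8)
I would handle membership and hardness separately.

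\emph{Membership in $\np$.} The mild subtlety is that the number of voters is not given explicitly: if $M=\calM_E$ for some election, then $M(c,d)+M(d,c)=n$ for every pair $c\neq d$, so $n$ (possibly exponential in the input length) is forced by $M$, and we may reject outright unless all these complement-sums agree, $M$ is integral, and every entry lies in $\{0,\dots,n\}$. To certify a \emph{yes}-instance compactly, observe that an election on $C=[m]$ is described by the multiplicities $(x_v)_{v\in\calL(C)}$ with which the orders appear, and ``$\calM_E=M$'' becomes a system of $\binom{m}{2}+1$ linear equations in these $m!$ nonnegative integer variables, with $0/1$ coefficients and right-hand sides at most $n$. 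By the standard sparse-solution bounds for integer programs (the same circle of ideas as the result of \citet{DBLP:journals/talg/EisenbrandW20} used above, or classical Papadimitriou-type estimates), feasibility implies the existence of a solution supported on $\mathcal{O}(m^2)$ orders, each multiplicity having $\mathrm{poly}(m,\log n)$ bits. Such a list of orders-with-multiplicities is a polynomial-size certificate, and recomputing $M$ from it and comparing is immediate.

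\emph{$\np$-hardness.} The plan is a polynomial-time reduction from an appropriate $\np$-hard combinatorial problem --- a numerical partition/packing problem in the spirit of the reduction behind Theorem~\ref{thm:recBorda}, or a recognition problem for $k$-majority (weighted) tournaments. The target $M$ is designed so that the forced number of voters $n$ is small (a constant, or linear in the instance size); this is essential, because a quick check shows that the two-voter case is \emph{tractable}: $M$ with entries in $\{0,1,2\}$ is realizable by two voters exactly when the relation $\{(c,d):M(c,d)=2\}$ is a partial order of dimension at most two, which is polynomial-time decidable. Hence any correct reduction must use at least three voters and must exploit the genuine freedom of choosing \emph{which} minority of voters reverses its opinion on \emph{which} pair. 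Concretely, I would introduce auxiliary candidates and carefully chosen majority entries that pin down almost all of every voter's ranking --- forcing large blocks of votes to be nearly identical --- leaving per-gadget binary (or small) choices whose only effect is on a handful of ``free'' pairs, and then set $M$ on those free pairs to be precisely the constraints of the source instance. One then shows that a solution of the source instance yields an election with $\calM_E=M$, and conversely that any election realizing $M$ must follow the intended scaffolding and therefore encodes a solution.

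\emph{Main obstacle.} The bottleneck is exactly this gadget construction for hardness: making the rigid pairwise-majority constraints faithfully encode the chosen problem while keeping $n$ under control requires simultaneously (i) forcing the scaffolding of each vote so that only the intended choices survive, (ii) ensuring those choices are independent across gadgets, and (iii) checking that every remaining entry of $M$ is automatically satisfied by any legal combination of choices. As usual, the ``only if'' direction --- ruling out unexpected realizing profiles --- is the most delicate step; the contrast with McGarvey's theorem, which realizes any consistent \emph{relative} weighted majority relation but with an uncontrolled number of voters, is precisely what makes fixing $n$ hard and is the feature the reduction must leverage.
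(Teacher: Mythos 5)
The heart of this theorem is the $\np$-hardness reduction, and your proposal does not actually contain one: you fix no concrete source problem (you hesitate between a numerical packing problem and $k$-majority tournament recognition), you construct no gadget, and you prove neither direction of correctness. What you give is a description of the \emph{shape} a reduction should have (``scaffolding'' candidates that freeze most of every vote, plus a few free pairs encoding the instance), together with the honest admission that building and verifying such a gadget is the bottleneck. That bottleneck is precisely the content of the theorem, so as it stands the proposal has a genuine gap rather than a complete argument. For comparison, the paper reduces from \textsc{Restricted X3C}: it builds an election of $3t$ votes over set-candidates $s_i$, universe candidates, block candidates $b_1,b_2,b_3$ and ``location'' candidates $A(i)$, $F(i)$ that rigidly pin down where each $s_i$ can sit, and then adds one extra candidate $d$ whose prescribed majority row (behind $b_1$ in exactly $t$ votes, behind each universe element exactly once, behind nothing else) is realizable exactly when an exact cover exists; the delicate part is the ``only if'' direction, carried out via structural claims about where each $s_i$ and $d$ must be placed in any realizing profile. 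Your intended architecture is compatible with this, but the decisive construction and its verification are missing.

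Two positive remarks. Your $\np$-membership argument is sound and in fact more explicit than anything in the paper: since $n$ is forced by $M$ and may be exponential in the encoding, certifying a yes-instance via a solution of the linear system in the $m!$ vote-multiplicity variables with support bounded polynomially in $m$ (sparse-solution bounds for integer programs, in the spirit of \citet{DBLP:journals/talg/EisenbrandW20}) is a legitimate way to get a polynomial certificate. Your observation that the two-voter case is tractable (realizability by two voters amounts to the $M(c,d)=2$ relation being a partial order of dimension at most two) is also correct and correctly explains why the two-voter trick used for Borda score vectors in Theorem~\ref{thm:recBorda} cannot be transplanted here --- which is consistent with the paper's choice of a many-voter reduction --- but it does not substitute for the reduction itself.
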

\begin{proof}
  We reduce from the NP-complete \textsc{Restricted X3C} problem
  (\textsc{RX3C}) \cite{gon:j:x3c}. Its instances consist of a
  universe $X=\{x_1,\dots, x_{3t}\}$ and a family
  $\mathcal{S}=\{S_1,\dots , S_{3t}\}$ of size-$3$ subsets of~$X$,
  where each~$x_i$ appears in exactly three sets from~$\mathcal{S}$.
  We ask if there is a family $\mathcal{S}'\subseteq \mathcal{S}$ of
  $t$ sets such that $\bigcup_{S_i \in \calS'}S_i = X$ (i.e., we ask
  if there is an exact cover of~$X$).
  Let $(X,\calS)$ be an instance of RX3C. We form
  a candidate set:
  \[
    C = \{d\} \cup X \cup \{s_1. \ldots, s_{3t}\} \cup \{b_1,b_2,b_3\}
    \cup A \cup F,
  \]
  where $A = \{a_1, \ldots, a_{3t}\}$ and
  $F = \{f_1, \ldots, f_{3t}\}$ are sets of ``location''
  candidates. The candidates $s_1, \ldots, s_{3t}$ correspond to the sets
  from $\calS$, $b_1, b_2, b_3$ will delineate blocks in the votes,
  and $d$ will be used to encode the exact cover (if one exists). For
  each $i \in [3t]$, by $A(i)$ and $F(i)$ we mean the orderings:
  \begin{align*}
    A(i) \colon & a_1\succ \dots \succ a_{i} \succ s_i \succ a_{i+1}
    \succ \dots \succ a_{3t},\\
    F(i) \colon & f_1\succ s_1 \succ f_2 \succ s_2 \succ \dots \succ f_{i-1} \succ s_{i-1}\\ & \quad\quad\quad\; \succ f_{i}\succ f_{i+1}\succ s_{i+1} \succ \dots f_{3t}\succ s_{3t}.
  \end{align*}  
  For a subset $C'\subseteq C$ of candidates, $[C']$~denotes an
  arbitrary, fixed ordering of the candidates from $C'$.  For every
  $S_{i}=\{x_j,x_k,x_{\ell}\}\in \mathcal{S}$, let $v_{i}$ denote the
  following vote:
  \begin{align*}
    & b_1\succ x_j\succ x_k\succ x_{\ell}\succ b_2 \succ  A(i) \succ [X\setminus S_i] \succ b_3 \succ F(i).
  \end{align*}
  Let $E' = (C \setminus \{d\},(v_1, \ldots, v_{3t}))$ be an election.
  We form a weighted majority relation $M$ by taking $\calM_{E'}$ and
  extending it to include $d$ as follows:
  We require that $d$ is placed behind $b_1$ in exactly $t$ votes, $d$
  is placed behind each candidate from $X$ in exactly one vote, and
  $d$ is never placed behind any other candidate.
  We claim that there is an election $E$ such that $M = \calM_E$ if
  and only if $\calS$ contains an exact cover of~$X$.  \medskip
  
  \noindent $\boldsymbol{(\Rightarrow)}$ First, let $\mathcal{S}'\subseteq \mathcal{S}$ be an exact cover of
  $X$. To construct $E$, for each $S_i\in \mathcal{S'}$, we include
  vote $v_{i}$ with $d$ inserted right in front of~$b_2$, and for each
  $S_i\in \mathcal{S}\setminus \mathcal{S'}$, we include $v_{i}$ with
  $d$ inserted before $b_1$. As $\mathcal{S}'$ is an exact cover, 
  $M = \calM_E$.
  \medskip

  \noindent $\boldsymbol{(\Leftarrow)}$ It remains to prove that if there is an election realizing $M$, then there is a solution to the given \textsc{X3C} instance.
  Let $E = (C,V)$ be an election such that
  $M = \calM_E$.  From now on, speaking of votes, we mean the
  votes appearing in~$E$.  We claim the following.
  
  \begin{claim} \label{claim1} In each vote from~$E$, exactly one
    $s_i$ ranks before~$b_3$.
  \end{claim}
  \begin{proof}
    We make the following observations, which follow directly from the
    definition of~$M$:
  \begin{enumerate}
  \item We have
    $b_2\succ a_1\succ a_2\succ \dots \succ a_{3t}\succ b_3\succ
    f_1\succ f_2 \succ \dots f_{3t}$ in each vote. \label{it:1}
  \item Each $s_i$ is placed before $b_3$ in exactly one vote, behind
    $f_i$ in all but one vote, and before $f_{i+1}$ in each vote.  Further,
    $s_i$ is placed behind $a_i$ in all votes and before $a_{i+1}$ in
    one.
  \end{enumerate}
  From these observations we conclude that in one vote $s_i$ is placed
  before~$b_3$ and between $a_i$ and $a_{i+1}$, and in all other
  votes $s_i$ is placed behind~$b_3$, between $f_i$ and~$f_{i+1}$.

  Now, for the sake of contradiction, assume that there is a vote
  where both $s_i$ and $s_j$, with $i<j$, are placed before~$b_3$. By
  the above discussion and by \Cref{it:1} from the above list, in this
  vote it holds that $a_i\succ s_i \succ \dots \succ a_j\succ s_j$ and
  in all other votes it holds that
  $f_i\succ s_i \succ \dots \succ f_j\succ s_j$.  Thus, $s_i$ is ranked
  ahead of~$s_j$ in all the votes.  But $M$ requires that there is one
  vote where $s_j$ is ranked ahead of~$s_i$ (due to vote~$v_j$ in the
  construction of~$M$). This proves the claim.
  \end{proof}

  \noindent Next, we make the following observations based on $M$:
  \begin{enumerate}
  \item Each $s_i$ is placed behind $b_2$ in all votes and before
    $b_3$ in exactly one vote. \label{it:1a}
  \item Each $x_\ell \in X$ is placed before $b_3$ in all
    votes. \label{it:1b}
  \item For each $x_\ell\in X$ and each $S_i\in \mathcal{S}$
    not containing~$x_\ell$, $x_\ell$ is ranked behind $s_i$ in exactly
    one vote. \label{it:2}
  \item Candidate $d$ is placed in the first position in $2t$ votes
    ($b_1$ is always placed before all candidates from
    $C\setminus \{d,b_1\}$ and $d$ is placed before $b_1$ in $2t$
    votes), and before $b_2$ in all votes. \label{it:3}
  \item For each $x_\ell\in X$, there is exactly one vote where
    $x_\ell$ is placed before $d$. \label{it:4}
  \end{enumerate}
  By \Cref{it:1a,it:1b,it:2} from the above list and by \Cref{claim1},
  we conclude that for each $S_i\in \mathcal{S}$, in the one vote
  where $s_i$ is placed before $b_3$ (but still behind $b_2$), $s_i$
  is also placed before all candidates from $X\setminus S_i$,
  implying that candidates from $X \setminus S_i$ are placed behind
  $b_2$ in this vote.

  Let $V'\subseteq V$ be the subset of those votes where $d$ is not in
  the first position.  Let $\mathcal{S}'$ be the family containing
  those sets $S_i$ for which candidate $s_i$ is placed before $b_3$ in
  a vote from $V'$.  We claim that $\mathcal{S}'$ is an exact cover.
  First, by \Cref{claim1} and by \Cref{it:3} from the above list, it
  follows that $\mathcal{S}'$ contains exactly $t$ sets.  By
  \Cref{claim1}, it follows that in each vote from~$V'$ there is a
  candidate $s_i$ corresponding to a member of $\mathcal{S}'$ ranked
  before $b_3$.  For the sake of contradiction, assume that there is
  an $x_\ell\in X$ such that $x_\ell$ is not part of any set from
  $\mathcal{S'}$.  Then, by the preceding paragraph, it holds that
  $x_\ell$ is placed behind~$b_2$ in all votes from~$V'$.  As by
  \Cref{it:3} from the above list $d$ is always placed before~$b_2$,
  $d$ is placed before~$x_\ell$ in all the votes from~$V'$.  As $d$ is
  placed in the first position in all other votes, $d$ is always
  ahead of $x_\ell$, contradicting the definition of~$M$.
  \end{proof}

\section{Diameter and Compass Elections} \label{sec:compass}
In this section, we analyze the distances between four ``compass''
elections of \citet{boe-bre-fal-nie-szu:c:compass}. These elections
capture four different types of (dis)agreement among the voters and,
thus, we expect good metrics to put them far apart.
Since the smallest nonzero distances under all our metrics are either
$1$, $2$, or $4$, the larger are the distances between the compass
elections, the more space there is between them for other elections.
For technical reasons, we fix the number~$m$ of candidates to be even,
and the number of voters to be $n = t\cdot m!$, where~$t$ is some
positive integer (we will see ways to relax this assumption). The
compass elections are defined as follows:

\begin{enumerate}
\item In the identity elections, denoted $\ID$, all voters have the
  same, fixed preference order. 
\item In the antagonism elections, denoted $\AN$, half of the voters
  rank the candidates in one way and half of the voters rank them in
  the opposite way.

\item In the uniformity elections, denoted $\UN$, each possible vote
  appears the same number of times.
\item In the stratification elections, denoted $\ST$, the candidates
  are partitioned into two equal-sized sets $A$ and~$B$.  Each
  possible preference order where all members of~$A$ are ranked ahead
  of~$B$ appears the same number of times.
\end{enumerate}

\begin{table*}[t]
\begin{adjustwidth}{-2in}{-2in}
    \centering
    \scriptsize
    \begin{tabular}{l | c | c | c  | c | c | c }
                      & Discrete & Swap & $\ell_1$-Pair. & $\ell_1$-Pos.     & EMD-Pos. & EMD-Borda. \\
        \midrule
        $d(\ID,\UN), \ \ m!|n$    & $n \frac{m!-1}{m!}$   & $\frac{1}{4}n(m^2-m)$     & $\frac{1}{2}n(m^2-m)$ & $2 n (m-1)$         & $\frac{1}{3} n (m^2-1)$   & $\frac{1}{12} n (m^3 -m)$ \\
        $d(\ID,\AN), \ \ 2|n$     & $\frac{1}{2} n$       & $\frac{1}{4}n(m^2-m)$     & $\frac{1}{2}n(m^2-m)$ & $\frac{1}{2} n m$   & $\frac{1}{4} n m^2$       & $\frac{1}{12} n (m^3 -m)$ \\
        $d(\ID,\ST), \ \ ((\frac{m}{2})!)^2|n$ & $n\frac{((m/2)!)^2-1}{((m/2)!)^2}$& $\frac{1}{8}n(m^2-2m)$ & $\frac{1}{4}n(m^2-2m)$& $2 n (m-2)$         & $\frac{1}{6} n (m^2-4)$   & $\frac{1}{48} n (m^3 \!\! + \! 3m^2 \!\! - \!4m)$\\
        $d(\UN,\AN), \ \ m!|n$    & $n \frac{m!-2}{m!}$     & $\Theta(nm^2)^\dagger$ & $0$               & $2 n (m-2)$   & $\frac{1}{6} n (m^2-4)$   & $0$\\
        $d(\UN,\ST), \ \ m!|n$    & $n \frac{m! - (m/2)!^2}{m!}$  & $\frac{1}{8}nm^2$ & $\frac{1}{4}nm^2$ & $\frac{1}{2}m$& $\frac{1}{4} n m^2$   & $\frac{1}{16} n (m^3 - m)$\\
        $d(\AN,\ST), \ \ ((\frac{m}{2})!)^2|n$ & $n\frac{((m/2)!)^2-1}{((m/2)!)^2}$         & $\Theta(nm^2)^\dagger$ & $\frac{1}{4}nm^2$ & $2 n (m-2)$   & $\frac{13}{48} n (m^2 -\frac{16}{13})$ & $\frac{1}{48} n (m^3 \!\! + \! 3m^2 \!\! - \!4m)$
        \end{tabular}
        
\end{adjustwidth}
\caption{Overview of the distances of the compass matrices for our metrics. In the leftmost column, we denote conditions on $m$ and $n$ that need to hold in order for the two respective compass elections to be well-defined and our formula to hold. $\dagger$ For $d_\swap(\UN,\AN)$ and $d_\swap(\AN,\ST)$, we do not have a closed form formula (and we are not sure if it exists), however it holds that $\nicefrac{1}{8} \ n(m^2 - 3m + 2) \le d_\swap(\UN,\AN) \le \nicefrac{1}{4} \ n(m^2 -m)$ and also $\nicefrac{1}{8} \ n(m^2 - 2m) \le d_\swap(\AN,\ST) \le \nicefrac{1}{4} \ n(m^2 -m)$. Results for EMD-positionwise come from \protect\citet{boe-bre-fal-nie-szu:c:compass}.}\label{tab:distances}
\end{table*}

The next proposition gives asymptotic distances between the compass
elections (we provide exact values in \Cref{tab:distances}).

\begin{restatable}{proposition}{compass}
\label{pr:compass}
 Let $X$ and $Y$ be two distinct compass elections. Then,
  $d_\BOR(X,Y) = \Theta(nm^3)$,
  $d_\swap(X,Y) = \demdpos(X,Y) = d_\pair(X,Y) = \Theta(nm^2)$,
  $\dellpos(X,Y) = \Theta(nm)$, and
  $d_\disc(X,Y) = \Theta(n)$,
  except that $d_\pair(\AN,\UN) = d_\BOR(\AN,\UN) = 0$.
\end{restatable}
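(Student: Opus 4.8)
The plan is to compute (up to constants) each of the six distances for each of the $\binom{4}{2}=6$ pairs of compass elections, and then read off the asymptotic orders claimed in the proposition. The proposition follows immediately once the entries of \Cref{tab:distances} are established, so the real content is justifying that table. I would organize the work by metric rather than by pair, since for each metric the four compass profiles have highly structured aggregate representations that can be written down explicitly as functions of $m$ and $n$.

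First I would record the aggregate representations of the four compass elections. For $\ID$ the position matrix is a permutation-type $0/n$ matrix, the weighted majority matrix has all off-diagonal entries in $\{0,n\}$, and the sorted Borda vector is $n(m-1), n(m-2), \dots, 0$. For $\AN$ the position matrix has two entries equal to $n/2$ in each row and column, the weighted majority matrix is constant $n/2$ off the diagonal, and every candidate has Borda score $n(m-1)/2$, so $\sort(\calB_{\AN})$ is constant. For $\UN$ the position matrix is the all-$n/m$ matrix, the weighted majority matrix is again constant $n/2$, and again all Borda scores are equal to $n(m-1)/2$; this already gives $d_\pair(\AN,\UN)=0$ and $d_\BOR(\AN,\UN)=0$, since the pairwise and Bordawise metrics only see these identical representations. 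For $\ST$ (with $A,B$ the two halves) the position matrix is block-structured: candidates in $A$ occupy the top $m/2$ positions uniformly (each with multiplicity $2n/m$ in those rows, $0$ below) and symmetrically for $B$; the weighted majority matrix is $n$ on pairs $(a,b)$ with $a\in A, b\in B$, and $n/2$ within each block; and the sorted Borda vector takes value $n(3m-2)/4$ on the $A$-candidates and $n(m-2)/4$ on the $B$-candidates.

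Next, for each metric I would carry out the minimization over candidate matchings $\sigma$. For the pairwise metric, note that three of the four elections have a fully symmetric weighted majority matrix, so for any pair not involving $\ST$ the matching does not matter and the sum is a direct count; for pairs involving $\ST$ one optimizes over how many $A$-candidates of $\ST$ map to a distinguished ``front'' block of the other election, which is a small combinatorial optimization. For the positionwise metrics one uses that the EMD (respectively $\ell_1$) between two position vectors is a convex function, and by symmetry the optimal $\sigma$ is either the identity-type matching or the obvious block matching; one then sums $m$ (or $m/2$ plus $m/2$) identical EMD/$\ell_1$ values, each of which is a short one-dimensional computation of prefix sums. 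For the Bordawise metric there is no matching to choose — one just computes $\emd(\sort(\calB_X),\sort(\calB_Y))$ directly from the two sorted vectors above, which for the $\ID$ pairs is the EMD between a ``staircase'' and a constant (or two staircases), evaluated via $\sum_i |\widehat{\sort(\calB_X)}_i - \widehat{\sort(\calB_Y)}_i|$; this is where the cubic-in-$m$ order comes from, since prefix sums of a linear sequence are quadratic and summing $m$ of them gives $\Theta(m^3)$. For the discrete isomorphic metric one computes, for each pair, the maximum over matchings of the number of votes that can be made to coincide; e.g.\ $\ID$ versus $\UN$ leaves all but one vote unmatched, giving $n(m!-1)/m!$, which is $\Theta(n)$.

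The only genuinely delicate cases are $d_\swap(\UN,\AN)$ and $d_\swap(\AN,\ST)$, for which the excerpt itself admits no closed form; here the plan is only to prove the stated sandwiching bounds $\tfrac18 n(m^2-3m+2)\le d_\swap(\UN,\AN)\le \tfrac14 n(m^2-m)$ and $\tfrac18 n(m^2-2m)\le d_\swap(\AN,\ST)\le \tfrac14 n(m^2-m)$, which already pin down the order as $\Theta(nm^2)$. The upper bounds follow from exhibiting any matching and any vote-assignment (e.g.\ the trivial one, bounding each pairwise swap cost by the number of candidate pairs, $\binom m2$, times the number of relevant votes). The lower bounds are the main obstacle: I would lower-bound $d_\swap$ by the pairwise distance divided by an appropriate factor, or more carefully argue that any vote-matching must incur, on average over the $n$ votes, at least a constant fraction of $\binom m2$ inversions because the two profiles disagree on a constant fraction of candidate pairs in a constant fraction of votes — a counting argument on the weighted majority matrices. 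For the remaining swap entries ($\ID$ vs.\ $\UN$, $\AN$, $\ST$, and $\UN$ vs.\ $\ST$) exact formulas are available: $d_\swap(\ID,\UN)$ is $n$ times the average number of inversions of a uniform random permutation, i.e.\ $\tfrac14 n\binom m2\cdot\tfrac{4}{?}$ — concretely $\tfrac14 n(m^2-m)$ — and the others are analogous averages over the relevant subgroups of $S_m$, each a standard Spearman/Kendall expectation computation. I expect the block-structured $\ST$ cases and the two open swap pairs to consume most of the effort; everything else is bookkeeping over the explicit matrices above.
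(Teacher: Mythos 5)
Your overall plan---write down the explicit position matrices, weighted majority matrices, and sorted Borda vectors of $\ID$, $\AN$, $\UN$, $\ST$, then evaluate or bound each metric pair by pair, with exact formulas where available and only the sandwich bounds for $d_\swap(\UN,\AN)$ and $d_\swap(\AN,\ST)$---is exactly the route the paper takes (it records the resulting exact values in Table~2, imports the EMD-positionwise row from prior work, and states the two swap bounds without a closed form), and your aggregate representations and most of the individual computations are correct.

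There is, however, one genuine gap, and it sits precisely at the entry you identify as the main obstacle. For the lower bound on $d_\swap(\UN,\AN)$ you propose to ``lower-bound $d_\swap$ by the pairwise distance divided by an appropriate factor'' or to run ``a counting argument on the weighted majority matrices'' showing the profiles disagree on many candidate pairs. This cannot work for this pair: as you yourself note two paragraphs earlier, $\calM_\UN=\calM_\AN$ (every off-diagonal entry is $\nicefrac{n}{2}$), so $d_\pair(\UN,\AN)=0$ and \emph{any} bound derived from the weighted majority matrices is the trivial bound $0$. (For $\AN$ vs.\ $\ST$ the idea is fine, since each adjacent swap changes the majority matrix by at most $2$ in $\ell_1$, giving $d_\swap\geq \frac{1}{2}d_\pair = \frac{1}{8}nm^2$.) To get $d_\swap(\UN,\AN)=\Omega(nm^2)$ you must argue at the level of the profiles: under any candidate bijection $\sigma$, every vote of $\UN$ is matched to one of only two antipodal orders $w$ and $\mathrm{rev}(w)$, and since $d_\swap(u,w)+d_\swap(u,\mathrm{rev}(w))=\binom{m}{2}$ for every $u$, the matched cost per vote is $\binom{m}{2}/2-|d_\swap(u,w)-\binom{m}{2}/2|$; one then needs a distributional fact about inversions of a uniformly random permutation (e.g., the variance of the inversion number is $O(m^3)$, so the average deviation from $\binom{m}{2}/2$ is $O(m^{3/2})=o(m^2)$, or a direct count of permutations with few inversions) to conclude that the average over the $m!$ distinct votes of $\UN$ is $\Omega(m^2)$, yielding the stated $\frac{1}{8}n(m^2-3m+2)$-type bound. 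Without such an argument the $\Theta(nm^2)$ claim for this pair is unproven. A smaller, fixable looseness: for the positionwise and pairwise entries your lower bounds must hold over \emph{all} candidate matchings, not just the ``identity-type or block'' matching you exhibit; for the $\Theta$-statements this is easy to patch with per-column (or per-pair) lower bounds that are matching-independent, but it should be said rather than waved at by symmetry.
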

The distances between the compass elections are the largest
under the swap and EMD-positionwise metrics, followed by those under
 $\ell_1$-positionwise and discrete isomorphic metrics.
Pairwise and Bordawise metrics perform particularly badly because they
cannot distinguish between $\UN$ and $\AN$.
Yet, except for this, the compass elections are asymptotically as
distant under each of our metrics as possible. Indeed, $\ID$ and $\UN$
even form diameters of our election spaces (this also confirms a
conjecture of \citet{boe-bre-fal-nie-szu:c:compass}). 
We present only the proof for EMD-positionwise here and defer the proofs for all other metrics to \Cref{app:diameter}. 

\begin{restatable}{theorem}{diameter}
\label{thm:diameter}
  Let $d$ be one of our six metrics.
  For each two elections $X$ and $Y$ (with sizes as specified at the
  beginning of this section) it holds that $d(X,Y) \leq d(\ID,\UN)$.
\end{restatable}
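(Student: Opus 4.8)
\textbf{Proof plan (for the EMD-positionwise case presented here; the other five metrics are deferred to \Cref{app:diameter}).}
Routing through the ``central'' election $\UN$ fails for this metric: the triangle inequality only yields $\demdpos(X,Y)\le\demdpos(X,\UN)+\demdpos(Y,\UN)$, and each summand can already equal $\demdpos(\ID,\UN)$ (take $X=\ID$), losing a constant factor of about $3$. Instead the plan is to bound the optimal candidate matching by the \emph{average} matching. Since $\demdpos(X,Y)=\min_{\sigma\in\Pi(C,C')}\sum_{c\in C}\EMD(\calP_X(c),\calP_Y(\sigma(c)))$ is at most the mean of that sum over all $m!$ bijections, and each $c\in C$ is sent to each $d\in C'$ by exactly $(m-1)!$ of them,
\[
  \demdpos(X,Y)\ \le\ \frac1m\sum_{c\in C}\ \sum_{d\in C'}\EMD\big(\calP_X(c),\calP_Y(d)\big).
\]

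Next I would rewrite each term through prefix sums, $\EMD(a,b)=\sum_{i=1}^{m-1}|\hat{a}_i-\hat{b}_i|$, and interchange the summations. For a fixed threshold $i$, the quantity $\widehat{\calP_X(c)}_i$ is the number of voters of $X$ ranking $c$ among their top $i$ candidates, so $(\widehat{\calP_X(c)}_i)_{c\in C}$ is a tuple of $m$ integers in $[0,n]$ summing to $ni$ (each voter contributes $i$), and likewise for $Y$. The crux is then the estimate, for every $i$,
\[
  \sum_{c\in C}\sum_{d\in C'}\big|\widehat{\calP_X(c)}_i-\widehat{\calP_Y(d)}_i\big|\ \le\ 2ni(m-i).
\]
I would prove this by noting that $(p,q)\mapsto\sum_{c,d}|p_c-q_d|$ is convex, hence over the product polytope $\{p\in[0,n]^m:\sum_c p_c=ni\}\times\{q\in[0,n]^m:\sum_d q_d=ni\}$ it attains its maximum at a pair of vertices; and since $i$ is a positive integer, every vertex of $\{p\in[0,n]^m:\sum_c p_c=ni\}$ has exactly $i$ coordinates equal to $n$ and the other $m-i$ equal to $0$, at which point $\sum_{c,d}|p_c-q_d|=2ni(m-i)$ irrespective of which coordinates are chosen. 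Summing over $i$ and using $\sum_{i=1}^{m-1}2ni(m-i)=\tfrac13 nm(m^2-1)$ gives $\sum_{c,d}\EMD(\calP_X(c),\calP_Y(d))\le\tfrac13 nm(m^2-1)$, so $\demdpos(X,Y)\le\tfrac13 n(m^2-1)$; a one-line computation (or the corresponding entry of \Cref{tab:distances}) gives $\demdpos(\ID,\UN)=\tfrac13 n(m^2-1)$, completing this case.

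The step I expect to be the main obstacle is the vertex lemma: one must check carefully that the polytope $\{p\in[0,n]^m:\sum_c p_c=ni\}$ has no fractional vertices (this is precisely where integrality of $i$ enters) and that the convex objective is genuinely \emph{maximized} at those vertices, not merely bounded there. Everything else — the averaging inequality, the prefix-sum rewriting, and the elementary identity $\sum_{i=1}^{m-1}i(m-i)=\tfrac16 m(m^2-1)$ — is routine bookkeeping. For the full statement, the same averaging trick handles the $\ell_1$-positionwise metric, while the discrete, swap, pairwise, and Bordawise cases follow from directly comparing the relevant aggregate representation of an arbitrary pair $X,Y$ against those of $\ID$ and $\UN$; these are simpler and are placed in \Cref{app:diameter}.
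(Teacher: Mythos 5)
Your EMD-positionwise argument is correct, but it takes a genuinely different route from the paper's. The paper normalizes to bistochastic matrices and argues by contradiction: assuming $\demdpos(X,Y)>(m^2-1)/3$, it rewrites each column-EMD as $m(m+1)-2\sum_{i,j}\min(\hat{x}_{i,j},\hat{y}_{i,j})$, exploits optimality of the identity matching against the $m$ \emph{cyclic shifts} of the columns, replaces $\min(a,b)$ by $ab$ for $a,b\in[0,1]$, and sums over shifts to contradict the cumulative-rows identity $\sum_j j^2=m(2m+1)(m+1)/6$. You instead bound the optimal matching by the average over all $m!$ bijections, reduce to the per-threshold quantity $\sum_{c,d}\bigl|\widehat{\calP_X(c)}_i-\widehat{\calP_Y(d)}_i\bigr|$, and maximize it by convexity over the product of slabs $\{p\in[0,n]^m:\sum_c p_c=ni\}$, whose vertices (here integrality of $i$ is correctly invoked) are $0/n$-vectors giving exactly $2ni(m-i)$; summing over $i$ yields $\tfrac13 n(m^2-1)=\demdpos(\ID,\UN)$. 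All steps check out: the averaging inequality, the vertex characterization (at most one fractional coordinate, then none by integrality), and the arithmetic are sound, and the same scheme indeed covers $\ell_1$-positionwise with $i=1$ per row. Interestingly, your averaging-over-bijections idea is the one the paper uses for the \emph{swap and discrete} metrics in its appendix, while your convexity/extreme-point step replaces the paper's cyclic-shift and $\min$-versus-product trick; your version has the advantage of working directly with integer position matrices (no bistochastic normalization) and of isolating the extremal structure per threshold, whereas the paper's argument avoids any appeal to polytope vertices. One caveat on your closing sentence: the deferred cases are not mere ``direct comparisons of aggregate representations'' --- in the paper, Bordawise requires showing every election lies on the $\ID$--$\UN$ diameter, pairwise uses a transposition of the majority matrix in place of cyclic shifts, and swap/discrete need the bijection-averaging argument --- so those parts would still require real work, exactly as they do in the paper's appendix.
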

\begin{proof}[Proof (EMD-positionwise)]
  The EMD-positionwise metric can be seen as working over matrices whose
  entries are nonnegative and whose rows and columns sum up to the
  same value.
  Let $A$ and~$B$ be two such matrices, whose rows and columns sum up
  to some value $n'$. Let $A/n'$ and $B/n'$ be the same matrices, but
  with their entries divided by $n'$.  Note that
  $\demdpos(A,B) = n' \cdot \demdpos(A/n',B/n')$.  From now on, we
  focus on matrices whose entries are nonnegative and whose rows and
  columns sum up to $1$ (they are called \emph{bistochastic}).

  \citet{boe-bre-fal-nie-szu:c:compass} have shown that
  $\demdpos(\ID,\UN) = \nicefrac{n(m^2-1)}{3}$.  We claim that for
  each two $m \times m$ bistochastic matrices
  $X$ and $Y$ it holds that $\demdpos(X,Y) \leq (m^2 -1)/3$. For the
  sake of contradiction, assume that the opposite holds.  Let
  $x_1, \ldots, x_m$ be the columns of $X$ and $y_1, \ldots, y_m$ be
  the columns of $Y$. Without loss of generality, we assume that
  $\demdpos(X,Y) = \sum_{i \in [m]} \emd(x_i,y_i)$; otherwise we could
  reorder the columns of one of the matrices.  By definition of the
  EMD metric, we have that
  $\sum_{i \in [m]} \emd(x_i,y_i) \textstyle = \sum_{i \in [m]}
  \ell_1(\hat{x}_i,\hat{y}_i)$.  For each $i \in [m]$, we write
  $\hat{x}_{i,1}, \ldots, \hat{x}_{i,m}$ to denote the entries of the
  cumulative vector $\hat{x}_i$; we use analogous notation for
  $\hat{y}_i$.
  Note that in the matrices with columns
  $\hat{x}_1, \ldots, \hat{x}_m$ and $\hat{y}_1, \ldots, \hat{y}_m$,
  for each $j \in [m]$, the $j$-th row sums up to $j$ (we refer to
  this as the \emph{cumulative rows} property).  Using these
  observations, we note that:
  \begin{align*}
    \textstyle
    \sum_{i \in [m]} &\emd(x_i,y_i)  \textstyle
    = \sum_{i \in [m]} \ell_1(\hat{x}_i,\hat{y}_i)\\ 
    &= \textstyle \sum_{i,j \in [m]} |\hat{x}_{i,j} - \hat{y}_{i,j}| \notag\\
    &= \textstyle \sum_{i,j \in [m]} \big(\max(\hat{x}_{i,j},\hat{y}_{i,j}) - \min(\hat{x}_{i,j},\hat{y}_{i,j})\big) \notag\\
    &= \textstyle \sum_{i,j \in [m]} (\hat{x}_{i,j} + \hat{y}_{i,j}) - 2 \sum_{i,j \in [m]} \min (\hat{x}_{i,j},\hat{y}_{i,j}) \notag\\
    &= \textstyle 2 \left( \sum_{j \in [m]} j \right) - 2 \sum_{i,j \in [m]} \min (\hat{x}_{i,j},\hat{y}_{i,j}) \\
    &= m(m+1) - 2 \textstyle \sum_{i,j \in [m]} \min (\hat{x}_{i,j},\hat{y}_{i,j}).
  \end{align*}
  Thus, if $\demdpos(X,Y) > (m^2-1)/3$, then it must hold that:
  \begin{align*}
    \textstyle \sum_{i,j \in [m]} &\min (\hat{x}_{i,j},\hat{y}_{i,j})  < 
                                                   \textstyle \frac{1}{2} \left( m(m+1) - \frac{m^2-1}{3}\right)\\
                                                  & = (2m^2 + 3m +1)/6 = (2m+1)(m+1)/6.
  \end{align*}
  
  In the following, for each $i,k \in [m]$ with $i+k > m$, if $i+k$ is used as a column
  index, then we take it to be $i+k-m$ (i.e., column indices
  ``cycle'').
  For each $k \in [m]$, we have
  $\demdpos(X,Y) \leq \sum_{i \in [m]} \emd(x_i,y_{i+k})$; if this
  were not the case, then our assumption that
  $\demdpos(X,Y) = \sum_{i \in [m]} \emd(x_i,y_i)$ would have been
  false.
  Consequently, for every $k\in [m]$, repeating the above reasoning, we get: 
  \[
    \textstyle
    \sum_{i,j \in [m]} \min(\hat{x}_{i,j}, \hat{y}_{i+k,j}) < (2m+1)(m+1)/6,
  \]
  If $a, b \in [0,1]$ then $a \cdot b \le \min(a,b)$.  As for each
  $i, j \in [m]$ we have $\hat{x}_{i,j},\hat{y}_{i,j} \in [0,1]$, for
  each $k \in [m]$ we have:
  \[
    \textstyle \sum_{i,j \in [m]} \hat{x}_{i,j} \cdot \hat{y}_{i+k,j} <
    (2m+1)(m +1)/6.
  \]
  By summing this inequality sidewise for all $k \in [m]$, we get:
  \[
    \textstyle
    \sum_{i,j \in [m]} \hat{x}_{i,j} \cdot \sum_{k \in [m]} \hat{y}_{k,j} < m(2m+1)(m+1)/6.
  \]
  By applying the cumulative rows property, we obtain:
  \[
    \textstyle \sum_{j=1}^m j^2 < m(2m+1)(m+1)/6.
  \]
  Since we know that $\sum_{j=1}^m j^2 = {m(2m+1)(m+1)}/{6}$, this is
  a contradiction. Hence, for all $m \times m$ bistochastic matrices
  $X$, $Y$ we have $\demdpos(X,Y) \leq (m^2 -1)/3$. Thus, for all
  elections with $m$ candidates and $n$ voters, their EMD-positionwise
  distance is at most $n(m^2-1)/3$.
\end{proof}

In the above proof we do not work directly with elections, but,
rather, with normalized position matrices. Viewed this way, $\ID$ is a
unit diagonal matrix and $\UN$ is a matrix whose entries are all equal. Indeed, this is how \citet{boe-bre-fal-nie-szu:c:compass} defined
them. 
In this way, the proof
works for any number of voters
(this also applies to $\ell_1$-positionwise and, using normalized
weighted majority relations, to pairwise).

For Bordawise, the proof of Theorem~\ref{thm:diameter} shows that for
each election the sum of its distances from $\ID$ and $\UN$ is the
same. That is, under this metric every election lays on the diameter.

\begin{table}[t]
    \centering
    \begin{tabular}{ c | c | c | c | c | c }
        $|C| \times |V|$ & EMD-Pos.& $\ell_1$-Pos. & $\ell_1$-Pair. & Bordawise & Discrete\\
    	\midrule
        $3 \times 3$    & 0.942 & 0.748 & 0.860 & 0.587 & 0.614 \\
        $3 \times 4$    & 0.900 & 0.697 & 0.860 & 0.659 & 0.636 \\
        $3 \times 5$    & 0.920 & 0.759 & 0.843 & 0.606 & 0.680 \\ 
    	\midrule
        $4 \times 3$    & 0.850 & 0.577 & 0.735 & 0.442 & 0.402 \\
        $4 \times 4$    & 0.782 & 0.561 & 0.689 & 0.415 & 0.434 \\
        $4 \times 5$    & 0.772 & 0.567 & 0.672 & 0.439 & 0.432 \\[1mm]
    	\midrule
      $\substack{\mathrm{10 \times 50} \\ \text{(340 elections)}}$  \rule{0mm}{2.75mm}  & 0.745 & 0.563 & 0.708 & 0.430 & 0.342 \\
        \end{tabular}
        \caption{Pearson correlation coefficients between swap distances and the
        other ones computed for our datasets.}
    \label{table:pearson_correlation}
\end{table}

\section{Maps and Correlations} \label{sec:correlation}
While in the previous section we studied distances between
hand-crafted elections, now we analyze automatically-generated ones.
We test how our metrics correlate with the swap one, and we compare
their maps of elections.

We use two datasets. The first one consists of all small elections, as
in Section~\ref{sec:aggregate}. The second one resembles those used in
the maps of \citet{szu-fal-sko-sli-tal:c:map} and
\citet{boe-bre-fal-nie-szu:c:compass}, but consists of elections with
$10$~candidates and $50$~voters,\footnote{This is the largest size for which
  we could compute swap distances within a few
  weeks. \citet{szu-fal-sko-sli-tal:c:map} used $100 \times 100$
  elections, and \citet{boe-bre-fal-nie-szu:c:compass} used
  $10 \times 100$ ones.}
generated according to the following statistical models (see the
just-cited papers for more details):

\begin{description}
\item[IC, Urn, and Mallows] We generated 20 elections using
the impartial culture model (IC), where each vote is selected
uniformly at random, and 60 elections for each of the classic urn and
Mallows models (we used the same sampling protocol as
\citet{boe-bre-fal-nie-szu:c:compass}).
\item[SP, SC, and SPOC] We generated 20 single-peaked elections (SP elections) uniformly at random (this is
known as the SP Walsh model~\cite{wal:t:generate-sp}), 20 such elections using the
Conitzer model~\cite{con:j:eliciting-singlepeaked}, and 20~single-peaked on a circle
elections (SPOC elections), uniformly at random~\cite{pet-lac:j:spoc}. We also generated 20~single-crossing elections (SC elections) using
the sampling protocol of \citet{szu-fal-sko-sli-tal:c:map}.
\item[Euclidean]
In these elections, each candidate and voter is a point from
some Euclidean space
and the voters rank the candidates with respect to their increasing
distances from them. We have generated the points uniformly at random
from (i)~a 1D interval, (ii)~a 2D sphere, (iii)~a 2D disc, and (iv)~a
3D cube; in each case we generated $20$ elections.
\item[Group-Separable] Group-separable elections were introduced by
\citet{ina:j:group-separable,ina:j:simple-majority}. We use
a definition based on trees (see, e.g., the works
of~\citet{kar:j:group-separable}
and~\citet{fal-kar-obr:c:group-separable} for a discussion and
motivation). Consider an ordered, rooted tree where each leaf is a
unique candidate. To obtain a vote, for each of the nodes we can
choose to reverse the order of its children and, then, rank the
candidates by reading them off from left to right. Given such a tree,
we sample a vote by reversing the order of each node's children
with probability $\nicefrac{1}{2}$. We generated 20 elections using
complete binary trees, and 20 elections using binary caterpillar trees
(a binary caterpillar tree is defined recursively to either be a leaf,
or a root whose one child is a leaf and whose other child is a root of
a binary caterpillar tree.)
\end{description}

\begin{table}[t]
    \centering
    \small
    \begin{tabular}{ c | c | c | c | c | c }
        name & EMD-Pos.& $\ell_1$-Pos. & $\ell_1$-Pair. & EMD-Borda. & Discrete\\
    	\midrule
        Impartial Culture  & 0.481 & 0.114 & 0.525 & -0.064 & -0.039 \\
        SP by Conitzer  & 0.471 & 0.727 & -0.142 & 0.16 & 0.976 \\
        SP by Walsh  & 0.377 & 0.467 & -0.119 & -0.073 & 0.7 \\
        SPOC  & 0.297 & 0.409 & -0.074 & -0.065 & 0.622 \\
        Single-Crossing  & 0.252 & 0.248 & 0.123 & -0.007 & 0.625 \\
        1D Interval  & 0.242 & 0.219 & 0.101 & -0.08 & 0.606 \\
        2D Disc  & 0.337 & 0.317 & 0.203 & -0.002 & 0.636 \\
        3D Cube  & 0.406 & 0.347 & 0.311 & 0.035 & 0.67 \\
        2D Sphere  & 0.406 & 0.329 & 0.335 & 0.039 & 0.651 \\
        Urn (gamma)  & 0.84 & 0.86 & 0.803 & 0.713 & 0.102 \\
        Norm-Mallows (uniform)  & 0.86 & 0.784 & 0.839 & 0.793 & 0.255 \\
        GS Balanced  & 0.863 & 0.793 & 0.844 & 0.797 & 0.259 \\
        GS Caterpillar  & 0.864 & 0.795 & 0.845 & 0.8 & 0.252 \\
        \midrule
        \end{tabular}
    \caption{Pearson correlation coefficients between swap distances and the
        other ones computed for each used statistical culture.}
    \label{table:pearson_correlation_cultures}
\end{table}

For each dataset we have computed the Pearson Correlation Coefficient
(PCC) between the swap distances and those provided by the other
metrics.  PCC is a classic measure of correlation that takes values
between $-1$ and $1$; its absolute value gives the strength of the
correlation and the sign indicates its positive or negative
nature. \citet{szu-fal-sko-sli-tal:c:map} presented a similar
experiment, but on a much smaller scale, and on a limited set of
metrics.  We present our results in
Table~\ref{table:pearson_correlation}. We see that EMD-positionwise is
most strongly correlated with the swap metric, with a large advantage
over all the metrics, except for pairwise (where the advantage is smaller).
While in \Cref{table:pearson_correlation}, we have provided the Person correlation coefficient between swap distances and the other ones for our $340$ $10\times 50$ elections, in \Cref{table:pearson_correlation_cultures} we take a more fine grained view: 
Recall that the synthetic dataset that we use consists of elections sampled from $13$ synthetic models: for $11$ of them we generated $20$ elections and for the Urn and Mallows model we sampled $60$ elections.
In \Cref{table:pearson_correlation}, for each statistical culture, we give the correlation coefficients between the swap distances of all pairs of elections from this culture and their distances according to our other metrics. 
In \Cref{fig:correlationPlots}, we depict the correlation between swap distances and the other ones on the level of election pairs for the full $340$ elections dataset.

Next, we used the techniques of \citet{szu-fal-sko-sli-tal:c:map} to
draw the maps of elections from the $10 \times 50$ dataset. We show
these maps in Figure~\ref{fig:map} (we omit the discrete metric as its
visualization is nearly meaningless); each dot is an election, its
color corresponds to the statistical model it comes from, and the
points are placed so that their Euclidean distances resemble those
according to a given metric as much as possible (following
\citet{szu-fal-sko-sli-tal:c:map}, we used the algorithm of
\citet{fruchterman1991graph} to place the points).
We also
included the compass elections on the maps (for the swap metric,
their location is approximate).

\begin{figure*}[t]
	\centering
	\begin{subfigure}[b]{0.3\textwidth}
		\includegraphics[width=\textwidth]{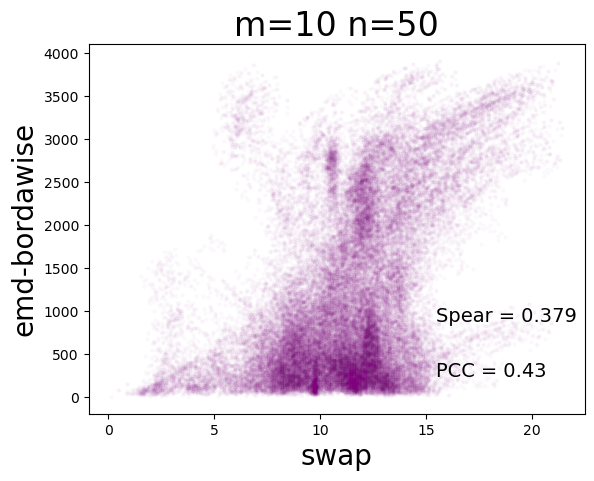}
		\caption{EMD-Bordawise}
	\end{subfigure}\hfill
	\begin{subfigure}[b]{0.3\textwidth}
		\includegraphics[width=\textwidth]{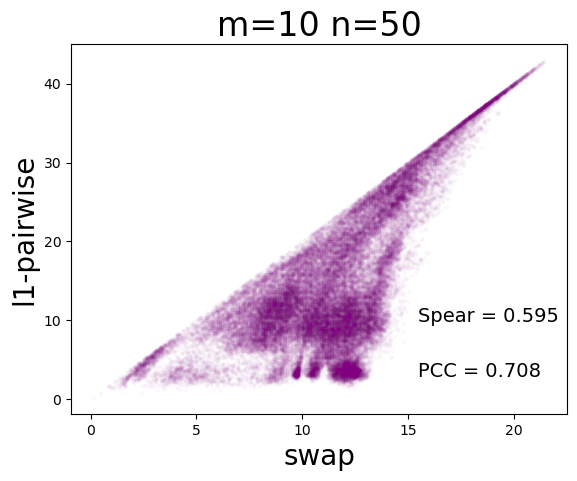}
		\caption{$\ell_1$-Pairwise}
	\end{subfigure}\hfill
	\begin{subfigure}[b]{0.3\textwidth}
		\includegraphics[width=\textwidth]{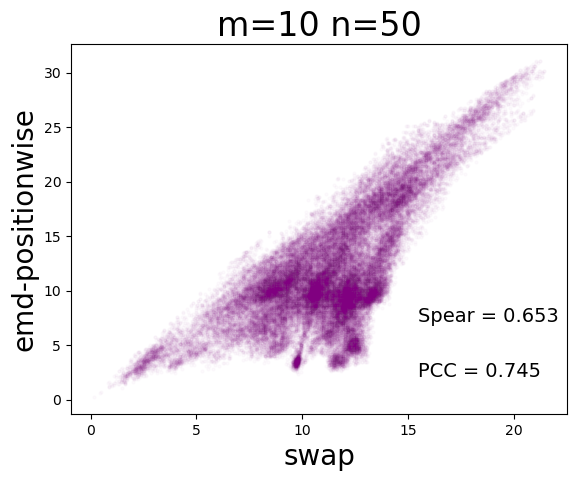}
		\caption{EMD-Positionwise}
	\end{subfigure}%
	\\
	\begin{subfigure}[b]{0.3\textwidth}
		\includegraphics[width=\textwidth]{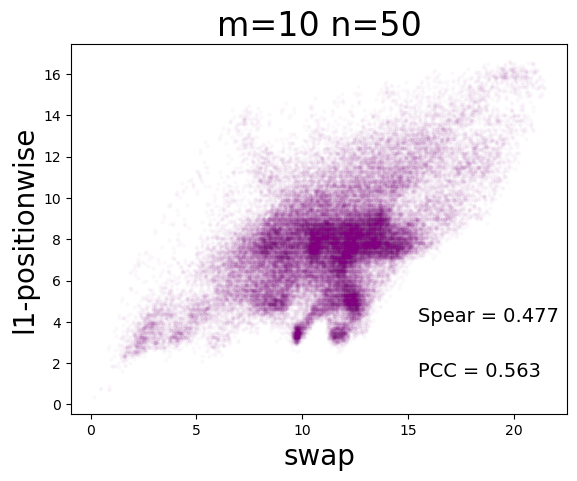}
		\caption{$\ell_1$-Positionwise}
	\end{subfigure}\qquad \qquad
	\begin{subfigure}[b]{0.3\textwidth}
		\includegraphics[width=\textwidth]{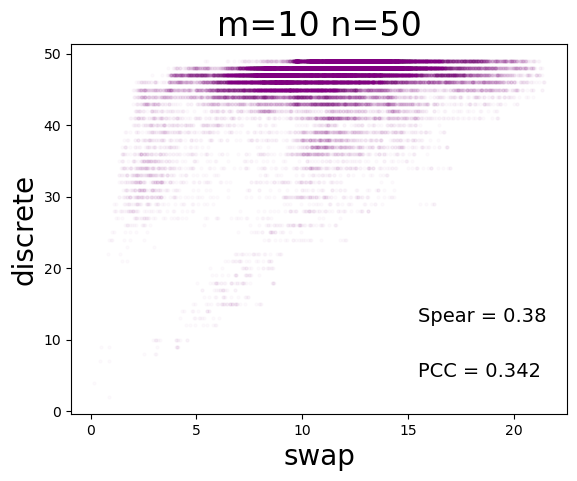}
		\caption{Discrete}
	\end{subfigure}
	\caption{For each of our five non-swap metrics $d$, we show the correlation between the swap distances and $d$ distances in our synthetic dataset of $340$ $10\times 50$ elections. Each point represents a pair of elections and its $x$-coordinate displays the swap distance of this pair and the $y$-coordinate its $d$ distance. Moreover, the Pearson correlation coefficient and the Spearman correlation coefficient between the swap distances and $d$ distances of all pairs are included.} \label{fig:correlationPlots}
\end{figure*}

The maps provided by the isomorphic swap distance and both
positionwise metrics are remarkably similar, but, nonetheless, there
are some differences.
For example, under the swap metric group-separable caterpillar
elections are closer to the IC ones than the group-separable balanced
elections (both on the map and in terms of actual distances), whereas
according to the positionwise metrics this relation is reversed.
Also, $\ell_1$-positionwise
clearly distinguishes between 2D-Sphere and group-separable balanced
elections (like the swap metric), but EMD-positionwise does
not.
Generally, the area between $\UN$ and $\AN$ is quite challenging for
our metrics (fortunately, according to
\citet{boe-bre-fal-nie-szu:c:compass}, only few real-life elections
land there).
The maps for the pairwise and Bordawise metrics
illustrate their flaws identified in the previous sections (e.g.,
Bordawise and pairwise conflate $\UN$ and $\AN$, and the former also
puts all the elections on the diameter, which explains its elongated
shape; the curvature is an artifact of the drawing algorithm).

All in all,  the positionwise metrics seem to perform best in this
section, with the PCC values pointing to the EMD one.

\begin{figure}
	\centering
	\begin{subfigure}[b]{0.32\textwidth}
		\centering
		\includegraphics[width=\textwidth]{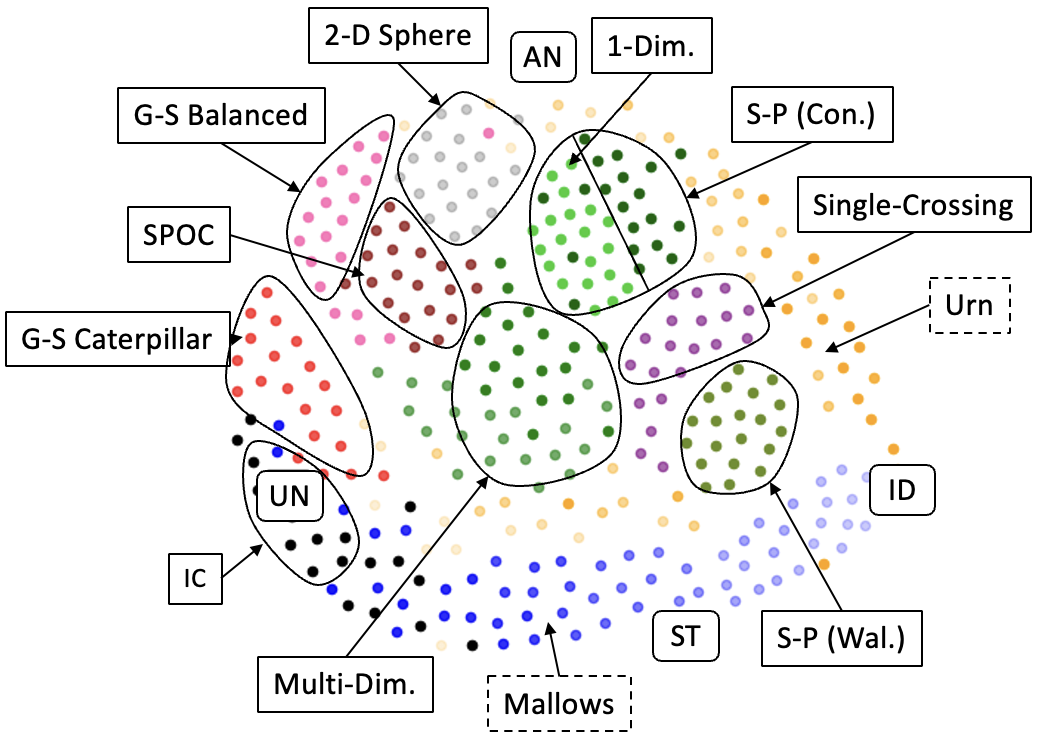}
		\caption{Isomorphic swap}
	\end{subfigure}
	\hfill
	\begin{subfigure}[b]{0.32\textwidth}
		\centering
		\includegraphics[width=\textwidth]{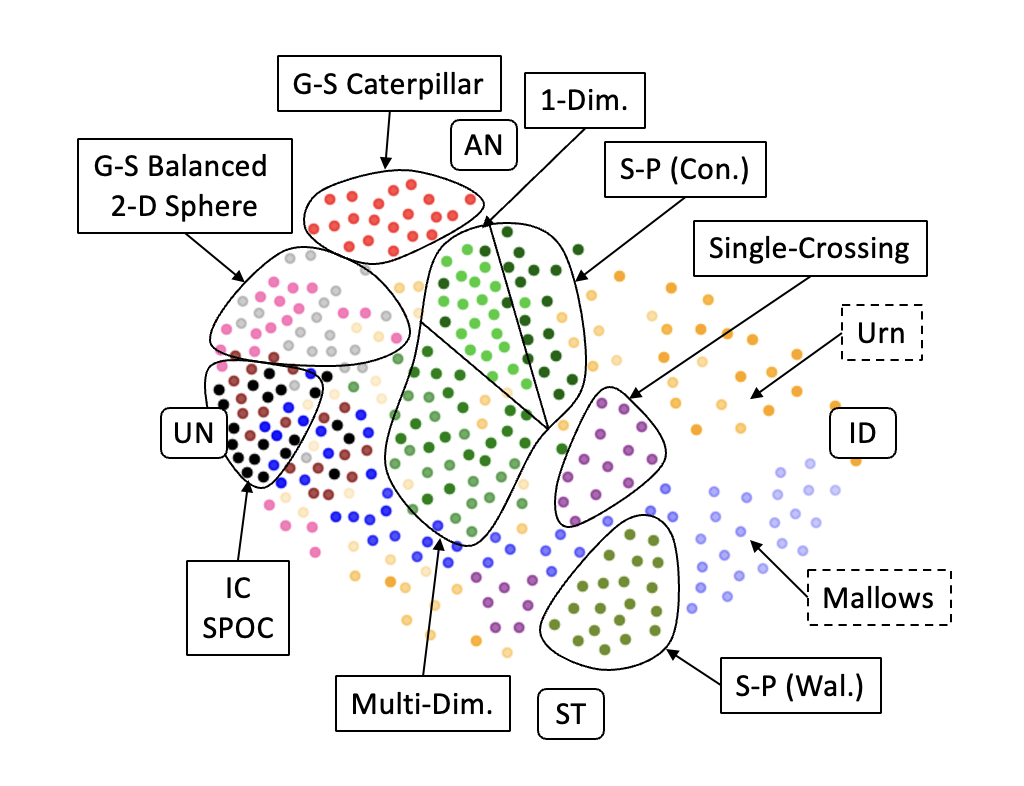}
		\caption{EMD-positionwise}
	\end{subfigure}
	\hfill
	\begin{subfigure}[b]{0.32\textwidth}
		\centering
		\includegraphics[width=\textwidth]{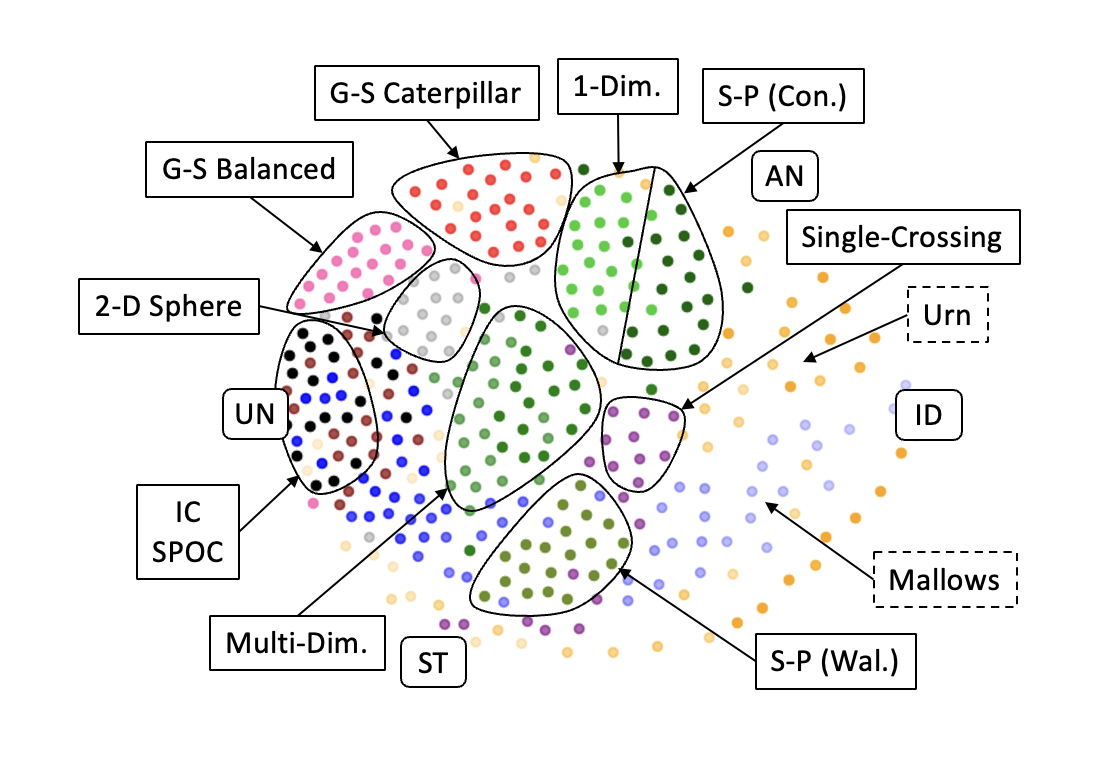}
		\caption{$\ell_1$-positionwise}
	\end{subfigure}
	\hfill
	\begin{subfigure}[b]{0.32\textwidth}
		\centering
		\includegraphics[trim={1.4cm 2.0cm 1.4cm 1cm},clip,width=\textwidth]{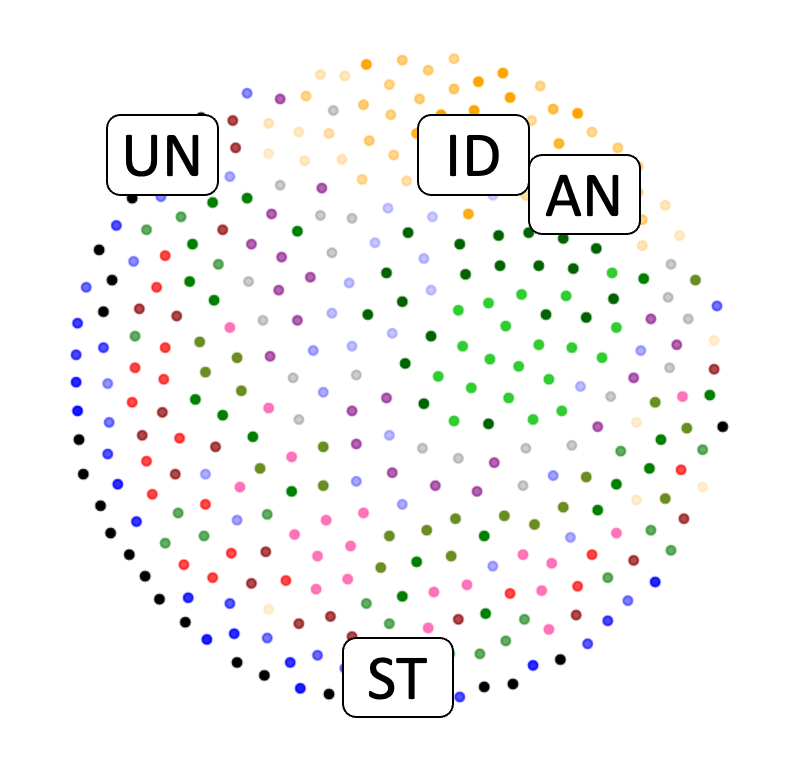}
		\caption{Isomorphic discrete}
	\end{subfigure}
	\hfill
	\begin{subfigure}[b]{0.32\textwidth}
		\centering
		\includegraphics[width=\textwidth]{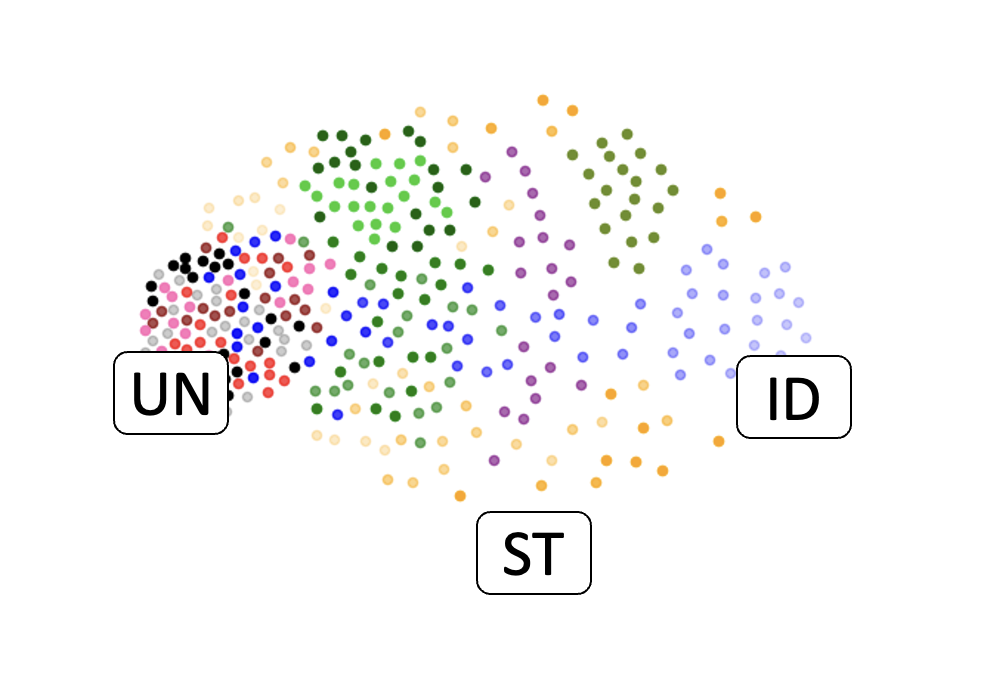}
		\caption{Pairwise}
	\end{subfigure}
	\hfill
	\begin{subfigure}[b]{0.32\textwidth}
		\centering
		\includegraphics[width=\textwidth]{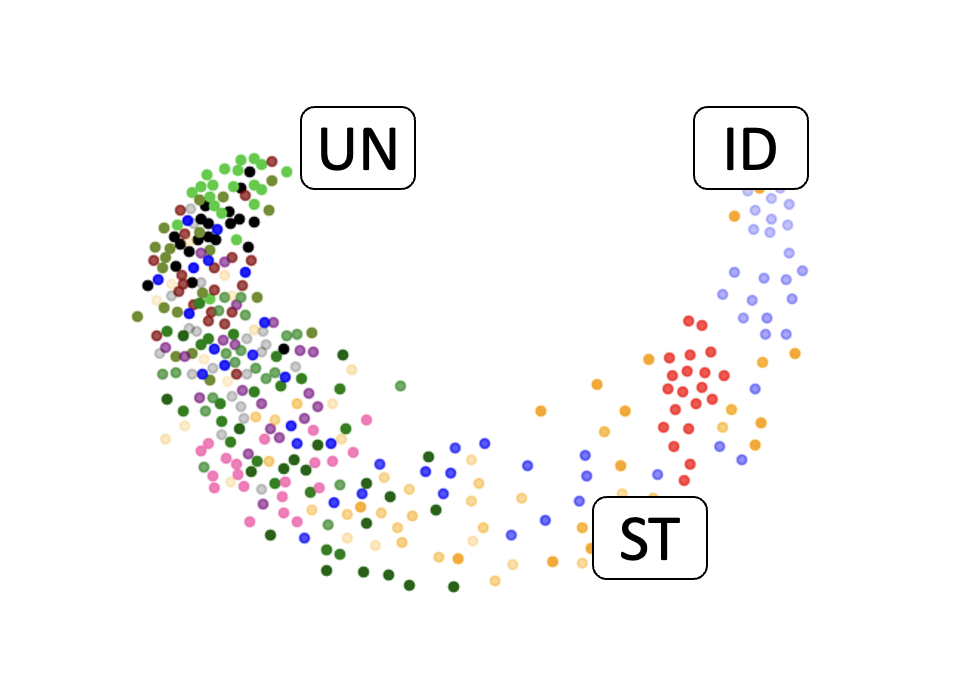}
		\caption{Bordawise}
	\end{subfigure}
	\caption{\label{fig:map}Maps of elections prepared using each of our metrics.}
\end{figure}

\section{Metrics as Graphs} \label{sec:graphs} We conclude by
discussing the intrinsicness of our metrics~(see, e.g., the textbook of
\citet{kha-kir:b:metrics} for more details on this notion).  Consider
a graph whose vertices are equivalence classes of a given metric and
where the edges connect those classes that are at minimum nonzero
distance from each other.
Under the swap distance, each edge corresponds to a swap of adjacent
candidates and each shortest path
corresponds to the distance between its endpoints.  This means that
the swap distance is intrinsic.

\begin{definition}
  Let $\alpha \geq 1$ be a number.  A metric $d$ is $\alpha$-intrinsic
  if for each pair of elections $X$ and $Y$ (with the same number of
  candidates and the same number of voters), there are elections
  $X = E_0, E_1, E_2, \dots, E_{k-1}, E_k = Y$ such that
  $\sum_{i=1}^k d(E_{i-1},E_i) \leq \alpha \cdot d(X,Y)$, and for each
  $i \in [k]$, $d(E_{i-1},E_i)$ is the smallest nonzero distance
  between elections under $d$. If $\alpha = 1$, then $d$ is intrinsic.
\end{definition}
\noindent
An intrinsic metric can be viewed as performing a series of simple,
unit operations.  Among our metrics only swap and discrete are
intrinsic, but the positionwise ones are $2$-intrinsic (this is,
perhaps, the most technically involved of our results).
We defer the rather technically involved proof in \Cref{app:graphs}.

\begin{restatable}{theorem}{intrinsic}
  \label{thm:intrinsic}
  
The swap and discrete metrics are intrinsic, but neither of the
EMD/$\ell_1$-positionwise, pairwise, and Bordawise metric is
intrinsic, but the EMD- and $\ell_1$-positionwise metrics are $2$-intrinsic yet not $\alpha$-intrinsic for any $\alpha<2$.
\end{restatable}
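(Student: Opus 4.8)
The plan is to establish the assertions in turn, roughly in order of difficulty.

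\noindent\emph{Swap and discrete are intrinsic.} For either metric the smallest nonzero distance between two elections (on the given numbers of candidates and voters) is $1$: for $d_\swap$ it is witnessed by swapping two adjacent candidates in a single vote, and for $d_\disc$ by rewriting a single vote. Given elections $X$ and $Y$, I would first fix an optimal candidate matching and vote matching witnessing $d(X,Y)$; after renaming candidates and reordering votes (which changes neither the equivalence classes nor any distances) we may assume that $X$ and $Y$ share the candidate set and that their votes are aligned pairwise. For swap, transform the aligned votes one at a time through a shortest chain of adjacent transpositions (bubble sort): each transposition produces the next election on the path and is at swap distance exactly $1$, and the number of steps is $\sum_i d_\swap(v_i,v'_i)=d_\swap(X,Y)$. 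For discrete, overwrite the disagreeing votes one at a time; each step is at discrete distance $1$ and there are exactly $d_\disc(X,Y)$ of them. Combined with the triangle inequality (which gives the lower bound), both metrics are intrinsic.

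\noindent\emph{The aggregate metrics are not intrinsic.} For each of the EMD/$\ell_1$-positionwise, pairwise and Bordawise metrics I would exhibit a pair $X,Y$ whose distance is strictly smaller than the length of every path built from smallest-nonzero-distance steps. For $\ell_1$-positionwise a ``$3$-cycle'' does it: let $X$ be the $3$-candidate election containing every possible vote the same number of times (so $\calP_X$ is the all-equal matrix), and let $\calP_Y$ differ from $\calP_X$ by a $3\times 3$ pattern with $+1$ in three cells and $-1$ in the complementary three, summing to zero along every row and column; this $\calP_Y$ is nonnegative with equal row and column sums, hence realizable. Then $\dellpos(X,Y)=6$, but each elementary step is a weight-$4$ ``rotation'' (a $\pm1$ pattern supported on a $2\times 2$ submatrix), and one well-chosen rotation brings $\calP_X$ to $\dellpos$-distance exactly $4$ from $\calP_Y$ --- and no single rotation can do better; so two steps are needed and every path has length $\geq 8>6$. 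For EMD-positionwise the witness is $\ID$ versus $\AN$ on three candidates: reading position matrices in cumulative form, the ``$0$-in-the-middle'' columns of $\AN$ block every elementary step that would decrease the discrepancy, so the shortest path is strictly longer than the direct distance. For pairwise and Bordawise I would use similar small explicit instances; the mechanism is that the straight segment between the two aggregate representations passes through objects not realizable by any election --- this is most dramatic for weighted majority relations and Borda vectors, whose realizability is $\np$-hard (Theorems~\ref{thm:recPair} and~\ref{thm:recBorda}).

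\noindent\emph{The positionwise metrics are $2$-intrinsic.} The central reformulation is for EMD-positionwise: identify a position matrix $\calP$ with its columnwise prefix-sum matrix $\widehat{\calP}$, characterized by nonnegative, nondecreasing columns with last row $\equiv n$ (and $j$-th row summing to $jn$); then $\demdpos$ becomes the $\ell_1$ distance minimized over column permutations, and an elementary step becomes ``transfer one unit between two columns within a single row of $\widehat{\calP}$'', legal exactly when column monotonicity is preserved. (For $\ell_1$-positionwise one works directly with position matrices and arbitrary weight-$4$ rotations.) After applying the optimal column permutation to $Y$, put $d:=d(X,Y)=\|\widehat{\calP_X}-\widehat{\calP_Y}\|_1$. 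For $\ell_1$-positionwise one shows that, starting from an over-full cell, a rotation can always be chosen so as to decrease this discrepancy by at least $2$ while keeping the matrix nonnegative; hence at most $d/2$ rotations suffice and the total distance is at most $4\cdot(d/2)=2d$. For EMD-positionwise such ``productive'' steps may all be blocked by monotonicity; then one first performs a short ``enabling'' cascade of steps running down toward the (fixed) last row, which unblocks a productive step, and charges each enabling step to the productive step it unblocks --- yielding at most $d$ elementary steps, i.e.\ total distance at most $2d$. The crux --- and what I expect to be the main obstacle --- is exactly this amortization in the EMD case: column monotonicity couples the rows of $\widehat{\calP}$, and one must prove that the enabling cascades always terminate and collectively cost no more than the productive steps they unblock.

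\noindent\emph{The constant $2$ is optimal.} Finally, to rule out every $\alpha<2$, I would produce a family $X_t,Y_t$ on a growing number $m$ of candidates with $d(X_t,Y_t)=d_t$ such that every path of elementary steps has length $(2-o(1))\,d_t$. For $\ell_1$-positionwise, take $\calP_X$ to be the all-equal matrix (assume $m\mid n$) and $\calP_Y=\calP_X-C$, where $C$ is a $\pm1$ matrix supported on a Hamiltonian cycle of the row/column bipartite graph. Then $\dellpos(X,Y)=2m$ for \emph{every} column permutation (every column vector of $\calP_X$ is the same, and comparing it against any column of $\calP_Y$ costs $2$), while a row-incidence connectivity argument shows that writing $C$ as a sum of $2\times 2$ rotations needs at least $m-1$ of them; hence every path has length $\geq 4(m-1)=(2-\tfrac{2}{m})\,\dellpos(X,Y)$. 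For EMD-positionwise a family saturating the monotonicity barrier --- for instance built from the compass elections $\ID$ and $\AN$ on $m$ candidates, whose cumulative position matrices force roughly as many enabling as productive elementary steps --- drives the ratio to $2$ as well. Together with the upper bound from the previous paragraph, this pins the best constant at exactly~$2$.
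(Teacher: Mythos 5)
Much of your outline is sound and close to the paper's own route: the stepwise argument for swap and discrete, the non-intrinsicness of $\ell_1$-positionwise via a pair at distance $6$ (not a multiple of the minimal step $4$), the inductive ``find a weight-$4$ rotation that decreases the discrepancy by at least $2$'' argument for $2$-intrinsicness of $\ell_1$-positionwise, and the lower bound of $m-1$ rotations for a cyclic difference pattern (the paper gets this from subadditivity of matrix rank, after a lemma allowing one to assume identity candidate matchings along the path; your uniform base matrix happens to neutralize the relabeling issue, but you should say so explicitly). The genuine gaps are in the three hardest pieces. First, for EMD-positionwise $2$-intrinsicness you only describe an ``enabling cascade plus amortization'' plan and yourself flag its key step as unproven; that step is precisely the technical core. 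The paper avoids a global amortization: it chooses the rotation cells $(c,r),(c',r')$ by a careful minimality argument on the cumulative matrices so that the single rotation provably decreases $\demdpos$ by exactly $2(r-r')$, and then proves a separate claim that this rotation can itself be realized by at most $2(r-r')$ elementary steps of distance $2$; without a lemma of comparable strength your factor-$2$ bound is not established.

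Second, your witness for ``not $\alpha$-intrinsic for any $\alpha<2$'' in the EMD case does not work. For $\ID$ versus $\AN$ the many columns with complementary needs can be paired productively: already for $m=4$ and large $n$, after a constant number of bootstrap steps one can run a ``conveyor'' in which every elementary step moves one unit one row closer to its target in each of the two columns it touches, giving a path of total length $\demdpos(\ID,\AN)+O(1)$, so the ratio tends to $1$, not $2$; the alleged ``monotonicity barrier'' is not forced when more than two columns must change. The paper instead takes $\ID$ against the election obtained by replacing a single vote with the cyclic shift $c_m \succ c_2 \succ \dots \succ c_{m-1} \succ c_1$: only the first and last columns differ, the two displaced units must travel $m-1$ rows in opposite directions and can share a step at most once, and tracking the positions of the last/first nonzero entries of these columns shows that at least $2m-3$ steps are needed against a distance of $2m-2$, which drives the ratio to $2$. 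Third, you never actually establish non-intrinsicness of pairwise and Bordawise: ``similar small explicit instances'' together with the NP-hardness of realizability is not an argument, since what must be shown is that no \emph{election} lies at minimal distance from both endpoints along a short enough path. The paper exhibits a concrete $3\times 2$ Bordawise pair and enumerates all candidate midpoints, and for pairwise it needed a $7$-candidate, $2$-voter pair whose midpoint-freeness was verified by exhaustive search; it is not clear that a hand-checkable pairwise example exists, so this part cannot be waved through.
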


\noindent
We conjecture that pairwise is not $\alpha$-intrinsic for any $\alpha \geq 1$.

\section{Summary}
We found that 
the EMD- and $\ell_1$-positionwise metrics are quite similar to the swap one (which,
computational issues aside, we view as ideal), but the EMD variant
seems better.
This justifies the choice of the EMD-positionwise metric for the maps
of \citet{szu-fal-sko-sli-tal:c:map} and
\citet{boe-bre-fal-nie-szu:c:compass}.  Yet, we ask for a metric that
would perform even better, especially on elections between the uniform
and antagonism ones.
With our study of intrinsicness, we have initiated an axiomatic analysis of our metrics; it would be interesting to extend this analysis to better understand their properties.

\section*{Acknowledgements}
NB was supported by the DFG project MaMu (NI 369/19) and by the DFG project ComSoc-MPMS (NI
369/22).
TW was supported by the Polish National Science Center grant 2018/31/B/ST6/03201.
This project has received funding from the European 
    Research Council (ERC) under the European Union’s Horizon 2020 
    research and innovation programme (grant agreement No 101002854).

\begin{center}
  \includegraphics[width=3cm]{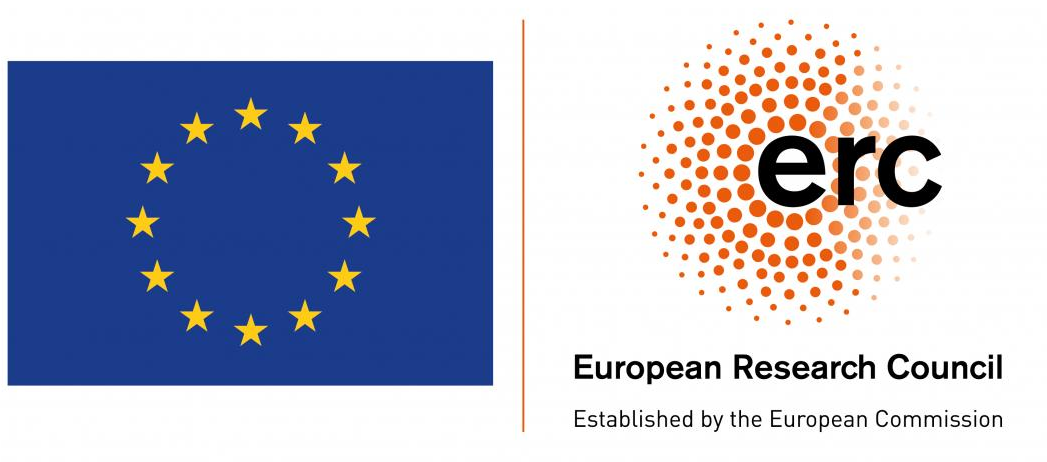}
\end{center}

\clearpage

\bibliographystyle{plainnat}

\clearpage

\appendix

\section[Proof of Theorem 3]{Proof of \Cref{thm:diameter}} \label{app:diameter}

In this section, we prove the following theorem: 
\begin{manualtheorem}{3}\label{th:repeated}
For each two elections $X$ and $Y$ (with an even number $m$ of candidates and  $n = t\cdot m!$ voters where~$t$ is some
positive integer) it holds that $d(X,Y) \leq d(\ID,\UN)$.
\end{manualtheorem}

We do this by separately proving the statement for each of our metrics. 
We first prove it in \Cref{lemma:diam:isomorphic} for the swap and discrete metric. 
Then, in \Cref{lemma:diam:l1-pos} for the $\ell_1$-positionwise metric. 
Subsequently, in \Cref{lemma:diam:emd-pos} for the EMD-positionwise metric, in \Cref{lemma:diam:pair} for the $\ell_1$-pairwise metric, and finally in \Cref{lemma:diam:Borda} for the Bordawise metric. 

\subsection{Swap and Discrete Isomorphic Metrics}

\begin{lemma}
\label{lemma:diam:isomorphic}
For each two elections $X$ and $Y$ with $m$ candidates and $t \cdot m!$ voters it holds that
\begin{align*}
    d_\swap(X,Y) &\le d_\swap(\ID,\UN), \quad \mbox{and}\\
    d_\disc(X,Y) &\le d_\disc(\ID,\UN).
\end{align*}
\end{lemma}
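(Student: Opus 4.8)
The plan is to establish both inequalities at once with a single averaging argument. For any two elections $X=(C,V)$ and $Y=(C',V')$ with $n$ voters and for $d\in\{d_\swap,d_\disc\}$, I would bound $d(X,Y)$ from above by the \emph{average} of $\sum_{i=1}^{n} d(\sigma(v_i),v'_{\rho(i)})$ over all renamings $\sigma\in\Pi(C,C')$ and all voter matchings $\rho\in S_n$; since $d(X,Y)$ is by definition the minimum of this quantity, it does not exceed its average. The crux is that this average is independent of $X$ and $Y$ and equals $d(\ID,\UN)$, which finishes the proof.

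First I would compute $d(\ID,\UN)$ directly. All $n=t\cdot m!$ votes of $\ID$ are copies of a single order $u$, so the voter matching is irrelevant, and for every $\sigma$ the cost equals $\sum_{w\in\calL(C)} t\cdot d(\sigma(u),w)=t\sum_{w\in\calL(C)} d(u',w)$, which is independent of the order $u'$ by the symmetry of $\calL(C)$. For $d_\disc$ exactly one $w$ equals $u'$, so this is $t(m!-1)=n\tfrac{m!-1}{m!}$; for $d_\swap$, a uniformly random order inverts each of the $\binom{m}{2}$ candidate pairs with probability $\tfrac{1}{2}$, so $\sum_{w} d_\swap(u',w)=m!\cdot\tfrac{1}{4}m(m-1)$ and the cost is $\tfrac{1}{4}n(m^2-m)$. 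Both agree with Table~\ref{tab:distances}.

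For the averaged cost of a general pair $X,Y$, linearity of expectation first removes $\rho$: averaging over a uniform $\rho\in S_n$ turns $\sum_i d(\sigma(v_i),v'_{\rho(i)})$ into $\tfrac{1}{n}\sum_{i,j} d(\sigma(v_i),v'_j)$. The key observation is that as $\sigma$ ranges over all $m!$ bijections from $C$ to $C'$, the image $\sigma(v_i)$ runs over every element of $\calL(C')$ exactly once; hence the average over $\sigma$ of $d(\sigma(v_i),v'_j)$ equals $\tfrac{1}{m!}\sum_{w\in\calL(C')} d(w,v'_j)$, which by the previous paragraph equals $\tfrac{m!-1}{m!}$ for $d_\disc$ and $\tfrac{1}{4}m(m-1)$ for $d_\swap$, and in particular is independent of $i$ and $j$. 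Summing this constant over the $n^2$ pairs $(i,j)$ and dividing by $n$ gives back exactly the value of $d(\ID,\UN)$ computed above, so $d(X,Y)\le d(\ID,\UN)$ for both metrics.

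I do not expect a real obstacle here; the two points to handle carefully are the bijective correspondence between a renaming $\sigma$ and its effect $\sigma(v_i)$ on a fixed vote --- which is exactly what collapses the average over renamings into an average over all of $\calL(C')$ --- and the routine check that the two resulting per-pair averages match the closed forms of $d_\disc(\ID,\UN)$ and $d_\swap(\ID,\UN)$.
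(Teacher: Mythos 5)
Your proof is correct and rests on the same averaging idea as the paper's: the paper bounds $d(X,Y)$ by $\sum_i d(\sigma(v_i),u_i)$, sums over all candidate bijections $\sigma$ (noting that for each vote this sum is exactly the identity-to-uniformity distance on $m!$ votes), and concludes by pigeonhole, which is precisely your average-over-renamings argument with the identity voter matching fixed. Your extra averaging over $\rho$ and the explicit closed-form values for $d(\ID,\UN)$ are harmless variations, not a different route.
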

\begin{proof}
Let $d$ be either $d_\swap$ or $d_\disc$.
Consider arbitrary elections $X = (C,V)$ and $Y = (D,U)$ with $|C|=|D|=m$ and $|V|=|U|= n =t \cdot m!$ for some $t \in \mathbb{N}$.
Let us denote $V = \{v_1,v_2,\dots,v_n\}$ and $U = \{ u_1,u_2,\dots,u_n\}$.
From the definition of isomorphic metrics, we get for each bijection $\sigma \in \Pi(C, D)$ the following bound for the distance between $X$ and $Y$:
\[
    d(X,Y) \le \sum_{i=1}^n d(\sigma(v_i),u_i).
\]
Now, if we take the right hand side of this inequality and sum it for all possible bijections $\sigma$, then by rearranging the order of summation we get:
\[
    \sum_{\sigma \in \Pi(C,D)} \sum_{i=1}^n d(\sigma(v_i),u_i) =
    \sum_{i=1}^n \sum_{\sigma \in \Pi(C,D)} d(\sigma(v_i),u_i).
\]
Observe that for each $i \in [n]$,  $\sum_{\sigma \in \Pi(C,D)} d(\sigma(v_i),u_i)$ is the sum of distances between one vote and all possible votes.
Hence, it is exactly the distance between identity and uniformity elections with $m$ candidates and $m!$ votes, which is $t$ times smaller than the distance between the considered elections $\ID$ and $\UN$ with $m$ candidates and $t \cdot m!$ votes.
Thus,
\begin{align*}
    \sum_{\sigma \in \Pi(C,D)} \sum_{i=1}^n d(\sigma(v_i),u_i) &=
    \sum_{i=1}^n \frac{1}{t} \cdot d(\ID,\UN) \\ &=
    \frac{n}{t} \cdot d(\ID,\UN).
\end{align*}
Hence, from pigeonhole principle, there exists a bijection $\sigma^*\in \Pi(C,D)$ such that:
\[
    \sum_{i=1}^n d(\sigma^*(v_i),u_i) \le \frac{n}{t \cdot |\Pi (C,D)|} \cdot d(\ID,\UN) = d(\ID,\UN).
\]
Inserting this into the original bound for the distance between $X$ and $Y$, we get:
\[
    d(X,Y) \le d(\ID,\UN).
\]
\end{proof}

\subsection[L1-Positionwise Metric]{$\boldsymbol{\ell_1}$-Positionwise Metric}

\begin{lemma}
\label{lemma:diam:l1-pos}
For each two elections $X$ and $Y$ with $m$ candidates and $t \cdot m!$ voters it holds that
\[
    \dellpos(X,Y) \le \dellpos(\ID,\UN).
\]
\end{lemma}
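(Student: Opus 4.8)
The plan is to mirror the EMD-positionwise argument: reduce to normalized position matrices and then establish the worst-case bound via an averaging-over-cyclic-shifts argument, except that the per-row estimate is simpler here because $\ell_1$ does not need passing to prefix sums.

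First I would recall that the $\ell_1$-positionwise metric works on position matrices and scales linearly: if $A$ and $B$ are $m\times m$ position matrices whose rows and columns all sum to $n'$, then $\dellpos(A,B) = n' \cdot \dellpos(A/n', B/n')$, and $A/n'$, $B/n'$ are bistochastic. Since the normalized position matrices of $\ID$ and $\UN$ are, up to renaming candidates, the identity $I$ and the matrix with all entries $\tfrac1m$, a one-line computation — $\ell_1(e_i, (\tfrac1m,\dots,\tfrac1m)) = 2(m-1)/m$ for each of the $m$ columns — gives $\dellpos(\ID,\UN) = 2n(m-1)$ (also recorded in Table~\ref{tab:distances}). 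Hence it suffices to prove that for every two $m\times m$ bistochastic matrices $X$ and $Y$ we have $\dellpos(X,Y) \le 2(m-1)$.

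To prove this bound, let $x_1,\dots,x_m$ and $y_1,\dots,y_m$ be the columns of $X$ and $Y$. For each $k\in[m]$, matching column $i$ of $X$ to column $i+k$ of $Y$ (indices taken cyclically) is a valid candidate matching, so $\dellpos(X,Y) \le \sum_{i\in[m]} \ell_1(x_i,y_{i+k})$; summing over $k$ and reindexing, $m\cdot\dellpos(X,Y) \le \sum_{i,j\in[m]} \ell_1(x_i,y_j)$. Now fix a position $p$ and set $a_i = X_{p,i}$, $b_j = Y_{p,j}$, so that $\sum_i a_i = \sum_j b_j = 1$ and all entries lie in $[0,1]$. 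Using $|a-b| = a+b-2\min(a,b)$ and the elementary inequality $\min(a,b)\ge ab$ valid for $a,b\in[0,1]$, we get $\sum_{i,j}|a_i-b_j| = 2m - 2\sum_{i,j}\min(a_i,b_j) \le 2m - 2\sum_{i,j}a_ib_j = 2m-2$. Summing over the $m$ positions yields $\sum_{i,j}\ell_1(x_i,y_j) \le 2m(m-1)$, hence $\dellpos(X,Y)\le 2(m-1)$, which is exactly the normalized value of $\dellpos(\ID,\UN)$.

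There is no genuine obstacle here beyond keeping the indices straight; the point worth highlighting is that, in contrast to EMD-positionwise, no ``cumulative rows'' structure is needed — the $\ell_1$ distance between two probability columns is already controlled row-by-row through $\min(a,b)\ge ab$. As in the remark following Theorem~\ref{thm:diameter}, this argument in fact works for any number of voters; the condition $m! \mid n$ in the lemma statement is needed only so that $\UN$ is well-defined.
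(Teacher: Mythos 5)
Your proof is correct and uses essentially the same argument as the paper: normalization to bistochastic matrices, upper-bounding the distance by the cyclic-shift matchings, and the combination of $|a-b|=a+b-2\min(a,b)$ with $\min(a,b)\ge ab$ together with the row sums equalling $1$. The only difference is cosmetic — you run the averaging over shifts directly (grouping by rows) instead of the paper's proof by contradiction with the optimal matching normalized to the identity.
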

\begin{proof}
The $\ell_1$-positionwise metric can be seen as working over matrices whose
entries are nonnegative and whose rows and columns sum up to the same value.
Let $A$ and~$B$ be two such matrices,
whose rows and columns sum up to some value $n'$.
Then, by $A/n'$ and $B/n'$ let us denote the same matrices,
but with their entries divided by $n'$.
Note that $\dellpos(A,B) = n' \cdot \dellpos(A/n',B/n')$.
From now on, we focus on matrices whose entries are nonnegative and whose rows and
columns sum up to $1$ (they are called \emph{bistochastic}).

It is easy to check that $\dellpos(\ID,\UN) = n(2m-2)$.
We prove that for each two $m \times m$ bistochastic matrices $X$ and $Y$
it holds that $\dellpos(X,Y) \le 2m-2$.
For the sake of contradiction, assume otherwise, i.e.,
there exist matrices $X$ and $Y$ such that $\dellpos(X,Y) > 2m-2$.
Let $x_1,\dots,x_m$ be the columns of matrix $X$ and
$y_1,\dots,y_m$ the columns of $Y$.
Without loss of generality, we assume that 
$\dellpos(X,Y) = \sum_{i \in m}\ell_1(x_i,y_i)$;
otherwise we could rearrange the columns of one of the matrices.
For each $i \in [m]$,
we write $x_{i,1},\dots,x_{i,m}$ to denote the entries of column $x_i$;
we use analogous notation for $y_i$.
Then, observe that
\begin{align}
\label{eq:l1-pos:diam:1}
    \dellpos(X,Y) 
        &= \sum_{i,j \in [m]} |x_{i,j} - y_{i,j}| \notag\\
        &= \sum_{i,j \in [m]} \big(\max(x_{i,j},y_{i,j}) - \min(x_{i,j},y_{i,j})\big) \notag\\
        &= \sum_{i,j \in [m]} (x_{i,j} + y_{i,j}) - 2 \sum_{i,j\in [m]} \min (x_{i,j},y_{i,j}) \notag\\
        &= \sum_{j \in [m]} 2 - 2 \sum_{i,j\in [m]} \min (x_{i,j},y_{i,j}) \notag\\
        &= 2m - 2 \sum_{i,j \in [m]} \min (x_{i,j},y_{i,j}).
\end{align}
Thus, if $\dellpos(X,Y) > 2m-2$, then it must hold that:
\begin{equation}
\label{eq:l1-pos:diam:2}
    \sum_{i,j \in [m]} \min (x_{i,j},y_{i,j}) < 1.
\end{equation}

In the following, for each $i,k \in [m]$,
if $i + k >m$ and $i+k$ is used as a column index,
then we take it to be $i + k - m$
(i.e., column indices ``cycle'').
For each $k \in [m]$,
we have $\dellpos(X,Y) \le \sum_{i \in [m]} \ell_1 (x_i,y_{i+k})$;
if this was not the case,
then our assumption that
$\dellpos(X,Y) = \sum_{i \in [m]} \ell_1 (x_i,y_i)$
would have been false.
Consequently, for every $k \in [m]$, repeating the reasoning from Eq.~\eqref{eq:l1-pos:diam:1} and Eq.~\eqref{eq:l1-pos:diam:2}, we get:
\[
    \sum_{i,j \in [m]} \min (x_{i,j},y_{i+k,j}) < 1.
\]
Observe that if $a, b \in [0,1]$,
then it holds that $a \cdot b \le \min(a,b)$.
Since for each $i,j \in [m]$,
we have $x_{i,j},y_{i,j} \in [0,1]$,
for each $k \in [m]$,
it holds that:
\[
    \sum_{i,j \in [m]} x_{i,j} \cdot y_{i + k,j} < 1.
\]
By summing this inequality sidewise for all $k \in \{1,\dots,m\}$,
we obtain:
\[
    \sum_{i,j \in [m]} x_{i,j} \cdot \sum_{k=1}^m y_{k,j} < m.
\]
Since $Y$ is bistochastic, we get that $\sum_{k=1}^m y_{k,j} = 1$.
Thus,
\[
    \sum_{i,j \in [m]} x_{i,j} < m,
\]
which contradicts the fact that $X$ is a bistochastic matrix, as the entries of a bistochastic matrix sum up to $m$.
Hence, for all $m \times m$ bistochastic matrices $X, Y$ we have
$\dellpos(X,Y) \le 2m -2$.
Thus, for all elections with $m$ candidates and $n = t \cdot m!$ voters,
their $\ell_1$-positionwise distance is at most $2n(m-1)$.
\end{proof}

\subsection[EMD-Positionwise Metric]{EMD-Positionwise Metric}
\begin{lemma}
\label{lemma:diam:emd-pos}
For each two elections $X$ and $Y$ with $m$ candidates and $t \cdot m!$ voters it holds that
\[
    \demdpos(X,Y) \le \demdpos(\ID,\UN).
\]
\end{lemma}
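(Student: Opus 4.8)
The plan is to mimic the structure we used for the $\ell_1$-positionwise case, but with $\ell_1$ replaced by its prefix-sum (earth mover's) variant. First I would observe that $\demdpos$ acts on position matrices with nonnegative entries and equal row/column sums, and that it scales linearly: if $A$ and $B$ both have row and column sums equal to $n'$, then $\demdpos(A,B) = n' \cdot \demdpos(A/n',B/n')$, where the normalized matrices $A/n'$ and $B/n'$ are bistochastic. Since \citet{boe-bre-fal-nie-szu:c:compass} have shown that $\demdpos(\ID,\UN) = \tfrac{1}{3}n(m^2-1)$, it suffices to prove the purely combinatorial claim that any two $m \times m$ bistochastic matrices $X$ and $Y$ satisfy $\demdpos(X,Y) \le (m^2-1)/3$.

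To prove the claim, fix an optimal column matching (WLOG the identity), write $x_1,\dots,x_m$ and $y_1,\dots,y_m$ for the columns, and let $\hat{x}_i,\hat{y}_i$ be their prefix-sum vectors with entries $\hat{x}_{i,j},\hat{y}_{i,j}$. Using $\emd(x_i,y_i) = \ell_1(\hat{x}_i,\hat{y}_i)$, the identity $|a-b| = a+b-2\min(a,b)$, and the ``cumulative rows'' property (the $j$-th row of the prefix-sum matrix sums to $j$, because each row of a bistochastic matrix sums to $1$), I would rewrite
\[
    \demdpos(X,Y) = m(m+1) - 2 \sum_{i,j \in [m]} \min(\hat{x}_{i,j},\hat{y}_{i,j}).
\]
Assuming for contradiction that $\demdpos(X,Y) > (m^2-1)/3$, this yields $\sum_{i,j}\min(\hat{x}_{i,j},\hat{y}_{i,j}) < (2m+1)(m+1)/6$.

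The crucial step — and the one I expect to be the main obstacle — is to upgrade this single inequality into a bound strong enough to contradict a known identity. Since the chosen matching is optimal, for every cyclic shift $k \in [m]$ we also have $\demdpos(X,Y) \le \sum_i \emd(x_i,y_{i+k})$ (column indices taken cyclically), so the same computation gives $\sum_{i,j}\min(\hat{x}_{i,j},\hat{y}_{i+k,j}) < (2m+1)(m+1)/6$ for all $k$. Because all prefix sums lie in $[0,1]$ and $ab \le \min(a,b)$ for $a,b \in [0,1]$, we may replace $\min(\hat{x}_{i,j},\hat{y}_{i+k,j})$ by the product $\hat{x}_{i,j}\cdot\hat{y}_{i+k,j}$, sum the resulting inequalities over all $k \in [m]$, and use $\sum_k \hat{y}_{k,j} = j$ (cumulative rows again) to obtain $\sum_{j=1}^m j^2 = \sum_{i,j}\hat{x}_{i,j}\cdot j < m(2m+1)(m+1)/6$, contradicting $\sum_{j=1}^m j^2 = m(2m+1)(m+1)/6$. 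Undoing the normalization, we conclude that every pair of elections with $m$ candidates and $n$ voters has EMD-positionwise distance at most $n(m^2-1)/3 = \demdpos(\ID,\UN)$; in particular the argument needs no divisibility assumption on $n$.
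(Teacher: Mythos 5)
Your proposal is correct and follows essentially the same argument as the paper: normalize to bistochastic matrices, rewrite the distance via prefix sums and the cumulative-rows identity, use optimality of the matching against all cyclic shifts, replace $\min$ by products, and sum over shifts to contradict $\sum_{j=1}^m j^2 = m(m+1)(2m+1)/6$. The observation that no divisibility assumption on $n$ is needed also matches the paper's remark following the theorem.
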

\begin{proof}
  The EMD-positionwise metric can be seen as working over matrices whose
  entries are nonnegative and whose rows and columns sum up to the
  same value.
  Let $A$ and~$B$ be two such matrices, whose rows and columns sum up
  to some value $n'$. Let $A/n'$ and $B/n'$ be the same matrices, but
  with their entries divided by $n'$.  Note that
  $\demdpos(A,B) = n' \cdot \demdpos(A/n',B/n')$.  From now on, we
  focus on matrices whose entries are nonnegative and whose rows and
  columns sum up to $1$ (they are called \emph{bistochastic}).

  \citet{boe-bre-fal-nie-szu:c:compass} have shown that
  $\demdpos(\ID,\UN) = \nicefrac{n(m^2-1)}{3}$.  We claim that for
  each two $m \times m$ bistochastic matrices
  $X$ and $Y$ it holds that $\demdpos(X,Y) \leq (m^2 -1)/3$. For the
  sake of contradiction, assume that the opposite holds.  Let
  $x_1, \ldots, x_m$ be the columns of $X$ and $y_1, \ldots, y_m$ be
  the columns of $Y$. Without loss of generality, we assume that
  $\demdpos(X,Y) = \sum_{i \in [m]} \emd(x_i,y_i)$; otherwise we could
  reorder the columns of one of the matrices.  By definition of the
  EMD metric, we have that
  $\sum_{i \in [m]} \emd(x_i,y_i) \textstyle = \sum_{i \in [m]}
  \ell_1(\hat{x}_i,\hat{y}_i)$.  For each $i \in [m]$, we write
  $\hat{x}_{i,1}, \ldots, \hat{x}_{i,m}$ to denote the entries of the
  cumulative vector $\hat{x}_i$; we use analogous notation for
  $\hat{y}_i$.
  Note that in the matrices with columns
  $\hat{x}_1, \ldots, \hat{x}_m$ and $\hat{y}_1, \ldots, \hat{y}_m$,
  for each $j \in [m]$, the $j$-th row sums up to $j$ (we refer to
  this as the \emph{cumulative rows} property).  Using these
  observations, we note that:
  \begin{align}
  \label{eq:emd-pos:diam:1}
    \sum_{i \in [m]} &\emd(x_i,y_i)
    = \sum_{i \in [m]} \ell_1(\hat{x}_i,\hat{y}_i) \notag\\ 
    &= \sum_{i,j \in [m]} |\hat{x}_{i,j} - \hat{y}_{i,j}| \notag\\
    &= \sum_{i,j \in [m]} \big(\max(\hat{x}_{i,j},\hat{y}_{i,j}) - \min(\hat{x}_{i,j},\hat{y}_{i,j})\big) \notag\\
    &= \sum_{i,j \in [m]} (\hat{x}_{i,j} + \hat{y}_{i,j}) - 2 \sum_{i,j \in [m]} \min (\hat{x}_{i,j},\hat{y}_{i,j}) \notag\\
    &= 2 \left( \sum_{j \in [m]} j \right) - 2 \sum_{i,j \in [m]} \min (\hat{x}_{i,j},\hat{y}_{i,j}) \notag\\
    &= m(m+1) - 2 \sum_{i,j \in [m]} \min (\hat{x}_{i,j},\hat{y}_{i,j}).
  \end{align}
  Thus, if $\demdpos(X,Y) > (m^2-1)/3$, then it must hold that:
  \begin{align}
  \label{eq:emd-pos:diam:2}
    \sum_{i,j \in [m]} &\min (\hat{x}_{i,j},\hat{y}_{i,j})  < 
    \frac{1}{2} \left( m(m+1) - \frac{m^2-1}{3}\right) \notag\\
    & = (2m^2 + 3m +1)/6 = (2m+1)(m+1)/6.
  \end{align}
  
  In the following, for each $i,k \in [m]$ with $i+k > m$, if $i+k$ is used as a column
  index, then we take it to be $i+k-m$ (i.e., column indices
  ``cycle'').
  For each $k \in [m]$, we have
  $\demdpos(X,Y) \leq \sum_{i \in [m]} \emd(x_i,y_{i+k})$; if this
  were not the case, then our assumption that
  $\demdpos(X,Y) = \sum_{i \in [m]} \emd(x_i,y_i)$ would have been
  false.
  Consequently, repeating the reasoning from Eq.~\eqref{eq:emd-pos:diam:1} and Eq.~\eqref{eq:emd-pos:diam:2} for every $k\in [m]$, we get: 
  \[
    \textstyle
    \sum_{i,j \in [m]} \min(\hat{x}_{i,j}, \hat{y}_{i+k,j}) < (2m+1)(m+1)/6,
  \]
  If $a, b \in [0,1]$ then $a \cdot b \le \min(a,b)$.  As for each
  $i, j \in [m]$ we have $\hat{x}_{i,j},\hat{y}_{i,j} \in [0,1]$, for
  each $k \in [m]$ we have:
  \[
    \textstyle \sum_{i,j \in [m]} \hat{x}_{i,j} \cdot \hat{y}_{i+k,j} <
    (2m+1)(m +1)/6.
  \]
  By summing this inequality sidewise for all $k \in [m]$, we get:
  \[
    \textstyle
    \sum_{i,j \in [m]} \hat{x}_{i,j} \cdot \sum_{k \in [m]} \hat{y}_{k,j} < m(2m+1)(m+1)/6.
  \]
  By applying the cumulative rows property, we obtain:
  \[
    \textstyle \sum_{j=1}^m j^2 < m(2m+1)(m+1)/6.
  \]
  Since we know that $\sum_{j=1}^m j^2 = {m(2m+1)(m+1)}/{6}$, this is
  a contradiction. Hence, for all $m \times m$ bistochastic matrices
  $X$, $Y$ we have $\demdpos(X,Y) \leq (m^2 -1)/3$. Thus, for all
  elections with $m$ candidates and $n$ voters, their EMD-positionwise
  distance is at most $n(m^2-1)/3$.
\end{proof}

\subsection{Pairwise Metric}

\begin{lemma}
\label{lemma:diam:pair}
For each two elections $X$ and $Y$ with $m$ candidates and $t \cdot m!$ voters it holds that
\[
    d_\pair(X,Y) \le d_\pair(\ID,\UN).
\]
\end{lemma}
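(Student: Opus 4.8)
The plan is to mirror the normalisation strategy of the preceding lemmas and then replace the cyclic-shift averaging by an averaging over the whole symmetric group. First I would record three elementary facts about $d_\pair$: it depends only on the weighted majority relations; it scales linearly under uniform rescaling, so $d_\pair(X,Y) = n\cdot d_\pair\!\big(\calM_X/n,\ \calM_Y/n\big)$; and it is unchanged if the same constant is subtracted from every off-diagonal entry of both matrices. Consequently it suffices to work with \emph{normalised relative} weighted majority relations, i.e.\ $m\times m$ matrices $R$ with zero diagonal, $R(a,b)=-R(b,a)$, and all entries in $[-\tfrac12,\tfrac12]$. For $\ID$ every off-diagonal entry of this matrix is $\pm\tfrac12$ and for $\UN$ it is $0$, so $d_\pair(\ID,\UN)=n\cdot\binom{m}{2}=\tfrac12 n(m^2-m)$, and the lemma reduces to the purely matrix-theoretic claim that $d_\pair(R_X,R_Y)\le\binom{m}{2}$ for \emph{any} two such matrices $R_X,R_Y$ over $m$ candidates (realisability as elections will not be used).

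For that claim, assume without loss of generality that the identity matching is optimal, so $d_\pair(R_X,R_Y)=\sum_{a\neq b}|R_X(a,b)-R_Y(a,b)|$, while for every $\sigma\in S_m$ we still have $d_\pair(R_X,R_Y)\le\sum_{a\neq b}|R_X(a,b)-R_Y(\sigma(a),\sigma(b))|$. Summing this over all $\sigma\in S_m$ and using that each ordered pair $(i,j)$ with $i\neq j$ is produced by exactly $(m-2)!$ permutations gives
\[
  m!\cdot d_\pair(R_X,R_Y)\ \le\ (m-2)!\sum_{a\neq b}\ \sum_{i\neq j}\bigl|R_X(a,b)-R_Y(i,j)\bigr|.
\]
(This is the step where cyclic shifts alone would be too weak: by skew-symmetry they only deliver the bound $m(m-1)$, off by a factor of two.) To bound the double sum I would group the ordered pairs into the $\binom{m}{2}$ unordered ones; the two orientations of an $X$-pair carry values $u$ and $-u$, those of a $Y$-pair carry $v$ and $-v$, and the four resulting terms sum to $|u-v|+|u+v|+|{-u}-v|+|{-u}+v| = 2\bigl(|u-v|+|u+v|\bigr) = 4\max(|u|,|v|)$, using the identity $|u-v|+|u+v|=2\max(|u|,|v|)$. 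Since $|u|,|v|\le\tfrac12$, each such block is at most $2$, so the double sum is at most $2\binom{m}{2}^2$ and therefore $d_\pair(R_X,R_Y)\le 2\binom{m}{2}^2/(m(m-1))=\binom{m}{2}$. Rescaling by $n$ yields $d_\pair(X,Y)\le n\binom{m}{2}=d_\pair(\ID,\UN)$.

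The only genuinely non-routine point is recognising that the cyclic-shift trick used in the positionwise proofs does not transfer here and that one must average over the full group $S_m$; after that the computation is short, and --- exactly as in the positionwise cases --- the rescaling step shows that the hypotheses ``$m$ even'' and ``$n=t\cdot m!$'' are needed only to make $\UN$ well defined and can otherwise be dropped.
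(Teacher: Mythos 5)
Your proof is correct, and it takes a genuinely different route from the paper's. The paper also reduces to normalized matrices (there, nonnegative ones with $x_{i,j}+x_{j,i}=1$), but then argues by contradiction: assuming the identity matching is optimal, it pairs the identity comparison with a comparison against the transposed matrix $y_{j,i}$ (the majority matrix of the election with all votes reversed), turns both bounds, via $\min(a,b)\ge ab$ and $y_{i,j}+y_{j,i}=1$, into $\sum_{i\neq j}x_{i,j}<m(m-1)/2$, contradicting that these entries sum to exactly $m(m-1)/2$. You instead average the defining inequality $d_\pair(R_X,R_Y)\le\sum_{a\neq b}|R_X(a,b)-R_Y(\sigma(a),\sigma(b))|$ over all $\sigma\in S_m$ and exploit skew-symmetry of the relative margins through $|u-v|+|u+v|=2\max(|u|,|v|)\le 1$. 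Each approach buys something: the paper's two-comparison argument is shorter and structurally parallel to its positionwise diameter proofs, whereas your full-group averaging uses only comparisons that are literal instances of the minimization over candidate matchings, so it never needs the additional (and less obvious) justification that comparing against the transpose---which corresponds to reversing votes, not to any candidate matching---still upper-bounds $d_\pair(X,Y)$; moreover, as you note, it applies to arbitrary skew-symmetric matrices with entries in $[-\tfrac12,\tfrac12]$ with no realizability assumption, consistent with the paper's remark that the pairwise diameter bound holds for any number of voters. Your side remark is also accurate: cyclic shifts alone, as used in the positionwise proofs, only give the bound $m(m-1)$, so some extra device (your averaging over all of $S_m$, or the paper's transpose comparison) is indeed needed to gain the missing factor of two.
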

\begin{proof}
The pairwise metric can be seen
as working over matrices whose entries are nonnegative,
the entries on the diagonal are zero, and
the sum of entries symmetrical with respect to the diagonal is constant, i.e.,
there is some $n'$ such that $x_{i,j} + x_{j,i} = n'$,
for each $i,j \in [m]$, $i \neq j$.
Let $A$ and $B$ be two such matrices and
by $A/n'$ and $B/n'$ we denote the same matrices but 
with entries divided by $n'$.
Note that $d_\pair (A, B) = n' \!\cdot  d_\pair(A / n' \! , B/n')$.
From now on, we focus on matrices with nonnegative entries,
zeros on the diagonal,
and s.t. entries symmetrical with respect to the diagonal sum up to 1
(let us call them \emph{normalized pairwise} matrices).

Observe that $d_\pair(\ID,\UN) = n(m^2-m)/2$.
We show that for each two $m \times m$ normalized pairwise matrices $X$ and $Y$ it holds that $d_\pair(X,Y) \le (m^2 - m)/2$.
For the sake of contradiction, assume that there exist two normalized pairwise matrices $X$ and $Y$ such that $d_\pair(X,Y) > (m^2 - m)/2$.
Let $x_1,\dots,x_m$ be the columns of matrix $X$ and $y_1,\dots,y_m$ the columns of $Y$.
Also, for each $i \in [m]$, let $x_{i,1},\dots,x_{i,m}$ be the entries of column $x_i$; we use analogous notation for $y_i$ and the columns of matrices introduced later on.
Without loss of generality, we assume that $d_\pair(X,Y) = \sum_{i,j \in [m], i\neq j} |x_{i,j} - y_{i,j}|$;
otherwise we can change the order of columns and their corresponding rows.
Thus, we have:
\begin{align}
\label{eq:pair:diam:1}
    d_\pair(X,Y) &= \sum_{\substack{i, j \in [m] \\ i \neq j}} | x_{i,j} - y_{i,j}| \notag\\ 
           &= \sum_{\substack{i, j \in [m] \\ i \neq j}} \big( \max(x_{i,j},y_{i,j}) - \min(x_{i,j},y_{i,j}) \big) \notag\\
           &= \sum_{\substack{i, j \in [m] \\ i \neq j}} (x_{i,j} + y_{i,j}) - 
           2 \sum_{\substack{i, j \in [m] \\ i \neq j}} \min(x_{i,j},y_{i,j}) \notag\\
           &= m(m-1) - 2 \sum_{\substack{i, j \in [m] \\ i \neq j}} \min(x_{i,j},y_{i,j}).
\end{align}
Hence, if $d_\pair(X,Y) > (m^2-m)/2$, we get that:
\begin{align}
\label{eq:pair:diam:2}
    \sum_{\substack{i, j \in [m] \\ i \neq j}} \min(x_{i,j},y_{i,j})
    &<  \frac{1}{2}(m(m-1) - (m^2 -m)/2) \notag \\
    &=  m(m - 1)/4.
\end{align}
Now, consider matrix $Y'$ with columns $y'_1,\dots,y'_m$ given by
$y'_{i,j} = y_{j,i}$ (i.e., matrix $Y'$ is a transposition of matrix $Y$).
Observe that $Y'$ is still a normalized pairwise matrix and it corresponds to the same elections as $Y$, but with the ordering of candidates reversed.
Hence, it must hold that
\(
    d_\pair(X,Y) \le \sum_{i,j \in [m], i \neq j}|x_{i,j} - y'_{i,j}|;
\)
otherwise our assumption that $d_\pair(X,Y) = \sum_{i,j \in [m], i \neq j}|x_{i,j} - y_{i,j}|$ would have been false.
Thus, repeating the reasoning from Eq.~\eqref{eq:pair:diam:1} and Eq.~\eqref{eq:pair:diam:2} we get:
\[
    \sum_{\substack{i, j \in [m] \\ i \neq j}} \min(x_{i,j},y'_{i,j}) < m(m-1)/4.
\]
Recall that $y'_{i,j} = y_{j,i}$. Thus, adding this inequality sidewise to Eq.~\eqref{eq:pair:diam:2} we get that:
\[
    \sum_{\substack{i, j \in [m] \\ i \neq j}} (\min(x_{i,j},y_{j,i}) + \min(x_{i,j},y_{i,j})) < m(m-1)/2.
\]
If $a, b \in [0,1]$ then $a \cdot b \le \min(a,b)$.
As for each
$i, j \in [m]$ we have $x_{i,j},y_{i,j} \in [0,1]$, we obtain:
\[
    \sum_{\substack{i, j \in [m] \\ i \neq j}} x_{i,j} \cdot (y_{j,i} + y_{i,j}) < m(m-1)/2.
\]
$Y$ is a normalized pairwise matrix, hence $y_{j,i} + y_{i,j}=1$ for every $i \neq j$ and we get that: 
\[
    \sum_{\substack{i, j \in [m] \\ i \neq j}} x_{i,j} < m(m-1)/2.
\]
Since $X$ is also a normalized pairwise matrix, we know that
$\sum_{i,j \in [m],i \neq j} x_{i,j} = m(m-1)/2$.
Thus, we arrive at a contradiction.
Hence, for all $m \times m$ normalized pairwise matrices $X,Y$ we have
$d_\pair (X,Y) \le (m^2 -m)/2$.
Therefore, for all elections with $m$ candidates and $n$ voters, their pairwise distance is at most $n(m^2-m)/2$.
\end{proof}

\subsection{Bordawise Distance}
\begin{lemma}
\label{lemma:diam:Borda}
For each two elections $X$ and $Y$ with $m$ candidates and $n = t \cdot m!$ voters it holds that:
\[
    d_\BOR(X,Y) \leq d_\BOR(\ID,\UN).
\]
\end{lemma}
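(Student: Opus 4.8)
The plan is to bypass the bistochastic-matrix machinery used for the other metrics and instead argue directly with Borda score vectors, exploiting the prefix-sum characterization $\emd(x,y)=\ell_1(\hat x,\hat y)$. First I would pin down the two reference vectors explicitly. In $\ID$ one candidate is ranked first in every vote, another second in every vote, and so on, so $\sort(\calB_\ID)=(n(m-1),n(m-2),\dots,n,0)$; in $\UN$ (here the assumption $m!\mid n$ matters) every candidate has the same Borda score, equal to $\tfrac1m$ of the total score $S:=n\binom m2=\tfrac{nm(m-1)}{2}$, so $\sort(\calB_\UN)=(\tfrac{n(m-1)}{2},\dots,\tfrac{n(m-1)}{2})$. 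Writing $\hat a^\ID_j$ and $\hat a^\UN_j$ for the $j$-th prefix sums of these two sorted vectors, we get $\hat a^\ID_j=n\bigl(jm-\tfrac{j(j+1)}2\bigr)$ and $\hat a^\UN_j=\tfrac jm S=\tfrac{jn(m-1)}2$, and in particular $\hat a^\ID_m=\hat a^\UN_m=S$.

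The core of the argument is an extremality property: for an arbitrary election $E=(C,V)$ with $m$ candidates and $n$ voters, writing $a=\sort(\calB_E)$ and $\hat a_j=a_1+\dots+a_j$, we have $\hat a^\UN_j\le \hat a_j\le \hat a^\ID_j$ for every $j\in[m]$. The upper bound holds because $\hat a_j=\max_{|T|=j}\sum_{c\in T}\calB_E(c)=\max_{|T|=j}\sum_{i=1}^n\sum_{c\in T}\bigl(m-\pos_{v_i}(c)\bigr)\le\sum_{i=1}^n\bigl((m-1)+(m-2)+\dots+(m-j)\bigr)=\hat a^\ID_j$, since within a single vote the best $j$ candidates occupy the top $j$ positions and contribute exactly $(m-1)+\dots+(m-j)$. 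The lower bound holds because the average of the $j$ largest entries of a nonincreasing vector is at least the global average, i.e. $\hat a_j\ge\tfrac jm\sum_{k=1}^m a_k=\tfrac jmS=\hat a^\UN_j$.

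Given this, I would finish as follows. For any two elections $X,Y$ of the stated size, put $a=\sort(\calB_X)$ and $b=\sort(\calB_Y)$; both have nonnegative entries summing to $S$, so $d_\BOR$ is well-defined and $\emd(a,b)=\sum_{j=1}^m|\hat a_j-\hat b_j|=\sum_{j=1}^{m-1}|\hat a_j-\hat b_j|$, the $j=m$ term vanishing because both prefix sums equal $S$. By the extremality property, both $\hat a_j$ and $\hat b_j$ lie in $[\hat a^\UN_j,\hat a^\ID_j]$, hence $|\hat a_j-\hat b_j|\le \hat a^\ID_j-\hat a^\UN_j$ for each $j$. Summing,
\[
d_\BOR(X,Y)=\emd(a,b)\le\sum_{j=1}^{m-1}\bigl(\hat a^\ID_j-\hat a^\UN_j\bigr)=\sum_{j=1}^{m}\bigl(\hat a^\ID_j-\hat a^\UN_j\bigr)=\ell_1(\hat a^\ID,\hat a^\UN)=d_\BOR(\ID,\UN),
\]
where the penultimate step again uses that the $j=m$ summand is zero.

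The only slightly delicate point — the one I would state carefully — is the upper half of the extremality property: it rests on commuting the ``choose the best size-$j$ set of candidates'' maximum past the sum over votes (which only goes one way, since a single globally-best set is no better than picking the locally best positions vote by vote) and on the observation that taking the top $j$ positions in one vote yields precisely the $\ID$ contribution. Everything else is elementary bookkeeping; note in particular that no candidate matching enters, since $d_\BOR$ already sorts the score vectors, and that the full force of $n=t\cdot m!$ is used only to guarantee that $\UN$ is exactly the Borda-flat election.
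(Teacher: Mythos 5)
Your proof is correct, and it takes a genuinely different route from the paper's. The paper argues via an invariance-under-swaps lemma: it shows that a single adjacent swap leaves $d_\BOR(E,\ID)+d_\BOR(E,\UN)$ unchanged, concluding that every election lies exactly on the $\ID$--$\UN$ diameter, and then finishes with an averaging argument (one of the two sums $d_\BOR(\ID,X)+d_\BOR(\ID,Y)$ or $d_\BOR(\UN,X)+d_\BOR(\UN,Y)$ is at most $d_\BOR(\ID,\UN)$) plus the triangle inequality. You instead prove a coordinatewise sandwich on prefix sums of sorted Borda score vectors, $\hat a^{\UN}_j \le \hat a_j \le \hat a^{\ID}_j$ for every $j\in[m]$, and bound each term of $\emd(a,b)=\sum_j|\hat a_j-\hat b_j|$ by the interval width $\hat a^{\ID}_j-\hat a^{\UN}_j$; both halves of the sandwich are sound (the upper half by bounding each vote's contribution of any $j$ candidates by $(m-1)+\dots+(m-j)$, the lower half by the top-$j$-average argument), and the nonnegativity of $\hat a^{\ID}_j-\hat a^{\UN}_j=\tfrac{nj(m-j)}{2}$ makes the final identification with $\ell_1(\hat a^{\ID},\hat a^{\UN})=d_\BOR(\ID,\UN)$ legitimate. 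Your route is more direct: it avoids the swap-induction and the triangle inequality, and it in fact recovers the paper's stronger observation for free, since the same sandwich lets you split the absolute values and get $d_\BOR(E,\ID)+d_\BOR(E,\UN)=d_\BOR(\ID,\UN)$ for every election $E$, i.e., that under the Bordawise metric every election lies on the diameter --- a fact the paper uses elsewhere and obtains only through its swap argument. As you note, both approaches use $m!\mid n$ only so that $\UN$ is well defined.
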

\begin{proof}
  We first show that for every two elections $E$ and $E'$, such that $E'$ can be obtained
  from $E$ by a single swap of adjacent candidates in a single vote, it holds that:
  \begin{align}\nonumber
    d_\BOR(E,\ID) &+ d_\BOR(E,\UN) =\\ &d_\BOR(E',\ID) + d_\BOR(E',\UN).
      \label{eq:borda-calc}              
  \end{align}
  In other words, we will show that the sum of the distances of an
  election from ID and UN is constant under the Bordawise metric (and
  equal to $d_\BOR(\ID,\UN)$ as every election can be obtained from
  $\ID$ by sufficiently many swaps). Let the sorted Borda score vector
  of election $E$ be $z = (z_1, \ldots, z_m)$.  Without loss of
  generality, we assume that there are numbers $i, j \in [m]$,
  $i < j$, such that the sorted Borda score vector of~$E'$ is
  $z' = (z'_1, \ldots, z'_m)$, where (a) $z'_i = z_i + 1$, (b)
  $z'_j = z_j - 1$, (c) for each $k \in [m] \setminus \{i,j\}$,
  $z'_k = z_k$.  Indeed, a single swap of adjacent candidates can only
  increase the score of one candidate by one point and decrease the
  score of another one by the same value (note that we can swap the
  roles of $E$ and $E'$, to ensure that the assumption that $i < j$ is
  correct).  Let $\hat{z} = (\hat{z}_1, \ldots, \hat{z}_m)$,
  $\hat{z}' = (\hat{z}'_1, \ldots, \hat{z}'_m)$ be the prefix-sum
  variants of vectors $z$ and $z'$.  For each
  $t \in [m] \setminus \{i, \ldots, j-1\}$, we have that
  $\hat{z}'_t = \hat{z}_t$, and for each $t \in \{i, \ldots, j-1\}$, we
  have that $\hat{z}'_t = \hat{z}_t+1$.
  
  Let us now consider the value $d_\BOR(E,\UN) - d_\BOR(E',\UN)$. The
  sorted Borda score vector of $\UN$ is $u = (u_1, \ldots, u_m)$,
  where for each $t \in [m]$, $u_t = \frac{1}{2}n(m-1)$.  Let
  $\hat{u} = (\hat{u}_1, \ldots, \hat{u}_m)$ be its prefix-sum
  variant. For each $t \in [m]$, we have that
  $\hat{u}_t = \frac{1}{2}tn(m-1)$. Since vectors $z$, $z'$, and $u$
  are sorted non-increasingly and sum up to the same value, by a simple counting
  argument we see that for each $t \in [m]$, we have
  $\hat{z}_t \geq \hat{u}_t$ and $\hat{z}'_t \geq \hat{u}_t$.  Using
  this fact, the relation between $\hat{z}$ and $\hat{z}'$, and the
  definition of the EMD distance, we make the following calculations:
  \begin{align*}
    d_\BOR(E,\UN) &- d_\BOR(E',\UN)  = \emd(z,u) - \emd(z',u) \\
      &= \ell_1(\hat{z},\hat{u}) -  \ell_1(\hat{z}',\hat{u}) \\
      &= \textstyle \sum_{t=1}^m |\hat{z}_t-\hat{u}_t| - \sum_{t=1}^m |\hat{z}'_t-\hat{u}_t|\\
      &= \textstyle \sum_{t=i}^{j-1} |\hat{z}_t-\hat{u}_t| - \sum_{t=i}^{j-1} |\hat{z}'_t-\hat{u}_t|\\
      &= \textstyle \sum_{t=i}^{j-1} \big( |\hat{z}_t-\hat{u}_t| - |\hat{z}'_t-\hat{u}_t|\big)\\
      &= \textstyle \sum_{t=i}^{j-1} \big( |\hat{z}_t-\hat{u}_t| - |\hat{z}_t+1-\hat{u}_t|\big)\\
      &= i-j.
  \end{align*}

  Next, we calculate $d_\BOR(E,\ID) - d_\BOR(E',\ID)$. Let the sorted
  Borda score vector of $\ID$ be $v = (v_1, \ldots, v_m)$ and let its
  prefix sum variant be $\hat{v} = (\hat{v}_1, \ldots,
  \hat{v}_m)$. For each $t \in [m]$, we have that $v_t =
  n(m-t)$. Further, for each $t \in [m]$ we have that
  $\hat{v}_t \geq \hat{z}_t$ and $\hat{v}_t \geq \hat{z}'_t$ (too see
  this, observe that $\hat{v}_t$ is a sum of the $t$ highest possible
  Borda scores, multiplied by $n$). Thus we have the following
  calculations (very similar to the previous ones):
  \begin{align*}
    d_\BOR(E,\ID) &- d_\BOR(E',\ID)  = \emd(z,v) - \emd(z',v) \\
      &= \ell_1(\hat{z},\hat{v}) - \ell_1(\hat{z}',\hat{v}) \\
      &= \textstyle \sum_{t=1}^m |\hat{z}_t-\hat{v}_t| - \sum_{t=1}^m |\hat{z}'_t-\hat{v}_t|\\
      &= \textstyle \sum_{t=i}^{j-1} \big( |\hat{z}_t-\hat{v}_t| - |\hat{z}_t+1-\hat{v}_t|\big)\\
      &= j-i.
  \end{align*}
  (The final equality is subtle. It follows by noting that for
  $t \in \{i, \ldots, j-1\}$, it must be that $\hat{z}_t < \hat{v}_t$,
  which, itself, follows from the fact that
  $\hat{z}_t+1 = \hat{z}'_t \leq \hat{v}_t$.)
  Taken together, the above calculations show that:
  \begin{align*}
    d_\BOR(E,\ID)  &- d_\BOR(E',\ID) =\\ &d_\BOR(E',\UN) - d_\BOR(E,\UN),
  \end{align*}
  which is equivalent to Eq.~\eqref{eq:borda-calc}, and which shows
  that the sum of the distances from a given election to ID and UN is
  equal to $d_\BOR(\ID,\UN)$.

  Finally, we show that this means that for every two elections $X$
  and $Y$ we have $d_\BOR(X,Y) \leq d_\BOR(\ID,\UN)$. By the previous
  reasoning, we have that:
  \begin{align*}
    d_\BOR(\ID, X) + d_\BOR(X,\UN) & = d_\BOR(\ID,\UN), \text{ and}\\
    d_\BOR(\ID, Y) + d_\BOR(Y,\UN) & = d_\BOR(\ID,\UN). \numberthis \label{eq:borda}
  \end{align*}
  We claim that one of the following two inequalities  holds:
  \begin{align*}
    d_\BOR(\ID, X) + d_\BOR(\ID, Y) & \leq d_\BOR(\ID,\UN), \text{ or} \\
    d_\BOR(\UN, X) + d_\BOR(\UN, Y) & \leq d_\BOR(\ID,\UN). \numberthis \label{eq:borda2}
  \end{align*}
 If both inequalities do not hold, then we can sum them up arriving at:
  \begin{align*}
    d_\BOR(\ID, X) &+ d_\BOR(\ID, Y) +  d_\BOR(\UN, X) + \\
     &d_\BOR(\UN, Y)  > 2\cdot d_\BOR(\ID,\UN),
  \end{align*}
  which contradicts Eq.~\ref{eq:borda}. 
  Thus, one of the inequalities from Eq.~\ref{eq:borda2} holds and by triangle inequality, we
  have $d_\BOR(X,Y) \leq d_\BOR(\ID,\UN)$. 
\end{proof}
    
\medskip 

From \Cref{lemma:diam:isomorphic,lemma:diam:l1-pos,lemma:diam:emd-pos,lemma:diam:pair,lemma:diam:Borda}, \Cref{th:repeated} follows. 

\section[Proof of Theorem 4]{Proof of \Cref{sec:graphs}} \label{app:graphs}
In this section, we prove the following theorem: 
\intrinsic*

We split the proof in three sections. 
In \Cref{sub:int-disswap}, we prove that swap and discrete metrics are intrinsic. 
In \Cref{sub:int-pairBord}, we show that pairwise and Bordawise metrics are not intrinsic. 
Finally in \Cref{sub:int-pos}, we prove that $\ell_1$- and EMD-positionwise are both $2$-intrinsic but not $\alpha$-intrinsic for any $\alpha<2$ (which, in particular proves that both violate intrinsicness).  

\subsection{Discrete and Swap}\label{sub:int-disswap}
\begin{lemma} \label{le:intrin}
  Swap and discrete metrics are intrinsic.
\end{lemma}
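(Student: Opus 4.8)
The plan is to exhibit, for each of the two metrics, an explicit ``unit-step'' path from $X$ to $Y$ whose total length matches $d(X,Y)$ exactly. First I would record that the smallest nonzero distance is $1$ for both metrics: $d_\disc$ and $d_\swap$ take only nonnegative integer values (the former trivially, the latter because each $d_\swap(u,v)$ counts inversions), both vanish precisely on isomorphic pairs, and the value $1$ is attained, e.g., by two elections that agree on all but one vote (for $d_\disc$) or differ by a single adjacent transposition in one vote (for $d_\swap$).

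For the discrete metric, let $\ell = d_\disc(X,Y)$. By definition there are a candidate bijection $\sigma$ and a vote permutation $\rho$ witnessing this value, so after relabelling the candidates of $X$ by $\sigma$ and reindexing its votes by $\rho$ we may assume $X = (C,(v_1,\dots,v_n))$, $Y = (C,(u_1,\dots,u_n))$, and $v_i \neq u_i$ for exactly $\ell$ indices $i$. I would then walk from $X$ to $Y$ by overwriting these $\ell$ votes one at a time, producing elections $X = E_0, E_1, \dots, E_\ell = Y$, all with candidate set $C$ and $n$ voters. Using the identity candidate and vote matchings, $d_\disc(E_{j-1},E_j) \le 1$ for every $j$, so the path has total length at most $\ell$; on the other hand the iterated triangle inequality gives total length $\ge d_\disc(X,Y) = \ell$. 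Hence the total is exactly $\ell$, and since it is a sum of $\ell$ terms each at most $1$, every term equals $1$ — i.e., every step is at the smallest nonzero distance and the path is as required.

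For the swap metric the argument is identical except that a single vote change must itself be broken into unit swaps. Let $\ell = d_\swap(X,Y)$ and, as above, relabel so that $\sum_{i=1}^n d_\swap(v_i,u_i) = \ell$. For each $i$ I would invoke the classical fact that the Kendall tau distance equals the minimum number of adjacent transpositions needed to turn $v_i$ into $u_i$, namely exactly $d_\swap(v_i,u_i)$ of them (bubble sort, each inversion resolved once); concatenating these local swap sequences over $i = 1, \dots, n$ yields a path $X = E_0, E_1, \dots, E_\ell = Y$ in which consecutive elections differ by one adjacent transposition in one vote. Again $d_\swap(E_{j-1},E_j)\le 1$ for all $j$ (identity matchings), the triangle inequality forces total length $\ge \ell$, and the same counting argument forces every step to have distance exactly $1$.

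The only delicate point is making the unit-step decomposition tight: one must rule out ``wasted'' steps of distance $0$ (distinct but isomorphic intermediate elections), and the global counting argument — total length $\le \ell$ from the explicit path, combined with $\ge \ell$ from the triangle inequality — handles this uniformly for both metrics, so no step-by-step case analysis is needed. For swap, the other ingredient is the standard inversion-counting/bubble-sort identity, which guarantees that transforming $v_i$ into $u_i$ costs precisely $d_\swap(v_i,u_i)$ adjacent swaps rather than more.
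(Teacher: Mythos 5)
Your proposal is correct and follows essentially the same route as the paper's proof: fix the optimal candidate/vote matchings, then walk from $X$ to $Y$ by replacing the differing votes one at a time (discrete) or by decomposing the transformation into single adjacent swaps (swap), and use the triangle inequality to conclude that every step has distance exactly the minimal nonzero value. Your version only spells out more explicitly the bubble-sort/inversion-counting identity and the counting argument ruling out zero-distance steps, which the paper states more tersely.
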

\begin{proof}
  For the swap metric, by definition of the metric, there is a sequence of $k$ swaps of adjacent candidates in some votes that transforms $E_s$ into an election that is isomorphic to $E_t$. With $E_i$ being the election that arises after performing the first $i$ swaps, intrinsicness follows. 
  
  For the discrete metric, given $E_s=(C_s,V_s=(v_1,\dots v_n))$ and $E_t=(C_t,V_t=(v'_1,\dots v'_n))$, let $\sigma\in \Pi (C_s,C_t)$ be the candidate and $\rho\in S_n$ be the voter mapping witnessing the discrete distance between $E_s$ and $E_t$ and let $v_{i_1}, \dots, v_{i_k}\in V_s$ be the votes that contribute one to the discrete distance between $E_s$ and $E_t$ under $\sigma$ and $\rho$. 
  Further, let $V'_s:=V_s\setminus \{v_{i_1}, \dots, v_{i_k}\}$. For some vote $v_i\in V_s$ let $\tau(v_i)$ be the vote $v'_{\rho(i)}$ where each candidate $c\in C_t$ is replaced by $\sigma^{-1}(c)\in C_s$.
  For each $j\in [k-1]$, to construct $E_{j}$, we add all votes from $V'_s$ and $v_{i_j+1}, \dots, v_{i_k}$ and the votes $\tau(v_{i_1}), \dots, \tau(v_{i_j})$.
  Note that using $\sigma$ and $\rho$ as the candidate and voter mapping each two subsequent elections from  $E_s,E_1, E_2, \dots, E_{k-1}, E_t$ are clearly at distance at most one from each other, and, in fact by triangle inequality at distance exactly 1. Form this, intrinsicness follows.
\end{proof}

\subsection{Pairwise and Bordawise}\label{sub:int-pairBord}

\begin{lemma} \label{le:not-intrinsic}
Pairwise and Bordawise metrics are not intrinsic.
\end{lemma}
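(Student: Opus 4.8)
The plan is to prove, for each of the two metrics, the stronger claim that there exist elections $X$ and $Y$ (with the same numbers of candidates and voters) for which no admissible sequence $X = E_0, E_1, \dots, E_k = Y$ exists at all --- where ``admissible'' means that each consecutive pair is at the smallest nonzero distance. For the Bordawise metric the cleanest route is to exhibit an election whose equivalence class is an isolated vertex of the associated graph; for the pairwise metric that approach provably fails, and one instead has to show that every admissible path between two suitable elections is strictly longer than their pairwise distance.

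\paragraph{Bordawise.} I would first note that the smallest nonzero Bordawise distance is $1$: for $m \geq 3$ it is witnessed by two realizable elections whose sorted Borda score vectors differ only in that one entry is larger by one and an adjacent one smaller by one. Then I would take $X$ to be an election all of whose candidates have equal Borda scores --- concretely $m = 3$, $n = 2$, with votes $a \pref b \pref c$ and $c \pref b \pref a$, so that $\sort(\calB_X) = (2,2,2)$; such an $X$ exists whenever the votes split into reversed pairs. The heart of the proof is that no sorted nonnegative integer vector of the same coordinate sum is at $\emd$-distance exactly $1$ from the constant vector $z = (\beta, \dots, \beta)$. Indeed, for every sorted $z'$ of the same sum, each partial sum of its largest entries is at least the proportional value, so $\hat{z}'_j \geq \hat{z}_j$ for all $j$, and hence $\emd(z, z') = \sum_{j=1}^{m-1} W_j$, where $W_j = \hat{z}'_j - \hat{z}_j$ is the $j$-th partial sum of the non-increasing zero-sum integer vector $w = z' - z$; as a nonzero such $w$ has $w_1 \geq 1$ and $w_m \leq -1$, we get $W_1 \geq 1$ and $W_{m-1} \geq 1$, so the distance is at least $2$ whenever $m \geq 3$. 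Hence the class of $X$ is isolated in the Bordawise graph, no admissible sequence can leave it, and taking $Y$ to be any election with the same $m$ and $n$ but a different sorted Borda vector (e.g.\ $n$ identical votes) completes the case.

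\paragraph{Pairwise.} Here I would first record that $d_\pair$ is always even --- the symmetric entries $\calM(a,b)$ and $\calM(b,a)$ of a weighted majority relation always change together --- so the smallest nonzero pairwise distance is $2$ and an admissible step is a single $\ell_1$-unit move on the relation. The obvious candidate for an isolated class, the balanced relation with all off-diagonal entries $n/2$, is in fact not isolated: for $m \geq 2$ and even $n$ one realizes the relation differing from it in a single pair by $\pm 1$ using $(n-2)/2$ reversed vote-pairs plus two votes agreeing on that pair and cancelling on all others; and a short greedy argument even shows that for $m = 3$ every two realizable relations are joined by a monotone lattice path of realizable relations, so the counterexample must use $m \geq 4$. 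The plan is therefore to exhibit $X$ and $Y$ on four (or more) candidates whose weighted majority relations are realizable and at some distance $D$, but placed on opposite sides of a necessary realizability constraint for weighted majority relations --- the four-candidate analogue of the triangle inequalities bounding $\calM(a,b) + \calM(b,c) + \calM(c,a)$ and of the $4$-cycle inequality --- so that every staircase path from $\calM_X$ to $\calM_Y$, and every relabelled variant of it, must pass through a relation realized by no election, forcing any admissible path to make at least one extra step and hence exceed $D$. The main obstacle is exactly this last step: selecting the instance so that all minimum-step routes are blocked and proving that the intermediate relations are non-realizable, which --- since recognizing weighted majority relations is $\np$-complete (\Cref{thm:recPair}) --- cannot rely on a clean characterization and must instead exploit the explicit combinatorial structure of the chosen elections; I would also have to verify that the relabelling minimization built into $d_\pair$ does not open up a shorter route at the level of equivalence classes.
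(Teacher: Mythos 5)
Your Bordawise argument is correct, and it takes a genuinely different (and in fact stronger) route than the paper. You show that the class of an election whose Borda scores are all equal, e.g.\ $(\{a,b,c\},(a\pref b\pref c,\; c\pref b\pref a))$ with sorted score vector $(2,2,2)$, is an \emph{isolated} vertex of the Bordawise graph: for any other sorted nonnegative integer vector $z'$ with the same sum one has $\hat z'_j\ge \hat z_j$ for all $j$, and the first and $(m-1)$-st cumulative gaps are each at least $1$, so $\emd(z,z')\ge 2$ whenever $m\ge 3$; hence no admissible sequence can even leave this election, which rules out $\alpha$-intrinsicness for every $\alpha$. The paper instead fixes $E_s=(a\pref b\pref c,\,a\pref b\pref c)$ and $E_t=(a\pref b\pref c,\,c\pref b\pref a)$ with $d_\BOR(E_s,E_t)=2$ and enumerates all $3\times 2$ elections to check that none is at distance $1$ from both; your argument avoids the enumeration and generalizes to all $m\ge3$. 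One small thing to make explicit: the isolation claim only bites if the smallest nonzero Bordawise distance really is $1$, so you should exhibit the witnessing pair you allude to (e.g.\ $3\times 2$ elections with sorted score vectors $(3,2,1)$ and $(3,3,0)$, at $\emd$-distance $1$).

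The pairwise half, however, is not a proof but a plan whose central step is missing, as you yourself acknowledge: you never name concrete elections $X$ and $Y$, never prove that every minimum-cost route between their weighted majority relations (taken up to the relabelling built into $d_\pair$) passes through a non-realizable relation, and never carry out the realizability analysis your strategy requires — and your auxiliary claims (that the all-$\nicefrac{n}{2}$ relation is not isolated, and that $m=3$ admits no counterexample via a lattice-path argument) are likewise asserted rather than proved. Since the lemma asserts non-intrinsicness of \emph{both} metrics, this is a genuine gap. For comparison, the paper does not attempt a structural blocking argument at all: it takes two explicit $7$-candidate, $2$-voter elections $E_s,E_t$ with $d_\pair(E_s,E_t)=4$ (twice the minimum nonzero pairwise distance $2$; your parity observation matches the paper's) and verifies by exhaustive computer search that no election is at pairwise distance $2$ from both — a weaker condition than the one your plan aims for, but one that can be checked mechanically. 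As written, the pairwise case remains open in your proposal.
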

\begin{proof}
  For Bordawise, consider elections $E_s=(\{a,b,c\}, (a\succ b \succ c, a\succ b \succ c))$ and $E_t=(\{a,b,c\}, (a\succ b \succ c, c\succ b \succ a))$. We have here that $d_\BOR(E_s,E_t)=2$ which is two times the smallest nonzero distance between two elections under Bordawise; however no  $3\times2$ election is at distance $1$ from both $E_s$ and $E_t$: \begin{itemize}
      \item $E_1=(\{a,b,c\}, (a\succ b\succ c, a\succ b\succ c))$: $d_\BOR(E_1,E_t)=2$.
      \item $E_1=(\{a,b,c\}, (a\succ b\succ c, a\succ c\succ b))$: $d_\BOR(E_1,E_t)=3$.
      \item $E_1=(\{a,b,c\}, (a\succ b\succ c, b\succ a\succ c))$: $d_\BOR(E_1,E_t)=3$.
      \item $E_1=(\{a,b,c\}, (a\succ b\succ c, b\succ c\succ a))$: $d_\BOR(E_1,E_s)=2$ 
      \item $E_1=(\{a,b,c\}, (a\succ b\succ c, c\succ b\succ a))$: $d_\BOR(E_1,E_s)=2$.
  \end{itemize}
  However, by intrinsicness, such an election needs to exist, a contradiction to Bordawise being intrinsic.
    
  For pairwise, consider elections $E_s=(\{a,b,c,d,e,f,g\}, (a\succ b \succ c \succ d \succ e\succ f\succ g, e\succ b\succ g\succ d\succ a\succ f\succ c))$ and $E_t=(\{a,b,c,d,e,f,g\}, (a\succ b\succ c\succ g\succ d\succ e\succ f, e\succ b\succ d\succ a\succ f\succ g\succ c))$. 
  It holds that $d_\pair(E_s,E_t)=4$, which is two times the smallest nonzero distance between two elections under pairwise; however we have verified using exhaustive search that no election is at distance $2$ from both $E_s$ and $E_t$ under pairwise (see our code appendix).
\end{proof}

\subsection[L1-Positionwise and EMD-Positionwise]{$\boldsymbol{\ell_1}$-Positionwise and EMD-Positionwise}\label{sub:int-pos}

We start by considering $\ell_1$-positionwise and afterwards examine EMD-positionwise.
However, first let us prove a useful lemma that allows us to consider only paths on elections with the same matching of candidates.
\begin{lemma}\label{lemma:path-matching}
Let $d$ be one of our six metrics.
For any sequence of elections $E_0,E_1,\dots,E_k$, there exists a sequence of elections $E'_0,E'_1,\dots,E'_k$ such that $E_0 = E'_0$, and for every $i \in [k]$ we have that $d(E'_i,E_i)=0$, $d(E'_{i-1},E'_i) = d(E_{i-1},E_i)$, and an optimal matching between candidates in $E'_i$ and $E'_{i-1}$ is the identity.
\end{lemma}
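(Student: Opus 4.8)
The plan is to construct the sequence $E'_0,\dots,E'_k$ inductively, at each step renaming the candidates of $E_i$ by the inverse of an optimal candidate matching so that the identity becomes optimal, while keeping every $E'_i$ isomorphic to $E_i$. Set $E'_0 := E_0$, so $d(E'_0,E_0)=0$ trivially. Suppose $E'_{i-1}$ has been built with $d(E'_{i-1},E_{i-1})=0$ and with the same candidate set as $E_0$. First I would observe, using only the pseudometric axioms, that $d(E'_{i-1},E_i)=d(E_{i-1},E_i)$: both inequalities follow from the triangle inequality together with $d(E'_{i-1},E_{i-1})=0$. Now let $\sigma_i \in \Pi(C_{E'_{i-1}},C_{E_i})$ be a candidate matching witnessing $d(E'_{i-1},E_i)$ (for the swap and discrete metrics, fix also the accompanying voter matching; for the Bordawise metric, which uses no matching, take $\sigma_i$ arbitrary). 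Define $E'_i := \sigma_i^{-1}(E_i)$, i.e., the election obtained from $E_i$ by renaming each candidate $c$ to $\sigma_i^{-1}(c)$; note that $C_{E'_i}=C_{E'_{i-1}}$. Since $E'_i$ and $E_i$ are isomorphic, $d(E'_i,E_i)=0$ under each of our metrics, and applying the triangle-inequality argument once more yields $d(E'_{i-1},E'_i)=d(E'_{i-1},E_i)=d(E_{i-1},E_i)$, which is one of the required conclusions.

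It then remains to check that the identity is an optimal candidate matching between $E'_{i-1}$ and $E'_i$. For the positionwise and pairwise metrics, renaming by $\sigma_i^{-1}$ transports the aggregate representation, e.g.\ $\calP_{E'_i}(c)=\calP_{E_i}(\sigma_i(c))$ for every $c\in C_{E'_{i-1}}$, and similarly $\sigma_i^{-1}(\calM_{E_i})$ for the pairwise metric. Hence plugging the identity matching into the definition of $d(E'_{i-1},E'_i)$ reproduces exactly the value $\sum_{c\in C_{E'_{i-1}}}\EMD\big(\calP_{E'_{i-1}}(c),\calP_{E_i}(\sigma_i(c))\big)=d(E'_{i-1},E_i)$ that $\sigma_i$ achieves; since we already know $d(E'_{i-1},E'_i)=d(E'_{i-1},E_i)$, the identity attains the minimum. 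For the swap and discrete metrics the same bookkeeping applies to the inner double minimum over candidate and voter matchings (the voter matching carried by $\sigma_i$ still works unchanged). For the Bordawise metric the claim is vacuous, as the definition involves no matching at all. Iterating over $i\in[k]$ produces the whole sequence, and by induction all the $E'_i$ share the candidate set of $E_0$, so the identity matchings are well-defined.

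The only genuinely fiddly point is this last one: verifying, definition by definition, that renaming $E_i$ via $\sigma_i^{-1}$ turns the optimal matching $\sigma_i$ into the identity for each of the six metrics. This is a routine unfolding of the metric definitions rather than a real obstacle; all the conceptual content sits in the two applications of the triangle inequality, which let us ``freeze'' consecutive elements of the path onto a common candidate set without perturbing any of the distances.
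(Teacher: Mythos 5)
Your proposal is correct and follows essentially the same route as the paper's proof: an inductive construction that renames the candidates of $E_i$ via (the inverse of) an optimal matching with the already-fixed $E'_{i-1}$, uses that isomorphic elections are at distance zero, and invokes the triangle inequality to transfer the distances, so that the identity matching attains the optimum. Your write-up merely spells out in more detail the triangle-inequality steps and the per-metric check that the identity is optimal, which the paper leaves implicit.
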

\begin{proof}
Let us show the statement by the induction on $k$.
For $k=0$ the statement is trivial.
Thus, assume that $k>0$ and that the statement holds for $k-1$.

Fix an arbitrary sequence of elections $E_0,E_1,\dots,E_k$.
By the induction assumption, there exists a sequence of elections $E'_0,E'_1,\dots,E'_{k-1}$ such that $E_0 = E'_0$ and for every $i \in [k-1]$ we have that $d(E'_i,E_i)=0$, $d(E'_{i-1},E'_i) = d(E_{i-1},E_i)$, and an optimal matching between candidates in $E'_i$ and $E'_{i-1}$ is the identity.
Since $d(E'_{k-1},E_{k-1})=0$, we have that $d(E_{k-1},E_k) = d(E'_{k-1},E_k)$.
Let $\sigma$ be an optimal matching of candidates between $E'_{k-1}$ and $E_k$.
Then, let us construct election $E'_k$ from $E_k=(C_k,V_k)$ by exchanging each vote $v\in V_k$ by vote $\sigma(v)$, i.e., $E'_k = (C_k,\sigma(V_k))$.
In this way, $d(E'_k,E_k)=0$,
\[
    d(E'_{k-1},E'_k)=d(E'_{k-1},E_k)=d(E_{k-1},E_k),
\]
and the optimal matching between $E'_k$ and $E'_{k-1}$ is identity.
This concludes the proof.
\end{proof}

\subsubsection[L1-Positionwise]{$\boldsymbol{\ell_1}$-Positionwise}

\begin{lemma}
\label{prop:l1-pos:intrinsic-degree}
The $\ell_1$-positionwise metric is 2-intrinsic.
\end{lemma}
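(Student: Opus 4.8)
The plan is to reduce the statement to a combinatorial fact about position matrices. By \Cref{lemma:path-matching} we may build the path while keeping the optimal candidate matching between consecutive elections equal to the identity; likewise, relabel the candidates of $Y$ so that the optimal matching realizing $\dellpos(X,Y)$ is the identity. Writing $D := \ell_1(\mathcal{P}_X,\mathcal{P}_Y)$ for the entrywise $\ell_1$-distance of the two position matrices (so $D = \dellpos(X,Y)$), it then suffices to produce a sequence of $m\times m$ position matrices $\mathcal{P}_X = M_0, M_1, \dots, M_k = \mathcal{P}_Y$ with $\ell_1(M_{\ell-1},M_\ell)=4$ and $k \le D/2$: each $M_\ell$ is the position matrix of a (polynomial-time computable) election by \citet{boe-bre-fal-nie-szu:c:compass}, since it is a nonnegative integer matrix with equal row and column sums. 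I will also use that the smallest nonzero $\ell_1$-positionwise distance is exactly $4$: the difference $T$ of two distinct position matrices (same $m$, $n$) has all row sums and all column sums equal to $0$, so $\sum_{i,j}|T_{i,j}| = 2\sum_{T_{i,j}>0}T_{i,j}$ cannot be $2$ (a lone $+1$ and a lone $-1$ would have to share both a row and a column), hence it is at least $4$, and this survives any permutation of columns.

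The heart of the proof is the following claim: if $M, M'$ are position matrices with $\ell_1(M,M')>0$, then there is a position matrix $M''$ obtained from $M$ by a single ``$4$-cycle'' with $\ell_1(M,M'')=4$ and $\ell_1(M'',M') \le \ell_1(M,M') - 2$. To see this, set $S := M - M'$, which has zero row and column sums. Pick $(j,a)$ with $S_{j,a}>0$; by the zero sum of column $a$ pick $i\ne j$ with $S_{i,a}<0$; by the zero sum of row $i$ pick $b\ne a$ with $S_{i,b}>0$. Let $M''$ equal $M$ except that $M''_{i,a}=M_{i,a}+1$, $M''_{j,a}=M_{j,a}-1$, $M''_{i,b}=M_{i,b}-1$, $M''_{j,b}=M_{j,b}+1$. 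Row and column sums are preserved; entries stay nonnegative because $M_{j,a}\ge M'_{j,a}+1\ge 1$ and $M_{i,b}\ge M'_{i,b}+1\ge 1$; four distinct entries change by $1$, so $\ell_1(M,M'')=4$. The changes at $(i,a),(j,a),(i,b)$ each move the entry strictly toward $M'$ (since $S_{i,a}<0$, $S_{j,a}>0$, $S_{i,b}>0$), lowering $\sum_{i,j}|S_{i,j}|$ by $3$, while the change at $(j,b)$ raises it by at most $1$; hence $\ell_1(M'',M')\le \ell_1(M,M')-2$.

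Iterating the claim from $M_0=\mathcal{P}_X$ toward $\mathcal{P}_Y$ produces $M_0, M_1, \dots, M_k = \mathcal{P}_Y$ with $\ell_1(M_{\ell-1},M_\ell)=4$ and $\ell_1(M_\ell, \mathcal{P}_Y)$ dropping by at least $2$ at each step; starting from $D$ and staying even, it reaches $0$ after $k \le D/2$ steps. Realize the $M_\ell$ as elections $E_\ell$ (with $E_0=X$ and $E_k$ isomorphic to $Y$), and, if some consecutive pair happens to be isomorphic (a column permutation of each other), drop the later matrix: by the triangle inequality through the distance-zero element together with the fact that a nonzero $\ell_1$-positionwise distance is at least $4$, each surviving consecutive pair is at distance exactly $4$---the smallest nonzero distance---and the number of steps only decreases. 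The resulting path has total length $4k' \le 4k \le 2D = 2\,\dellpos(X,Y)$, so $\dellpos$ is $2$-intrinsic.

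The main obstacle is the bookkeeping around the ``smallest nonzero distance'' condition rather than the combinatorics: one has to check that $4$ is indeed that smallest value, that the $4$-cycle moves never leave the space of nonnegative position matrices, and that collapsing consecutive isomorphic matrices cannot create a step of length $8$. The combinatorial core---repairing at least three of the four touched entries toward the target while spoiling at most one---is short, and it is precisely the failure to repair all four at once (witnessed by a cyclic $3\times 3$ difference pattern) that forces the constant $2$ rather than $1$.
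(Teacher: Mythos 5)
Your proof is correct and follows essentially the same route as the paper's: fix the optimal candidate matching, work directly on position matrices (realizable as elections by the result of \citet{boe-bre-fal-nie-szu:c:compass}), and repeatedly apply a $4$-cycle move that repairs three of the four touched entries, so each step of cost $4$ decreases the distance to the target by at least $2$, giving the factor $2$. The only cosmetic differences are that the paper phrases the iteration as an induction on $\dellpos(E_s,E_t)$ and rules out a zero-distance step outright (since the new election is strictly closer to the target), whereas you collapse such steps via the triangle inequality and separately verify that the minimum nonzero distance is $4$ — both of which are fine.
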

\begin{proof}

We need to show that for every two elections $E_s$ and $E_t$ there exist
elections $E_s = E_0,E_1,\dots,E_{k-1},E_k = E_t$ such that
$\sum_{i=1}^k \dellpos(E_{i-1},E_i) \leq 2 \cdot \dellpos(E_s,E_t)$, and for each
$i \in [k]$, $\dellpos(E_{i-1},E_i) = 4$, which is the smallest nonzero distance under the $\ell_1$-positionwise metric.
We prove this by induction on $\dellpos(E_s,E_t)$.
Fix arbitrary elections $E_s$ and $E_t$ with $n$ voters and $m$ candidates.
If $\dellpos(E_s,E_t)=0$ or $\dellpos(E_s,E_t)=4$, the statement trivially follows.
Thus, assume that $\dellpos(E_s,E_t)>4$ and that the statement holds for all elections $\hat{E}_s$ and $\hat{E}_t$ with $\dellpos(\hat{E}_s,\hat{E}_t)<\dellpos(E_s,E_t)$.
Let $X$ with columns $x_1,\dots,x_m$ be the position matrix of $E_s$ and $Y$ with columns $y_1,\dots,y_m$ be the position matrix of $E_t$.
Without loss of generality, let us assume that $\dellpos(E_s,E_t)= \sum_{i \in [m]} \ell_1(x_i,y_i)$;
otherwise we could reorder the columns of one of the matrices.

Intuitively, in what follows, we examine matrix $X$ and consider two columns $c,c'$ from $X$, two rows $r,r'$ from $X$, and matrix $X'$ that results from $X$ by, in column $c$, subtracting 1 from row $r$ and adding 1 to row $r'$ and, in column $c'$, adding 1 to row $r$ and subtracting 1 from row $r'$.
As we will show later there always exist such $c,c',r,r'$ that $X'$ is a position matrix of some election $E'_s$, and it holds that $\dellpos(E'_s,E_t) \le \dellpos(E_s,E_t) - 2$ and $\dellpos(E_s,E'_s) = 4$.
Using this, our statement will then follow by induction.
As an illustration of our approach, see the following example:

\begin{example}\label{ex:lemma:l1-pos-2-intrinsic}
  Let the first two of the following matrices be example position matrices $X$ and $Y$ of some elections $E_s$ and $E_t$, respectively (we have $n=5$ voters and $m=4$ candidates).
  Then, the third matrix is a constructed position matrix $X'$ created following our above described approach of some election $E'_s$.
  \begin{align*}
         \small
         \kbordermatrix{ &\! c &\!   &\! c' &\!   \\
      \!\!\!\! & \!        1 &\! 2 &\! 1  &\! 1  \\
    r \!\!\!\! & \!        \textbf{3} &\! 1 &\! \textbf{0}  &\! 1  \\
      \!\!\!\! & \!        1 &\! 2 &\! 2  &\! 0  \\
    r'\!\!\!\! & \!        \textbf{0} &\! 0 &\! \textbf{2}  &\! 3  \\
    }\!\!\!\!&&
         \small
         \kbordermatrix{ &\! c &\!   &\! c' &\!   \\
      \!\!\!\! &\!           1 &\! 2 &\! 2  &\! 0  \\
    r \!\!\!\! &\!           \textbf{2} &\! 2 &\! \textbf{0}  &\! 1  \\
      \!\!\!\! &\!           1 &\! 1 &\! 2  &\! 1  \\
    r'\!\!\!\! &\!           \textbf{1} &\! 0 &\! \textbf{1}  &\! 3  \\
    }\!\!\!\!&&
         \small
         \kbordermatrix{ &\! c &\!   &\! c' &\!   \\
      \!\!\!\! &\!           1 &\! 2 &\! 1  &\! 1  \\
    r \!\!\!\! &\!           \textbf{2} &\! 1 &\! \textbf{1}  &\! 1  \\
      \!\!\!\! &\!           1 &\! 2 &\! 2  &\! 0  \\
    r'\!\!\!\! &\!           \textbf{1} &\! 0 &\! \textbf{1}  &\! 3  \\
    }&&\\
    X \quad\quad && Y \quad\quad && X' \quad\quad &&
  \end{align*}
    
  Observe that $\dellpos(E_s,E_t) = 8$.
  Hence, indeed, we get that $\dellpos(E'_s,E_t) = 6 \le \dellpos(E_s,E_t) - 2$.
  Moreover, we have $\dellpos(E_s,E'_s) = 4$.
  This concludes the example.
\end{example}

For each $i \in [m]$ we write $x_{i,1},\dots,x_{i,m}$ to the denote the entries of $x_i$;
we use analogous notation for $y_i$ and the columns of matrices introduced later on.
Since $\dellpos(E_s,E_t)>4$, there must exist a column, $c \in [m]$, such that $\ell_1(x_c,y_c) > 0$.
Moreover, the sums of entries in both $x_c$ and $y_c$ are equal (to $n$), hence there is an $r \in [m]$ such that
$x_{c,r} > y_{c,r}$ and also $r' \in [m]$ such that $x_{c, r'} < y_{c, r'}$.
Furthermore, the sums of entries in row $r'$ in both matrices are equal (to $n$), thus the fact that $x_{c, r'} < y_{c, r'}$ implies that there exists column $c' \in [m]$ such that $x_{c',r'} > y_{c',r'}$.
Building upon this, let us construct matrix $X'$ with columns $x'_1,\dots,x'_m$ defined as follows:
\[
    x'_{i,j} =
    \begin{cases}
        x_{i,j} - 1, & \mbox{if } (i,j) \in \{(c,r),(c',r')\},\\
        x_{i,j} + 1, & \mbox{if } (i,j) \in \{(c,r'),(c',r)\},\\
        x_{i,j}, & \mbox{otherwise.}
    \end{cases}
\]
Observe that for $(i,j) \in \{(c,r),(c',r')\}$ we have that $x'_{i,j} = x_{i,j} - 1 \ge y_{i,j} \ge 0$.
Thus, $x'_{i,j} \ge 0$ for every $i,j \in [m]$.
Moreover, the sums of entries of each row and column of matrix $X'$ are still equal to $n$.
Thus, as proven by \citet{boe-bre-fal-nie-szu:c:compass}, $X'$ is a position matrix of some election--- which we denote by $E'_s$.

Observe that $X$ and $X'$ differ only on columns $c$ and $c'$.
Also, for column $c$ we have $\ell_1(x'_c,y_c) = \ell_1(x_c,y_c) - 2$.
Moreover, for column $c'$ we get $\ell_1(x'_{c'},y_{c'}) = \ell_1(x_{c'},y_{c'}) - 2$, if $x_{c',r} < y_{c',r}$, and
$\ell_1(x'_{c'},y_{c'}) = \ell_1(x_{c'},y_{c'})$, otherwise.
Since other columns of $X$ and $X'$ are identical, we get that
\[
    \sum_{i \in [m]} \ell_1(x'_i,y_i) =
    \begin{cases}
        \sum_{i \in [m]} \ell_1(x_i,y_i) - 4, & \mbox{if } x_{c',r} < y_{c',r},\\
        \sum_{i \in [m]} \ell_1(x_i,y_i) - 2, & \mbox{otherwise.}
    \end{cases}
\]
Hence, $\dellpos(E'_s,E_t) \le \dellpos(E_s,E_t) - 2$.
By the induction assumption, this means that there exist elections $E'_s = E_0,E_1,\dots,E_{k-1},E_k = E_t$ such that 
\begin{equation}
\label{eq:lemma:l1-pos-2-intrinsic}
    \sum_{i=1}^k \dellpos(E_{i-1},E_i) \le 
    2 \cdot \dellpos(E'_s,E_t) \le
    2 \cdot \dellpos(E'_s,E_t) - 4
\end{equation}
and for each $i \in [k]$, we have that $\dellpos(E_{i-1},E_i) = 4$.

On the other hand, observe that
\[
    \sum_{i \in [m]} \ell_1(x_i,x'_i) = \ell_1(x_c,x'_c) + \ell_1(x_{c'},x'_{c'}) = 4.
\]
Since $E'_s$ is closer to $E_t$ than $E_s$, it is not possible that $\dellpos(E_s,E'_s)=0$.
Thus, $\dellpos(E_s,E'_s)=4$.

Building upon this, let us denote $E_{-1} = E_s$.
Then, by Eq.~\eqref{eq:lemma:l1-pos-2-intrinsic} elections $E_s = E_{-1},E_0,E_1,\dots,E_{k-1},E_k = E_t$ are such that
\begin{align*}
    \sum_{i=0}^k \dellpos(E_{i-1},E_i) &\le 
    2 \cdot \dellpos(E'_s,E_t) - 4 + \dellpos(E'_s,E_s) \\ &=
    2 \cdot \dellpos(E_s,E_t).
\end{align*}
Moreover, for each $i \in \{0,1,\dots,k\}$, we have that $\dellpos(E_{i-1},E_i) = 4$, which concludes the proof.
\end{proof}

\begin{lemma}\label{le:notint-l1}
The $\ell_1$-positionwise metric is not $\alpha$-intrinsic for any $\alpha < 2$.
\end{lemma}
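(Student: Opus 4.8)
The plan is to exhibit, for every $m$, a pair of $m$-candidate, $m$-voter elections $E_s^{(m)},E_t^{(m)}$ whose $\ell_1$-positionwise distance is only $2m$, yet every intrinsic path between them uses at least $m-1$ steps (each of cost $4$, the smallest nonzero distance established in \Cref{prop:l1-pos:intrinsic-degree}); the ratio of cheapest-path cost to distance then tends to $2$, which is enough since ``not $\alpha$-intrinsic'' only requires one witness per $\alpha<2$.

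\textbf{Reduction to position matrices and a rank bound.}
First I would reduce to a statement about position matrices. By \Cref{lemma:path-matching}, in any candidate intrinsic path $E_s^{(m)}=E_0,E_1,\dots,E_k=E_t^{(m)}$ we may assume the optimal matching between consecutive elections is the identity and that the endpoints are only changed up to a distance-$0$ relabelling; writing $Z_i$ for the position matrix of the $i$-th (relabelled) election, this gives $\ell_1(Z_{i-1},Z_i)=4$ for all $i$, with $Z_0$ the position matrix of $E_s^{(m)}$ and $Z_k$ a column-permutation of that of $E_t^{(m)}$. A short case analysis shows that an integer $m\times m$ matrix with all row sums and all column sums equal to $0$ and $\ell_1$-norm exactly $4$ must be $\pm(e_r-e_{r'})(e_c-e_{c'})^{\!\top}$ for some rows $r\neq r'$ and columns $c\neq c'$ — in particular it has rank $1$. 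Hence $Z_k-Z_0=\sum_{i=1}^k (Z_i-Z_{i-1})$ is a sum of $k$ matrices of rank $\le 1$, so $\operatorname{rank}(Z_k-Z_0)\le k$.

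\textbf{The family.}
Next I would take $E_t^{(m)}$ to be any $m$-voter election whose position matrix is the all-ones matrix $J$ (e.g.\ the $m$ cyclic shifts of a fixed order), and $E_s^{(m)}$ to be an election whose position matrix is $X:=J+(P-I)$, where $P$ is the $m$-cycle permutation matrix; $X$ has nonnegative integer entries ($0$ on the diagonal, $2$ on the cyclic super-diagonal, $1$ elsewhere) with all row and column sums equal to $m$, so it is realizable by the position-matrix recognition result of \citet{boe-bre-fal-nie-szu:c:compass}. Since $J$ is invariant under column permutations, $J$ is the only position matrix at $\ell_1$-positionwise distance $0$ from it, so $\dellpos(E_s^{(m)},E_t^{(m)})=\ell_1(X,J)=\ell_1(P-I)=2m$, and in any intrinsic path the final $Z_k$ must equal $J$. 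Therefore $m-1=\operatorname{rank}(P-I)=\operatorname{rank}(J-X)=\operatorname{rank}(Z_k-Z_0)\le k$ (the eigenvalues of $P-I$ are $\zeta-1$ as $\zeta$ ranges over the $m$-th roots of unity, only one of which is $0$). Consequently every intrinsic path between $E_s^{(m)}$ and $E_t^{(m)}$ has cost at least $4(m-1)=\tfrac{2(m-1)}{m}\cdot\dellpos(E_s^{(m)},E_t^{(m)})$. Given any $\alpha<2$, picking $m$ with $2(m-1)/m>\alpha$ yields a pair witnessing that the $\ell_1$-positionwise metric is not $\alpha$-intrinsic.

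\textbf{Main obstacle.}
The one genuinely nontrivial point is the rank bound: establishing that a single cost-$4$ step perturbs the position matrix by a rank-$\le 1$ matrix (via the classification of $\ell_1$-norm-$4$ integer matrices with vanishing row and column sums), and then choosing a target for which the required difference from the source has rank as large as the constraints allow ($m-1$) while its $\ell_1$-norm stays as small as $2m$. The remaining ingredients — realizability of $X$, invariance of $J$ under candidate matchings (so the path must truly end at $J$), and the eigenvalue computation for $P-I$ — are routine, and the fact that $4$ is the smallest nonzero $\ell_1$-positionwise distance is already recorded in \Cref{prop:l1-pos:intrinsic-degree}.
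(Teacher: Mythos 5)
Your proposal is correct and follows essentially the same route as the paper: reduce to position matrices via the path-matching lemma, observe that each minimal-cost step changes the matrix by a rank-one "atomic" perturbation, and use subadditivity of rank against the difference matrix $P-I$ (rank $m-1$, $\ell_1$-norm $2m$) to force at least $m-1$ steps, giving the ratio $2-\nicefrac{2}{m}\to 2$. The only difference is cosmetic — you anchor the pair at the all-ones matrix $J$ and $J+(P-I)$ with $m$ voters (which makes the optimal candidate matching trivially irrelevant), whereas the paper perturbs a near-identity matrix with $n>m$ voters.
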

\begin{proof}
Let us consider an identity election, $\ID$ over candidates $c_1,\dots, c_m$, with $n > m > 2$ votes: $c_1 \succ c_2 \succ \dots \succ c_m$.
Also, let us consider election $E$ that is obtained from $\ID$ by exchanging one of the votes by vote $c_2 \succ c_3 \succ \dots \succ c_m \succ c_1$.
Observe that position matrix of $E$ is:
\[
    \begin{bmatrix}
      n\! -\! 1 &  1  &  0  & \cdots &  0  &  0  \\
       0  & n\! -\! 1&  1  & \cdots &  0  &  0  \\
       0  &  0  & n\! -\! 1 & \cdots &  0  &  0  \\
      \vdots&\vdots&\vdots&\ddots&\vdots&\vdots \\
       0  &  0  &  0  & \cdots & n\! -\! 1 &  1  \\
       1  &  0  &  0  & \cdots &  0  & n\! -\! 1 \\
    \end{bmatrix}.
\]
Thus, it is optimal to match $c_i$ in $\ID$ to $c_i$ in $E$ for every $i \in [m]$ and we get that
$\dellpos(E,\ID)=2m$.

Let $A$ be an $m \times m$ matrix.
Let $a_1,\dots,a_m$ be the columns of matrix $A$ and for each $i \in [m]$, let $a_{i,1},\dots,a_{i,m}$ be the entries of $a_i$.
We will say that matrix $A$ is \emph{atomic}, if there exist $c,c',r,r' \in [m]$ such that
\[
    a_{i,j} =
    \begin{cases}
        1, & \mbox{if } (i,j) \in \{(c,r),(c',r')\},\\
        - 1, & \mbox{if } (i,j) \in \{(c,r'),(c',r)\},\\
        0, & \mbox{otherwise.}
    \end{cases}
\]
Observe that for every two elections $F,F'$ with position matrices $X, X'$ having columns $x_1,\dots,x_m$ and $x'_1,\dots,x'_m$ such that
$\dellpos(F,F')=\sum_{i \in [m]}\ell_1(x_i,x'_i)=4$,
it holds that there exists an atomic matrix $A$ such that $X = X' + A$.
Hence, by Lemma~\ref{lemma:path-matching}, finding an intrinsic path of length $k$ between elections $\ID$ and $E$ is equivalent to finding atomic matrices $A_1,\dots,A_k$ such that their sum is equal to
\[
    \begin{bmatrix}
      -1  &  1  &  0  & \cdots &  0  &  0  \\
       0  & -1  &  1  & \cdots &  0  &  0  \\
       0  &  0  & -1  & \cdots &  0  &  0  \\
      \vdots&\vdots&\vdots&\ddots&\vdots&\vdots \\
       0  &  0  &  0  & \cdots & -1  &  1  \\
       1  &  0  &  0  & \cdots &  0  & -1  \\
    \end{bmatrix}.
\]
Observe that the rank of this matrix is equal to $m-1$.
Also, the rank of each atomic matrix is equal to $1$.
Since the rank of a matrix is subadditive, we get that $k \ge m-1$.
From the definition of the intrinsicness degree we obtain that
\(
    \alpha \ge k \cdot d_{min} / \dellpos(E,\ID).
\)
Thus,
\[
    \alpha \ge \frac{4(m-1)}{2m} = 2 - \frac{2}{m}.
\]
Since $m$ can be arbitrarily large, we get that $\alpha \ge 2$.
\end{proof}

\subsubsection{EMD-Positionwise}

\begin{lemma}\label{prop:emd-pos:intrinsic-degree}
EMD-positionwise metric is 2-intrinsic.
\end{lemma}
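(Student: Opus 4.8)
The plan is to follow the skeleton of the proof of \Cref{prop:l1-pos:intrinsic-degree}, but to carry out all the work on the \emph{cumulative} (prefix-sum) position matrices, since the natural unit operation for EMD lives there. For elections $E_s,E_t$ let $X,Y$ be their position matrices and $\hat X,\hat Y$ the matrices of column-wise prefix sums; by \Cref{lemma:path-matching} we may assume the identity is an optimal candidate matching, so that $\demdpos(E_s,E_t)=\sum_{c\in[m]}\ell_1(\hat x_c,\hat y_c)=\sum_{c,r\in[m]}|\hat X_{c,r}-\hat Y_{c,r}|=:D$. The smallest nonzero EMD-positionwise distance between two elections is $2$, attained exactly when one shifts a single unit of mass between two adjacent positions of one column and makes the compensating shift in one other column; in cumulative coordinates this \emph{atomic move} amounts to choosing a row $r\in[m-1]$ and two columns $c\neq c'$, decrementing $\hat X_{c,r}$ by $1$ and incrementing $\hat X_{c',r}$ by $1$. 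Such a move produces a valid cumulative matrix --- equivalently, by \citet{boe-bre-fal-nie-szu:c:compass}, a realizable position matrix --- precisely when the two touched columns stay non-decreasing, i.e.\ when $x_{c,r}\ge 1$ and $x_{c',r+1}\ge 1$. An atomic move changes $D$ by $0$ or $\pm 2$, and it is \emph{good} (lowers $D$ by $2$) exactly when, in addition, $\hat X_{c,r}>\hat Y_{c,r}$ and $\hat X_{c',r}<\hat Y_{c',r}$.

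I would then induct on $D$, which is always even (for each row $r$ the discrepancies $\hat X_{c,r}-\hat Y_{c,r}$ sum to $0$ over $c$, since row $r$ of each cumulative matrix sums to $rn$). The cases $D\in\{0,2\}$ are immediate. For the step, the core claim is: from any $E_s$ with $\demdpos(E_s,E_t)=D>2$ one can reach, using at most two valid atomic moves, an election $E''$ with $\demdpos(E'',E_t)=D-2$. Granting this, unfolding the induction yields a path from $E_s$ to $E_t$ consisting of at most $D$ atomic moves, each of EMD-positionwise length $2$, hence of total length at most $2D=2\,\demdpos(E_s,E_t)$; the short initial segment $E_s\to E''$ is absorbed into this slack exactly as in \Cref{prop:l1-pos:intrinsic-degree}, so EMD-positionwise is $2$-intrinsic.

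To establish the core claim I would first attempt a single good move: pick any row $r\le m-1$ on which $\hat X$ and $\hat Y$ differ; equality of row sums then gives a column $c$ with $\hat X_{c,r}>\hat Y_{c,r}$ and a column $c'$ with $\hat X_{c',r}<\hat Y_{c',r}$. If these can be chosen so that also $x_{c,r}\ge 1$ and $x_{c',r+1}\ge 1$, a single good move finishes. Otherwise we are in a ``trapped-mass'' configuration: every excess column at row $r$ is flat across rows $r-1,r$, so its excess is also present at row $r-1$, or every deficit column at row $r$ is flat across rows $r,r+1$, so its deficit is also present at row $r+1$ (and, by a boundary argument, the first alternative cannot occur at $r=1$ and the second cannot occur at $r=m-1$). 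A short analysis of this alternative should produce a valid \emph{neutral} atomic move at an adjacent row $r\pm1$ --- one that slides the relevant unit of cumulative mass one step closer to its target without changing $D$ --- after which a good move at that adjacent row exists and is valid; the bookkeeping is arranged so that at most one neutral move is spent before each good move, which is what yields the ``at most two atomic moves per unit of progress'' used above.

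The main obstacle is precisely this last point. Unlike the $\ell_1$ case, where the analogous $2\times2$ atomic move is automatically valid once the two decremented entries already exceed their targets, here a unit of cumulative mass can be ``pinned'' against a flat stretch of a column and cannot be cancelled against its target in one step, because EMD moves are forced to be between adjacent positions. One must therefore first relocate it with a neutral move, and the delicate part is to prove that (i) whenever no good move exists, such an enabling neutral move always exists and is valid, and (ii) only one neutral move is ever required before progress resumes --- so that the blow-up factor is exactly $2$ and not larger. Combined with the matching lower bound (analogous to \Cref{le:notint-l1}), this pins the intrinsicness degree of EMD-positionwise at $2$.
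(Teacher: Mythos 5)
Your reduction to cumulative matrices, the description of the atomic move (decrement $\hat{X}_{c,r}$, increment $\hat{X}_{c',r}$, valid iff $x_{c,r}\ge 1$ and $x_{c',r+1}\ge 1$), and the observation that each atomic move changes the distance by $0$ or $\pm 2$ are all correct. But the core claim you induct on --- that from any $E_s$ with $\demdpos(E_s,E_t)=D>2$ one can reach distance $D-2$ using at most two atomic moves, i.e.\ at most one ``neutral'' move before each ``good'' move --- is false, and it is exactly the point you flag as delicate. Take $E_t=\ID$ on $m\ge 4$ candidates with $n>m$ votes $c_1\succ\dots\succ c_m$, and let $E_s$ be obtained by replacing one vote with $c_m\succ c_2\succ\dots\succ c_{m-1}\succ c_1$ (the election from \Cref{le:notint-EMD}). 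Then $\demdpos(E_s,E_t)=2(m-1)>2$; column $c_m$ carries a one-unit cumulative excess at rows $1,\dots,m-1$ with its mass sitting at position $1$, and column $c_1$ carries a one-unit deficit at rows $1,\dots,m-1$ with the relevant mass sitting at position $m$. A good move at row $r$ would need $x_{c_m,r}\ge 1$ (only $r=1$) and $x_{c_1,r+1}\ge 1$ (only $r=m-1$), so none exists. Moreover, the only valid neutral moves are the two obvious ones (shift the excess unit from column $c_m$ into column $c_2$ at row $1$, or shift the deficit of column $c_1$ at row $m-1$ onto column $c_{m-1}$), and after either of them the displaced unit is still at least two rows away from the row at which the opposite-sign discrepancy is actionable, so again no good move exists when $m\ge 4$. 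Since one atomic move changes the distance by at most $2$, no sequence of two atomic moves reaches distance $D-2$, so your induction step cannot be carried out; the pinned unit may have to be walked through up to $m-2$ neutral moves before the first good move becomes available.

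The statement is nevertheless true, but only under an \emph{amortized} accounting, which is how the paper proceeds: it performs a single long-range $2\times 2$ exchange on the position matrix between carefully chosen rows $r'<r$ (chosen via minimality/maximality conditions on where $X$ exceeds or falls short of $Y$), which decreases $\demdpos$ by $2(r-r')$ in one macro-step, and then proves a separate decomposition claim (\Cref{claim:emd:intrinsicness:1}) showing this macro-step can be realized by at most $2(r-r')$ unit moves of cost $2$ each. The factor $2$ thus emerges as ``cost at most $4(r-r')$ for progress $2(r-r')$'' per macro-step, which is compatible with long stretches of locally non-improving moves --- something your per-unit-of-progress invariant cannot accommodate. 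To repair your argument you would need to replace the claim ``one neutral move suffices'' by a potential-function or macro-step argument of this kind; as written, the proof has a genuine gap at its central step. (Your appeal to \Cref{lemma:path-matching} and the concluding lower-bound remark mirroring \Cref{le:notint-l1} are fine, but the latter is not needed for $2$-intrinsicness itself.)
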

\begin{proof}
We need to show that for every two elections $E_s$ and $E_t$ there exist
elections $E_s = E_0,E_1,\dots,E_{k-1},E_k = E_t$ such that
$\sum_{i=1}^k \demdpos(E_{i-1},E_i) \leq 2 \cdot \demdpos(E_s,E_t)$, and for each
$i \in [k]$, $\demdpos(E_{i-1},E_i) = 2$, which is the smallest nonzero distance of two elections under the EMD-positionwise metric.
We prove this by induction on $\demdpos(E_s,E_t)$.
For this fix two arbitrary elections $E_s$ and $E_t$ with $n$ voters and $m$ candidates.
If $\demdpos(E_s,E_t)=0$ or $\demdpos(E_s,E_t)=2$, the statement trivially follows.
Thus, assume that $\demdpos(E_s,E_t)>2$ and that the statement holds for all elections $\hat{E}_s$ and $\hat{E}_t$ with $\demdpos(\hat{E}_s,\hat{E}_t) < \demdpos(E_s,E_t)$.
Let $X$ with columns $x_1,\dots,x_m$ be the position matrix of $E_s$ and $Y$ with columns $y_1,\dots,y_m$ the position matrix of $E_t$.
Without loss of generality, let us assume that $\demdpos(E_s,E_t)= \sum_{i \in [m]} \emd(x_i,y_i)$;
otherwise we could reorder columns of one of the matrices.
For each column $i \in [m]$ we write $x_{i,1},\dots,x_{i,m}$ to the denote the entries of $x_i$;
we use analogous notation for $y_i$ and the columns of matrices introduced later on.
Recall also that for any vector $v$, by $\hat{v}$ we denote the vector of its prefix sums.

Intuitively, in what follows, as in proof of Lemma~\ref{prop:l1-pos:intrinsic-degree},
we examine matrix $X$ and consider two columns $c,c'$ from $X$, two rows $r,r'$ from $X$, and matrix $X'$ that results from $X$ by, in column $c$, subtracting 1 from row $r$ and adding 1 to row $r'$ and, in column $c'$, adding 1 to row $r$ and subtracting 1 from row $r'$.
As we will show, it is always possible to choose such rows and columns that $X'$ is still a position matrix of some elections, $E'_s$, and $\demdpos(E'_s,E_t) \le \demdpos(E_s,E_t) - 2 \cdot | r - r'|$.
Furthermore, we will prove that between $E_s$ and $E'_s$ we can find a path of at most $2 \cdot |r - r'|$ elections such that each consecutive two are at EMD-positionwise distance $2$.
Using this, by induction, we will obtain the statement.
See the following example:

\begin{example}\label{ex:lemma:emd-pos-2-intrinsic:1}
  Let the first two of the following matrices be example position matrices $X$ and $Y$ of some elections $E_s$ and $E_t$, respectively (we have $n=6$ voters and $m=4$ candidates).
  Then, the third matrix is a constructed position matrix $X'$ created following our above described approach of some election $E'_s$.
  \begin{align*}
         \small
         \kbordermatrix{ &\! &\! c &\!    &\! c' \\
    r'\!\!\!\! & \!        1 &\! \textbf{1} &\! 2  &\! \textbf{2}  \\
      \!\!\!\! & \!        3 &\! 0 &\! 3  &\! 0  \\
     r\!\!\!\! & \!        0 &\! \textbf{4} &\! 0  &\! \textbf{2}  \\
      \!\!\!\! & \!        2 &\! 1 &\! 1  &\! 2  \\
    }\!\!\!\!&&
         \small
         \kbordermatrix{ &\! &\! c &\!    &\! c' \\
    r'\!\!\!\! & \!        1 &\! \textbf{2} &\! 2  &\! \textbf{1}  \\
      \!\!\!\! & \!        3 &\! 0 &\! 3  &\! 0  \\
     r\!\!\!\! & \!        1 &\! \textbf{3} &\! 0  &\! \textbf{2}  \\
      \!\!\!\! & \!        1 &\! 1 &\! 1  &\! 3  \\
    }\!\!\!\!&&
         \small
         \kbordermatrix{ &\! &\! c &\!    &\! c' \\
    r'\!\!\!\! & \!        1 &\! \textbf{2} &\! 2  &\! \textbf{1}  \\
      \!\!\!\! & \!        3 &\! 0 &\! 3  &\! 0  \\
     r\!\!\!\! & \!        0 &\! \textbf{3} &\! 0  &\! \textbf{3}  \\
      \!\!\!\! & \!        2 &\! 1 &\! 1  &\! 2  \\
    }\!\!\!\!&&\\
    X \quad\quad && Y \quad\quad && X' \quad\quad &&
  \end{align*}
  Observe that we have $\demdpos(E_s,E_t) = 1 + 2 + 0 + 3 = 6$, $\demdpos(E'_s,E_t) = 1 + 0 + 0 + 1 =2$, and $r - r' = 2$.
  So, indeed, $\demdpos(E'_s,E_t) = \demdpos(E_s,E_t) - 2(r-r')$.
  This concludes the example.
\end{example}

We begin by finding the correct columns $c,c'$ and rows $r,r'$.
Our goal is to find them such that $r>r'$, $x_{c,r} > y_{c,r}$ and $x_{c',r'} > y_{c',r'}$, but at the same time $x_{c,i} = y_{c,i}$ and $x_{c',i} = y_{c',i}$ for every $i \in \{r'+1,r'+2,\dots,r-1\}$.
In this way, when we subtract 1 from $x_{c,r}$ and $x_{c',r'}$ and add 1 to $x_{c,r'}$ and $x_{c',r}$, then both $\emd(x_c,y_c)$ and $\emd(x_{c'},y_{c'})$ decrease by exactly $r-r'$
(we will prove it in more detail later on).
Thus, the EMD-positionwise distance to election $E_t$ decreases by at least $2(r-r')$.

We want to take $c,r \in [m]$ such that $x_{c,r} > y_{c,r}$, $\hat{x}_{c,r-1} < \hat{y}_{c,r-1}$, and $r$ is minimal.
But first, let us prove that there always exists at least one pair $c,r \in [m]$ such that $x_{c,r} > y_{c,r}$ and $\hat{x}_{c,j} < \hat{y}_{c,j}$.
To this end, observe that since $\demdpos(E_s,E_t)>2$, there exists a column, $c \in [m]$, such that $\emd(x_c,y_c)>0$.
Since the sums of entries of column $c$ in both matrices $X$ and $Y$ are equal (to $n$), there exists a row, $p \in [m]$, such that $x_{c,p} < y_{c,p}$.
Let us fix $c,p \in [m]$ such that $x_{c,p} < y_{c,p}$ and $p$ is minimal, i.e., for every $i,j \in [m]$ such that $x_{i,j} < y_{i,j}$ we have $j \ge p$.
Then, in particular, we have $x_{c,j} \ge y_{c,j}$, for each $j \in [p-1]$.
Assume that there exists $j \in [p-1]$ such that $x_{c,j} > y_{c,j}$.
Since the sums of entries in row $j$ in matrices $X$ and $Y$ are equal (to $n$), there exists column $i \in [m]$ such that $x_{i,j} < y_{i,j}$.
However, since $j < p$, this contradicts the assumption that $p$ is minimal.
Therefore, we get that $x_{c,j} = y_{c,j}$, for each $j \in [p-1]$.
Thus, $\hat{x}_{c,p} < \hat{y}_{c,p}$, as $x_{c,p} < y_{c,p}$.
Since the sums of entries in column $c$ in both matrices are equal (to $n$), there exists such $r$ that $x_{c,r} > y_{c,r}$.
Let us take the minimal such $r$.
Then, for every $j \in \{p+1,p+2,\dots,r-1\}$ we have that $x_{c,j} \le y_{c,j}$.
Thus, $\hat{x}_{c,r-1} < \hat{y}_{c,r-1}$.
This means that we have found $c,r$ such that $x_{c,r} > y_{c,r}$ and $\hat{x}_{c,r-1} < \hat{y}_{c,'-1}$.

Therefore, let us take $c, r \in [m]$
such that $x_{c,r} > y_{c,r}$, $\hat{x}_{c,r-1} < \hat{y}_{c,r-1}$, and there is no $i, j \in [m]$ such that $j < r$, $x_{i,j} > y_{i,j}$, $\hat{x}_{i,j-1} < \hat{y}_{i,j-1}$, i.e., $r$ is minimal with respect to this property.
Observe from the fact that $\hat{x}_{c,r-1} < \hat{y}_{c,r-1}$ implies that there exists row $q \in [r-1]$ such that $x_{c,q} < y_{c,q}$.
Let us take such $q \in [r-1]$ that is minimal with such property.
Then, for every $j \in \{q+1,q+2,\dots,r-1\}$ we have $x_{c,q} = y_{c,q}$.

Now, let us take $c',r' \in [m]$ such that $x_{c',r'} > y_{c',r'}$, $q \le r' < r$, $c' \neq c$, and $r'$ is maximal with respect to this property.
To see that there exists at least one pair $i,j \in [m]$ such that $x_{i,j} > y_{i,j}$, $q \le j < r$, and $i \neq c$,
observe that the sums of entries in row $q$ in matrices $X$ and $Y$ are equal.
Hence, since $x_{c,r} < y_{c,q}$, there exists column $i \in [m]$ such that $x_{i,q} > y_{i,q}$.
Taking $j=q$, we get that indeed $q \le j < r$ and $i \neq c$.
Hence, there exists $c',r' \in [m]$ such that $x_{c',r'} > y_{c',r'}$, $q \le r' < r$, $c' \neq c$, and $r'$ is maximal.
Observe that this means that in every row $j \in \{r'+1,r'+2,\dots,r-1\}$ we have $x_{c',j} = y_{c',j}$;
otherwise if $x_{c',j} > y_{c',j}$, then $r'$ is not maximal as we assumed ($r' < j < r$ and $c' \neq c$), and if $x_{c',j} < x_{c',j}$, then there exists column $i \in [m]$ such that $x_{i,j} > x_{i,j}$, thus $r'$ is also not maximal ($r' < j < r$ and $i \neq c$ since $q \le r' <j$).
Furthermore, since we took minimal $r$, the fact that $r' < r$ and $x_{c',r'} > y_{c',r'}$ implies that either $\hat{x}_{c',r'-1} \ge \hat{y}_{c',r'-1}$ or $r' = 1$.
In both cases, we get that $\hat{x}_{c',r'} > \hat{y}_{c',r'}$.
Thus, the fact that $x_{c',j} = y_{c',j}$, for each $j \in \{r'+1,r'+2,\dots,r-1\}$, implies that
\begin{equation}\label{eq:emd:intrinsicness:hat:1}
    \hat{x}_{c',j} > \hat{y}_{c',j}, \quad \mbox{for every } j \in \{r',r'+1,\dots,r-1\}.
\end{equation}

Similarly, since $q \le r'$ we get that $x_{c,j}=y_{c,j}$, for each $j \in \{r'+1,r'+2,\dots,r-1\}$.
Combining it with the fact that $\hat{x}_{c,r-1} < \hat{y}_{c,r-1}$ we get
\begin{equation}\label{eq:emd:intrinsicness:hat:2}
    \hat{x}_{c,j} < \hat{y}_{c,j}, \quad \mbox{for every } j \in \{r',r'+1,\dots,r-1\}.
\end{equation}

Building upon this, let us now construct matrix $X'$ with columns $x'_1,\dots,x'_m$ defined as follows:
\[
    x'_{i,j} =
    \begin{cases}
        x_{i,j} - 1, & \mbox{if } (i,j) \in \{(c,r), (c',r')\},\\
        x_{i,j} + 1, & \mbox{if } (i,j) \in \{(c,r'), (c',r)\},\\
        x_{i,j}, & \mbox{otherwise}. 
    \end{cases}
\]
Since $x_{c,r}>y_{c,r} \ge 0$ and $x_{c',r'}>y_{c,r} \ge 0$ we get that $x'_{i,j} \ge 0$ for every $i,j \in [m]$.
Moreover, the sums of entries in each row and column of $X'$ are still equal to $n$.
Thus, $X'$ is a position matrix of some election $E'_s$.

Let us now estimate $\demdpos(E'_s,E_t)$.
Observe that we have
\[
    \hat{x}'_{i,j} \! = \! 
    \begin{cases}
        \hat{x}_{i,j} \! - \! 1, &\!\!\!\!
            \mbox{if } (i,j) \!\in\!\{\! (c'\! ,\! r'),\! (c'\! ,\! r' \! + \! 1),\dots,\! (c,\! r \! -\! 1)\!\},\\
        \hat{x}_{i,j} \! + \! 1, &\!\!\!\!
            \mbox{if } (i,j) \!\in\!\{\! (c ,\! r'),\! (c,\! r' \! +\! 1),\dots,\! (c,\! r \! - \! 1)\!\},\\
        \hat{x}_{i,j}, & \mbox{otherwise}.
    \end{cases}
\]
Thus, we get that
$\emd(x'_{c'},y'_{c'}) = \emd(x_{c'},y_{c'}) - (r -r')$ from Eq.~\eqref{eq:emd:intrinsicness:hat:1}
and
$\emd(x'_c,y'_c) = \emd(x_c,y_c) - (r -r')$ from Eq.~\eqref{eq:emd:intrinsicness:hat:2}.
Since EMD of other columns do not change, we get
\[
    \demdpos(E'_s,E_t) \le \demdpos(E_s,E_t) - 2(r-r').
\]
Therefore, by the induction assumption, we get that there exists a sequence of elections
$E'_s = E_0,E_1,\dots,E_{k-1},E_k = E_t$
such that
\begin{multline}\label{eq:emd:intrinsicness:induction}
   \sum_{i=1}^k \demdpos(E_{i-1},E_i) \leq 2 \cdot \demdpos(E'_s,E_t) \le \\
   \le 2 \cdot \demdpos(E_s,E_t) - 4(r-r')
\end{multline}
and for each
$i \in [k]$, we have $\demdpos(E_{i-1},E_i) = 2$.

Hence, it remains to show that there exist elections
$E_s = E_{-k'},E_{-k'+1},\dots,E_{-1},E_{0}=E'_s$ such that
$k' \le 2(r-r')$ and
for every
$i \in \{-k',\dots,-2,-1\}$, we have $\demdpos(E_{i},E_{i+1}) = 2$.
With this and Eq.~\eqref{eq:emd:intrinsicness:induction}, the induction hypothesis will follow.
To this end, we prove the following claim:

\begin{claim}
\label{claim:emd:intrinsicness:1}
For every pair of elections $E_s$ and $E'_s$ such that $\dellpos(E_s,E'_s) = 4$ it holds that there exist elections
$E_s = E_{-k},E_{-k+1},\dots,E_1,E_0 = E'_s$ such that $k \le \demdpos(E_s,E'_s)$ and  for each
$i \in \{-k',\dots,-2,-1\}$, we have $\demdpos(E_{i},E_{i+1}) = 2$.
\end{claim}
\begin{proof}
We will prove this claim by induction on $\demdpos(E_s,E'_s)$.
If $\demdpos(E_s,E'_s)=2$, the statement follows trivially.
Hence, let us assume that $\demdpos(E_s,E'_s)>2$.

Take two arbitrary elections $E_s$ and $E'_s$ and let $X$ and $X'$ be their position matrices respectively.
Let $x_1,\dots,x_m$ be columns of $X$ and $x'_1,\dots,x'_m$ columns of $X'$.
Without loss of generality, let us assume that $\demdpos(E_s,E'_s) = \sum_{i \in [m]} \emd(x_i,x'_i)$;
otherwise we could rearrange the column order.
First, observe that $\dellpos(E_s,E'_s)=4$ if and only if there exist columns $c,c' \in [m]$ and rows $r,r' \in [m]$ such that
\[
    x'_{i,j} = \begin{cases}
        x_{i,j} - 1, & \mbox{if } (i,j) \in \{(c,r),(c',r')\},\\
        x_{i,j} + 1, & \mbox{if } (i,j) \in \{(c',r),(c,r')\},\\
        x_{i,j}, & \mbox{otherwise.}
    \end{cases}
\]
Without loss of generality assume that $r > r'$.
Then, observe that $\demdpos(E_s,E'_s) = 2(r - r')$.
Thus, since we assumed $\demdpos(E_s,E'_s) > 2$, we get that $r' < r-1$.
In what follows we construct elections $F$ and $F'$ such that $\demdpos(E_s,F) \le 2$, $\demdpos(F',E'_s) \le 2$, $\demdpos(F,F') = 2(r-r'-1)$, and $\dellpos(F,F') = 4$.
Then, the statement will follow from the inductive assumption.

Observe that there exists a column $c''$ such that $x_{c'',r-1} \ge 1$. 
If $c'' \neq c$,
let $Z$ be a matrix with columns $z_1,\dots,z_m$ defined as follows:
\[
    z_{i,j} = \begin{cases}
        x_{i,j} - 1, & \mbox{if } (i,j) \in \{(c,r),(c'',r-1)\},\\
        x_{i,j} + 1, & \mbox{if } (i,j) \in \{(c,r-1),(c'',r)\},\\
        x_{i,j}, & \mbox{otherwise.}
    \end{cases}
\]
If $c'' = c$, let us simply denote $Z = X$.
If $c'' \neq c$, then $z_{c,r} = x_{c,r} -1 = x'_{c,r} \ge 0$.
Also, $z_{c'',r-1} = x_{c'',r-1} - 1 \ge 0$.
Hence, $z_{i,j} \ge 0$, for every $i,j \in [m]$.
Moreover, sums of entries in every row and column of $Z$ are still equal $n$.
Thus, $Z$ is a position matrix of some elections, let us denote them by $F$.
If $c'' = c$, then let us simply denote $F = E_s$.

If $c'' \neq c$, then $\emd(z_c,x_c) = \emd(z_{c''},x_{c''})=1$.
Other columns are the same in both $Z$ and $X$, hence
\begin{equation*}
    \label{eq:lemma:emd-intrinsic:claim:1}
    \demdpos(E_s,F) \le 2.
\end{equation*}

Now, let us follow a similar construction for election $F'$.
If $c'' \neq c'$,
let $Z'$ be a matrix with columns $z'_1,\dots,z'_m$ defined as follows:
\[
    z'_{i,j} = \begin{cases}
        x'_{i,j} - 1, & \mbox{if } (i,j) \in \{(c',r),(c'',r-1)\},\\
        x'_{i,j} + 1, & \mbox{if } (i,j) \in \{(c',r-1),(c'',r)\},\\
        x'_{i,j}, & \mbox{otherwise.}
    \end{cases}
\]
If $c'' = c'$, let us simply denote $Z' = X'$.
If $c'' \neq c'$, then $z'_{c',r} = x_{c',r} -1 = x_{c,r} \ge 0$.
Also, $z'_{c'',r-1} = x'_{c'',r-1} - 1 = x_{c'',r-1} - 1 \ge 0$.
Hence, $z'_{i,j} \ge 0$, for every $i,j \in [m]$.
Moreover, sums of entries in every row and column of $Z'$ are still equal $n$.
Thus, $Z'$ is a position matrix of some elections, let us denote them by $F'$.
If $c'' = c'$, then let us simply denote $F' = E'_s$.

If $c'' \neq c'$, then $\emd(z'_{c'},x'_{c'}) = \emd(z'_{c''},x'_{c''})=1$.
Other columns are the same in both $Z'$ and $X'$, hence
\begin{equation*}
    \label{eq:lemma:emd-intrinsic:claim:2}
    \demdpos(E'_s,F') \le 2.
\end{equation*}

Now, observe that matrices $Z$ and $Z'$ can possibly differ only on entries in columns $c,c'$, and $c''$ and rows $r,r-1,$ and $r'$.
Specifically,
\begin{align*}
    z'_{c,r} &=    \!
    \begin{cases}
        x'_{c,r} = x_{c,r} - 1 = z_{c,r}, & \mbox{if } c \neq c'', \\
        x'_{c,r} + 1 = x_{c,r} = z_{c,r}, & \mbox{if } c = c''.
    \end{cases}\\
    z'_{c,r-1} &=    \!
    \begin{cases}
        x'_{c,r-1} \! = \! x_{c,r-1} \! = \! z_{c,r-1} \! - \! 1,  & \!\!\mbox{if } c \neq c'', \\
        x'_{c,r-1} \! - \! 1 \! = \! x_{c,r-1} \! - \! 1 \! = \! z_{c,r-1} \! - \! 1, &  \!\! \mbox{if } c = c''.
    \end{cases}\\
    z'_{c,r'} &= x'_{c,r'} = x_{c,r'} + 1 = z_{c,r'} + 1. \\
    z'_{c',r} &=  \!
    \begin{cases}
        x'_{c',r} -1 = x_{c',r} = z_{c',r}, & \mbox{if } c' \neq c'', \\
        x'_{c',r} = x_{c',r} + 1 = z_{c',r}, & \mbox{if } c' = c''.
    \end{cases}\\
    z'_{c',r-1} &= \! 
    \begin{cases}
        x'_{c',r-1} \! + \! 1 \! = \! x_{c',r-1} \! + \! 1 \! = \! z'_{c',r-1} \! + \! 1,  & \!\!\!\mbox{if } c' \!\neq c'', \\
        x'_{c',r-1} \! = \! x_{c',r-1} \! = \! z_{c',r-1} \! + \! 1, &  \!\!\!\mbox{if } c' \! = c''.
    \end{cases}\\
    z'_{c',r'} &= x'_{c',r'} = x_{c',r'} - 1 = z_{c',r'} -1.
\end{align*}
For column $c''$ it suffices to consider the case where $c'' \neq c$ and $c'' \neq c'$
since other cases has been considered above.
In such case,
\begin{align*}
    z'_{c'',r} &= x'_{c'',r} + 1 = x_{c'',r} + 1 = z_{c'',r}.\\
    z'_{c'',r-1} &= x'_{c'',r-1} - 1 = x_{c'',r-1} -1 = z_{c'',r-1}.\\
    z'_{c'',r'} &= x'_{c'',r'} = x_{c'',r'} = z_{c'',r'}.
\end{align*}
All in all, we get that
\[
    z'_{i,j} = \begin{cases}
        z_{i,j} - 1, & \mbox{if } (i,j) \in \{(c,r-1),(c',r')\},\\
        z_{i,j} + 1, & \mbox{if } (i,j) \in \{(c',r-1),(c,r')\},\\
        z_{i,j}, & \mbox{otherwise.}
    \end{cases}
\]
Hence, we get that $\ell_1(z_c,z'_c) = \ell_1(z_{c'},z'_{c'}) = 2$.
Since other columns of $Z$ and $Z'$ are identical, we get
$\dellpos(F,F') \le 4$.
If $\dellpos(F,F') = 0$, then also $\demdpos(F,F') =0$.
Then, since $\demdpos(E_s,E'_s) >2$, we get that $\demdpos(E_s,E'_s)=4$ and $\demdpos(E_s,F) = \demdpos(F,E'_s) =2$.
Hence, we can denote $E_{-2} = E_s$, $E_{-1} = F$, and $E_0 = E'_s$ and the induction hypothesis follows.

Thus, assume that $\dellpos(F,F') = 4$.
Then, observe that  $\emd(z_c,z'_c) = \emd(z_{c'},z'_{c'}) = r-r'-1$.
Since other columns of $Z$ and $Z'$ are identical, we get
$\demdpos(F,F') \le 2(r-r'-1)$.
Recall that $\demdpos(E_s,E'_s) = 2(r - r')$.
Therefore, from the inductive assumption there exist elections
$F = E_{-k},E_{-k+1},\dots,E_1,E_0 = F'$ such that $k \le \demdpos(E_s,E'_s) -2$ and for each $i \in \{-k,\dots,2,1\}$, we have $\demdpos(E_i,E_{i+1}) = 2$.
Since $\demdpos(E_s,F) \le 2$ and $\demdpos(E'_s,F') \le 2$, the induction hypothesis follows.
\end{proof}

Finally, observe that
$\dellpos(x_c,x'_c) = \dellpos(x_{c'},x'_{c'}) = 2$.
Other columns of $X$ and $X'$ are identical, thus
$\dellpos(E_s,E'_s) \le 4$.
If $\dellpos(E_s,E'_s) = 0$, then also $\demdpos(E_s,E'_s)=0$, from which the thesis follows by Eq.~\eqref{eq:emd:intrinsicness:induction}.
Hence, let us assume that $\dellpos(E_s,E'_s) = 4$.

Then, observe that $\emd(x_c,x'_c) = \emd(x_{c'},x'_{c'}) = r-r'$.
Other columns of $X$ and $X'$ are identical, thus
$\demdpos(E_s,E'_s) \le 2(r-r')$.
Hence, from Claim~\ref{claim:emd:intrinsicness:1}, we get that there exist elections
$E_s = E_{-k'},E_{-k'+1},\dots,E_{-1},E_{0}=E'_s$ such that
$k' \le 2(r-r')$ and
for every
$i \in \{-k',\dots,-2,-1\}$, we have $\demdpos(E_{i},E_{i+1}) = 2$.
Combined with Eq.~\eqref{eq:emd:intrinsicness:induction} this concludes the proof.
\end{proof}

\begin{lemma}\label{le:notint-EMD}
EMD-positionwise metric is not $\alpha$-intrinsic for any $\alpha < 2$.
\end{lemma}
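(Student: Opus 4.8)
The plan is to mirror the proof of \Cref{le:notint-l1} but with a \emph{quadratic}-distance instance and a different invariant: the rank argument used for $\ell_1$-positionwise only yields $\alpha\geq 1$ here, since a minimal EMD-positionwise step changes the position matrix by a rank-$1$ matrix while one always has $\demdpos\geq 2\cdot(\#\text{nonzero rows of }\widehat{\calP_A-\calP_B})\geq 2\cdot\mathrm{rank}$, so rank cannot beat the trivial bound $k\geq\demdpos/2$. Instead, fix the identity election $\ID$ over $c_1\pref\cdots\pref c_m$ with $n$ voters, and let $E$ be $\ID$ with a single vote replaced by $c_m\pref\cdots\pref c_1$. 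First I would show $\demdpos(\ID,E)=\tfrac{m^2}{2}$ (taking $m$ even): under the identity candidate matching $\emd(\calP_E(c_j),\calP_\ID(c_j))=|m+1-2j|$, since one voter moved $c_j$ from position $j$ to $m+1-j$, and $\sum_{j=1}^m|m+1-2j|=m^2/2$; the identity matching is optimal because any non-identity matching is forced to transport the bulk-$(n-1)$ mass of some candidate's position vector, incurring cost $\Theta(n)$.

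Next I would record the path upper bound: reversing the special vote by the $\binom m2$ adjacent transpositions of the ordering gives a legal sequence, because each transposition changes the position matrix only in two adjacent rows and two columns, with $+1$ at $(r,c),(r{+}1,c')$ and $-1$ at $(r{+}1,c),(r,c')$ — an EMD-positionwise distance of exactly $2$, the smallest nonzero one. Hence $k_{\min}\leq\binom m2$. The heart of the proof is the matching lower bound $k_{\min}\geq\binom m2$, for which I would use the pairwise metric as a potential and prove the key claim: \emph{whenever $\demdpos(A,B)=2$, one has $d_\pair(A,B)\leq 2$}. By \Cref{lemma:path-matching} we may assume the optimal matching for $\demdpos(A,B)=2$ is the identity, so $\sum_c\emd(\calP_A(c),\calP_B(c))=2$; a short case analysis using that every row and column of a position matrix sums to $n$ shows the only possibility is $\calP_B=\calP_A+M$ where $M$ is the "adjacent $2\times2$ block" matrix described above (the one-column-with-$\emd=2$ case is impossible, as it would break a row sum). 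One then argues that $B$ is obtained from $A$ by swapping the adjacent candidates $c,c'$ in a single vote, which alters only the $\{c,c'\}$ entry of the weighted majority relation, by exactly $1$ in each direction, giving $d_\pair(A,B)=2$. Granting this, along any minimal path $\ID=E_0,\dots,E_k=E$ we get $|d_\pair(\ID,E_i)-d_\pair(\ID,E_{i-1})|\leq d_\pair(E_{i-1},E_i)\leq 2$, so $2k\geq d_\pair(\ID,E)=m(m-1)$ (the last equality because the identity matching is optimal for the pairwise distance of these two "staircase" majority relations: a non-identity matching mismatches the $0/n$ staircase and costs $\Theta(n)$). Thus $k_{\min}=\binom m2$, and $\alpha\geq \dfrac{2\,k_{\min}}{\demdpos(\ID,E)}=\dfrac{2\binom m2}{m^2/2}=2-\dfrac2m$; letting $m\to\infty$ yields $\alpha\geq 2$.

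The hard part will be the key claim $\demdpos(A,B)=2\Rightarrow d_\pair(A,B)\leq 2$. The subtlety is that $d_\pair$ depends on the actual elections, not merely their position matrices, so one must rule out "exotic" realizations of $\calP_A+M$ lying far from $A$ in vote space — i.e., one must show that the adjacent-$2\times2$-block constraint on the position matrices genuinely forces $A$ and $B$ to differ by one adjacent transposition, hence that only a single pair's majority margin can move. Everything else (the two closed-form distance computations and the telescoping of the potential) is routine once this structural claim is established, and exactly as in \Cref{le:notint-l1} one concludes that EMD-positionwise is not $\alpha$-intrinsic for any $\alpha<2$.
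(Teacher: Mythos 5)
Your lower-bound strategy has a genuine gap: the key claim ``$\demdpos(A,B)=2$ implies $d_\pair(A,B)\le 2$'' is false, and this is precisely the ``exotic realization'' issue you flag but hope to rule out. The position matrix does not determine the weighted majority relation, not even up to isomorphism: in the proof of Proposition~\ref{pr:diamond} the paper exhibits two $3\times 6$ elections, $X$ (all six distinct votes) and $Y$ (two copies each of $a\succ b\succ c$, $b\succ c\succ a$, $c\succ a\succ b$), with $\calP_X=\calP_Y$, hence $\demdpos(X,Y)=0$, while every candidate matching gives $d_\pair(X,Y)=6$ (all entries of $\calM_X$ equal $3$, all off-diagonal entries of $\calM_Y$ equal $2$ or $4$). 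Now let $B$ be $Y$ with a single swap of adjacent candidates in one vote. Then $\demdpos(X,B)\le \demdpos(X,Y)+\demdpos(Y,B)\le 2$, and it is nonzero ($\calP_B$ contains entries $1$ and $3$, so no column matching to the all-$2$ matrix $\calP_X$ has cost $0$), so $\demdpos(X,B)=2$; yet $d_\pair(X,B)\ge d_\pair(X,Y)-d_\pair(Y,B)\ge 4$. Duplicating every vote $t$ times makes $d_\pair(X,B)$ arbitrarily large while keeping $\demdpos(X,B)=2$. Consequently a single minimal EMD-positionwise step can move your potential $d_\pair(\ID,\cdot)$ by far more than $2$, the telescoping bound $2k\ge d_\pair(\ID,E)$ collapses, and the structural statement that an adjacent $2\times 2$ block difference of position matrices forces $B$ to arise from $A$ by one adjacent transposition in a single vote is simply not true: intermediate elections along the path may re-realize their position matrices by completely different preference profiles.

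The failure is not a repairable detail but the heart of the matter: since by \Cref{lemma:path-matching} only the position matrices of the intermediate elections are constrained, any potential used for the lower bound must be a function of the position matrix alone, and $d_\pair$ is not. This is exactly what the paper's proof does differently: it takes $E$ to be $\ID$ with one vote changed to $c_m\succ c_2\succ\cdots\succ c_{m-1}\succ c_1$ (so $\demdpos(\ID,E)=2m-2$) and tracks two matrix invariants---the position of the last nonzero entry in the first column and of the first nonzero entry in the last column---showing via Claims~\ref{claim:emd:intrinsicness:2:1} and~\ref{claim:emd:intrinsicness:2:2} that each moves by at most one per minimal step and that the two ``clocks'' can advance simultaneously at most once, which yields $k\ge 2m-3$ and hence $\alpha\ge 2-\nicefrac{1}{m-1}$. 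Your other ingredients are fine as far as they go---$\demdpos(\ID,E)=m^2/2$ for the fully reversed vote (for $n$ large enough that the identity matching is optimal), the $\binom{m}{2}$-step upper bound, and the matrix-level analysis showing a distance-$2$ step is an adjacent $2\times 2$ block---but without a matrix-level potential the argument does not close, so the proposal does not establish the lemma.
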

\begin{proof}
Let us consider an identity election, $\ID$ over candidates $c_1,\dots, c_m$, with $n > m > 2$ votes: $c_1 \succ c_2 \succ \dots \succ c_m$.
Also, let us consider election $E$ that is obtained from $\ID$ by exchanging one of the votes by vote $c_m \succ c_2 \succ c_3 \succ \dots \succ c_{m-1} \succ c_1$.
Observe that position matrix of $E$ is:
\[
    \begin{bmatrix}
      n\! -\! 1 &  0  &  0  & \cdots &  0  &  1  \\
       0  & n   &  0  & \cdots &  0  &  0  \\
       0  &  0  &  n  & \cdots &  0  &  0  \\
      \vdots&\vdots&\vdots&\ddots&\vdots&\vdots \\
       0  &  0  &  0  & \cdots &  n  &  0  \\
       1  &  0  &  0  & \cdots &  0  & n\! -\! 1 \\
    \end{bmatrix}.
\]
Thus, it is optimal to match $c_i$ in $\ID$ to $c_i$ in $E$ for every $i \in [m]$ and we get that
$\demdpos(E,\ID)=2m-2$.

Let $k$ be the minimum number such that there are elections $E = E_0 ,E_1,\dots,E_{k-1}, E_k= \ID$ such that
for each
$i \in [k]$, $\dellpos(E_{i-1},E_i) = 2$.
Let us denote the position matrix of $E$ by $X^0$ (with columns $x^0_1,\dots,x^0_m$) and the position matrix of $\ID$ by $X^k$ (with columns $x^k_1,\dots,x^k_m$).
By Lemma~\ref{lemma:path-matching}, the existence of elections $E_0,E_1,\dots,E_{k-1},E_k$ is equivalent to the existence of position matrices $X^0,X^1,\dots,X^{k-1},X^k$ such that for each $s \in [k]$ the columns of matrix $X^s$ are $x^s_1,\dots,x^s_m$ and
\[
    \sum_{i \in [m]} \emd(x^{s-1}_i,x^{s}_i) = 2.
\]
In what follows, we will prove that $k \ge 2m - 3$.

To this end, for every matrix $X$, let us denote the position of the last nonzero entry in the first column of $X$ by $F(X)$ and by $G(X)$ the position of the first nonzero entry in the last column.
In particular, we have that $F(X^0)=m$ and $F(X^k)=1$ while $G(X^0)=1$ and $G(X^k)=m$.
We continue by proving the following claim:
\begin{claim}\label{claim:emd:intrinsicness:2:1}
For every $s \in [k]$, it holds that $|F(X^{s-1}) - F(X^s)| \le 1$ and $|G(X^{s-1}) - G(X^s)| \le 1$.
\end{claim}
\begin{proof}
Assume otherwise, i.e., there exists $s \in [k]$ such that $|F(X^{s-1}) - F(X^s)| > 1$ (or $|G(X^{s-1}) - G(X^s)| > 1$).
Observe that in such a case the EMD distance between the first (or last) columns of $X^{s-1}$ and $X^s$ is at least two, i.e., $\emd(x^{s-1}_1, x^{s}_1) \ge 2$ (or $\emd(x^{s-1}_m, x^{s}_m) \ge 2$).
On the other hand, two position matrices cannot differ on only one column.
In other words, there exists column $i \neq 1$ (or $i \neq m$) such that $\emd(x^{s-1}_i, x^{s}_i) > 0$.
Thus, $\sum_{i \in [m]} \emd(x^{s-1}_i,x^{s}_i) > 2$, which is a contradiction.
\end{proof}

Note that by Claim~\ref{claim:emd:intrinsicness:2:1} for each $j\in \{2,3,\dots,m\}$ there needs to be a matrix $X^s$ for some $s\in \{0,1,\dots,k-1\}$ with $F(X^s)=j$.
Thus, there exists a function, $f$, that
for every $j \in \{2,3,\dots,m\}$ returns the maximal index $s$ of a matrix, $X^s$, in which the position of the last nonzero entry in the first column is $j$.
Formally, let $f : \{2,3,\dots,m\} \rightarrow \{0,1,\dots,k-1\}$ such that $F(X^{f(j)})=j$ and for every $s \in \{0,1,\dots,k-1\}$ s.t. $F(X^s)=j$ it holds that $f(j) \ge s$.
Since $F(X^k) = 1$, applying \Cref{claim:emd:intrinsicness:2:1}, we get that:
\begin{equation}
\label{eq:emd:intrinsicness:2:1}
    F(X^s) < j, \quad \mbox{for every } s > f(j). \tag{\textasteriskcentered}
\end{equation}
Similarly, Claim~\ref{claim:emd:intrinsicness:2:1} implies that for each $j\in [m-1]$ there is a matrix $X^s$ for some $s\in \{0,1,\dots,k-1\}$ with $G(X^s)=j$.
Hence, there exists a function, $g$, that
for every $j \in [m-1]$ returns the maximal index $s$ of a matrix, $X^s$, in which the position of the first nonzero entry in the last column is $j$.
Formally, let $g : [m-1] \rightarrow \{0,1,\dots,k-1\}$ such that $G(X^{g(j)})=j$ and for every $s \in \{0,1,\dots,k-1\}$ s.t. $G(X^s)=j$ it holds that $g(j) \ge s$.
Since $G(X^k) = m$, applying \Cref{claim:emd:intrinsicness:2:1}, we get that:
\begin{equation}
\label{eq:emd:intrinsicness:2:2}
    G(X^s) > j, \quad \mbox{for every } s > g(j). \tag{\textasteriskcentered \textasteriskcentered}
\end{equation}

Next, by $I_f$ and $I_g$ let us denote the images of the functions $f$ and $g$, respectively.
As $f$ and $g$ are injective, $|I_f|=|I_g|=m-1$.
On the other hand, as we show in the next claim the intersection of $I_f$ and $I_g$ is limited.
\begin{claim}\label{claim:emd:intrinsicness:2:2}
If $s \in I_f \cap I_g$, then $F(X^s) - G(X^s) = 1$.
\end{claim}
\begin{proof}
Let us take an arbitrary $s \in I_f \cap I_g$ and denote $j = F(X^s)$ and $j' = G(X^s)$.
Since $s \in I_f$, this means that $f(j)=s$.
Thus, from~\eqref{eq:emd:intrinsicness:2:1} we get that $F(X^{s+1})<j$, which implies that $x^{s+1}_{1,j} = 0$
(the last nonzero entry in the first column must be at position smaller than $j$).
On the other hand, since $F(X^s)=j$, we have $x^{s}_{1,j} > 0$.
Hence, $\emd(x^s_1,x^{s+1}_1) \ge 1$.
Similarly, the fact that $s \in I_g$ implies $g(j')=s$.
Thus, from~\eqref{eq:emd:intrinsicness:2:2} we get $G(X^{s+1})>j'$, which implies that $x^{s+1}_{m,j'} = 0$
(the first nonzero entry in the first column must be at position greater than $j'$).
However, since $G(X^s)=j'$, we have $x^{s}_{m,j'} > 0$.
Hence, $\emd(x^s_m,x^{s+1}_m) \ge 1$.

Now, observe that the fact that $\sum_{i \in [m]} \emd(x^{s}_i,x^{s+1}_i) = 2$ implies that there exist columns $c,c'$ and row $r$ such that
\[
    x^{s+1}_{i,j} =
    \begin{cases}
        x^s_{i,j} - 1, & \mbox{if } (i,j) \in \{(c,r),(c',r-1)\}, \\
        x^s_{i,j} + 1, & \mbox{if } (i,j) \in \{(c',r),(c,r-1)\}, \\
        x^s_{i,j}, & \mbox{otherwise.}
    \end{cases}
\]
Hence, it must hold that $c=1$, $c'=m$, $r = j$, and $r-1 = j'$.
Thus,
\[
    F(X^s) - G(X^s) = j - j' = r - (r-1) = 1.
\]
\end{proof}

From~\eqref{eq:emd:intrinsicness:2:1} we get that the sequence $f(2),f(3),\dots,f(m)$ is strictly decreasing.
On the other hand, \eqref{eq:emd:intrinsicness:2:2} implies that $g(1),g(2),\dots,g(m-1)$ is strictly increasing.
Therefore, there can exists at most one $j$ and one $j'$ such that
$f(j) = g(j'-1)$.
Combining this with Claim~\ref{claim:emd:intrinsicness:2:2}, we get that
$|I_f \cap I_g| \le 1$.
Thus, by the inclusion–exclusion principle, we get
\[
    | I_f \cup I_g| \ge 2m - 3.
\]
Clearly, $k \ge | I_f \cup I_g|$, hence
\[
    k \ge 2m-3.
\]

Finally, from the definition of the level of intrinsicness, we get that
\(
    \alpha \ge k \cdot d_{min} / \demdpos(E,\ID).
\)
Thus,
\[
    \alpha \ge 2 \cdot \frac{2m - 3}{2m - 2} = 2 - \frac{1}{m - 1}.
\]
Since $m$ can be arbitrarily large, we get that $\alpha \ge 2$.
\end{proof}

\medskip

\Cref{thm:intrinsic} directly follows from \Cref{le:intrin,prop:l1-pos:intrinsic-degree,prop:emd-pos:intrinsic-degree,le:not-intrinsic,le:notint-l1,le:notint-EMD}.

\clearpage

\end{document}